\newtheorem{theorem}{\bf Theorem}
\newtheorem{lemma}[theorem]{\bf Lemma}
\newtheorem{prop}[theorem]{\bf Proposition}
\newenvironment{proof}{\noindent{\em Proof:}}{\quad \hfill$\Box$\vspace{2ex}}
\newtheorem{definition}[theorem]{\bf Definition}
\begin{document}

\title{Blind One-Bit Compressive Sampling}
\author{Lixin Shen and Bruce W. Suter

\thanks{This research is supported in
part by an award from National Research Council via the Air Force Office of Scientific Research and by the US National Science Foundation under grant DMS-1115523.}

\thanks{L. Shen (Corresponding author) is with Department of Mathematics, Syracuse University, Syracuse, NY 13244. (Email: {lshen03@syr.edu}). }
\thanks{Bruce W. Suter is with Air Force Research Laboratory. AFRL/RITB, Rome, NY 13441-4505 (Email: {bruce.suter@rl.af.mil}).}
\thanks{Copyright (c) 2012 IEEE. Personal use of this material is permitted. However, permission to use this material for any other purposes must be obtained from the IEEE by sending a request to pubs-permissions@ieee.org.}
}

\maketitle
\begin{abstract}
The problem of 1-bit compressive sampling is addressed in this paper. We introduce an optimization model for reconstruction of sparse signals from 1-bit measurements. The model targets a solution that has the least $\ell_0$-norm among all signals satisfying consistency constraints stemming from the 1-bit measurements. An algorithm for solving the model is developed. Convergence analysis of the algorithm is presented. Our approach is to obtain a sequence of optimization problems by successively approximating the $\ell_0$-norm and to solve resulting problems by exploiting the proximity operator. We examine the performance of our proposed algorithm and compare it with the binary iterative hard thresholding (BIHT) \cite{Jacques-Laska-Boufounos-Baraniuk:11} a state-of-the-art algorithm for 1-bit compressive sampling reconstruction. Unlike the BIHT, our model and algorithm does not require a prior knowledge on the sparsity of the signal. This makes our proposed work a promising practical approach for signal acquisition.
\end{abstract}

\begin{keywords}
1-bit compressive sensing, $\ell_1$ minimization, $\ell_0$ minimization, proximity operator
\end{keywords}

\section{Introduction}
Compressive sampling is a recent advance in signal acquisition \cite{Candes-Romberg-Tao:CPAM:06,Candes-Tao:IEEE-TIT:06}. It provides a method to reconstruct a sparse signal $x \in \mathbb{R}^n$ from linear measurements
\begin{equation}\label{model:linear}
y = \Phi x,
\end{equation}
where $\Phi$ is a given $m \times n$ measurement matrix with $m < n$ and $y \in \mathbb{R}^m$ is the measurement vector acquired. The objective of compressive sampling is to deliver an approximation to $x$ from $y$ and $\Phi$. It has been demonstrated that the sparse signal $x$ can be recovered exactly from $y$ if $\Phi$ has Gaussian i.i.d. entries and satisfies the restricted isometry property \cite{Candes-Tao:IEEE-TIT:06}. Moreover, this sparse signal can be identified as a vector that has the smallest $\ell_0$-norm among all vectors yielding the same measurement vector $y$ under the measurement matrix $\Phi$.

However, the success of the reconstruction of this sparse signal is based on the assumption that the measurements have infinite bit precision. In realistic settings, the measurements are never exact and must be discretized prior to further signal analysis. In practice, these measurements are  quantized, a mapping from a continuous real value to a discrete value over some finite range. As usual, quantization inevitably introduces errors in measurements. The problem of estimating a sparse signal from a set of quantized measurements has been addressed  in recent literature. Surprisedly, it has been demonstrated theoretically and numerically that 1-bit per measurement is enough to retain information for sparse signal reconstruction. As pointed out in \cite{Boufounos-Baraniuk:08,Jacques-Laska-Boufounos-Baraniuk:11}, quantization to 1-bit measurements is appealing in practical applications. First, 1-bit quantizers are  extremely inexpensive hardware devices that test values above or below zeros, enabling simple, efficient, and fast quantization. Second, 1-bit quantizers are robust to a number of non-linear distortions applied to measurements. Third, 1-bit quantizers do not suffer from dynamic range issues. Due to these attractive properties of 1-bit quantizers, in this paper we will develop efficient algorithms for reconstruction of sparse signals from 1-bit measurements.

The 1-bit compressive sampling framework originally introduced in \cite{Boufounos-Baraniuk:08} is briefly described as follows. Formally, it can be written as
\begin{equation}\label{model:1bit}
y = A(x):=\mathrm{sign}(\Phi x),
\end{equation}
where the function $\mathrm{sign}(\cdot)$ denotes the sign of the variable, element-wise, and zero values are assigned to be $+1$. Thus, the measurement operator $A$, called a 1-bit scalar quantizer, is a mapping from $\mathbb{R}^n$ to the Boolean cube $\{-1,1\}^m$. Note that the scale of the signal has been lost during the quantization process. We search for a sparse signal $x^\star$ in the unit ball of  $\mathbb{R}^m$ such that the sparse signal $x^\star$ is consistent with our knowledge about the signal and measurement process, i.e., $A(x^\star)=A(x)$.

The problem of reconstructing a sparse signal from its 1-bit measurements is generally non-convex, and therefore it is a challenge to develop an algorithm that can find a desired solution. Nevertheless, since this problem was introduced in \cite{Boufounos-Baraniuk:08} in 2008, there are several algorithms that have been developed for attacking it  \cite{Boufounos-Baraniuk:08,Laska-Wen-Yin-Baraniuk:IEEESP:11,Plan-Vershynin:11,Yan-Yang-Osher:IEEESP:12}. Among those existing 1-bit compressive sampling algorithms, the binary iterative hard thresholding (BIHT) \cite{Jacques-Laska-Boufounos-Baraniuk:11} exhibits its superior performance in both reconstruction error and as well as consistency via numerical simulations over the algorithms in \cite{Boufounos-Baraniuk:08,Laska-Wen-Yin-Baraniuk:IEEESP:11}. When there are a lot of sign flips in the measurements, a method based on adaptive outlier pursuit for 1-bit compressive sampling was proposed in \cite{Yan-Yang-Osher:IEEESP:12}. The algorithms in  \cite{Jacques-Laska-Boufounos-Baraniuk:11,Yan-Yang-Osher:IEEESP:12} require the sparsity of the desired signal to be given in advance. This requirement, however, is hardly satisfied in practice. By keeping only the sign of the measurements, the magnitude of the signal is lost. The models associated with the aforementioned algorithms seek sparse vectors $x$ satisfying consistency constraints \eqref{model:1bit} in the unit sphere. As a result, these models are essentially non-convex and non-smooth.  In \cite{Plan-Vershynin:11}, a convex minimization problem is formulated for reconstruction of sparse signals from 1-bit measurements and is solved by linear programming. The details of the above algorithms will be briefly reviewed in the next section.

In this paper, we introduce a new $\ell_0$ minimization model over a convex set determined by consistency constraints for 1-bit compressive sampling recovery and develop an algorithm for solving the proposed model. Our model does not require prior knowledge on the sparsity of the signal, therefore, is referred to as the blind 1-bit compressive sampling model. Our approach for dealing with our proposed model is to obtain a sequence of optimization problems by successively approximating the $\ell_0$-norm and to solve resulting problems by exploiting the proximity operator \cite{moreau:BSMF:65}. Convergence analysis of our algorithm is presented.

This paper is organized as follows. In Section~\ref{sec:background}  we review and comment current 1-bit compressive sampling models and then introduce our own model by assimilating advantages of existing models. Heuristics for solving the proposed model are discussed in Section~\ref{sec:characterization}. Convergence analysis of the algorithm for the model is studied in Section~\ref{sec:convergence}. A numerical implementable algorithm for the model is presented in Section~\ref{sec:complete-algorithm}.  The performance of our algorithm is demonstrated and compared with the BIHT in Section~\ref{sec:experiments}. We present our conclusion in Section~\ref{sec:conclusion}.

\section{Models for One-Bit Compressive Sampling}\label{sec:background}
In this section, we begin with reviewing existing models for reconstruction of sparse signals from 1-bit measurements. After analyzing these models, we propose our own model that assimilates the advantages of the existing ones.

Using matrix notation, the 1-bit measurements in \eqref{model:1bit} can be equivalently expressed as
\begin{equation}\label{model:1bit-matrix}
Y\Phi x \ge 0,
\end{equation}
where $Y := \mathrm{diag}(y)$ is an $m \times m$ diagonal matrix whose $i$th diagonal element is the $i$th entry of $y$. The expression $Y\Phi x \ge 0$ in \eqref{model:1bit-matrix} means that all entries of  the vector $Y\Phi x$ are no less than $0$. Hence, we can treat the 1-bit measurements as sign constraints that should be enforced in the construction of the signal $x$ of interest. In what follows, equation~\eqref{model:1bit-matrix} is referred to as sign constraint or consistency condition, interchangeably.

The optimization model for reconstruction of a sparse signal from 1-bit measurements in \cite{Boufounos-Baraniuk:08} is
\begin{equation}\label{model:boufounos}
\min \|x\|_1 \quad \mbox{s.t.} \quad Y\Phi x \ge 0 \quad \mbox{and} \quad \|x\|_2=1,
\end{equation}
where $\|\cdot\|_1$ and $\|\cdot\|_2$ denote the $\ell_1$-norm and the $\ell_2$-norm of a vector, respectively. In model~\eqref{model:boufounos}, the $\ell_1$-norm objective function is used to favor sparse solutions, the sign constraint $Y\Phi x \ge 0$ is used to impose the consistency between the 1-bit measurements and the solution, the constraint $\|x\|_2=1$ ensures a nontrivial solution lying on the unit $\ell_2$ sphere.

Instead of solving model~\eqref{model:boufounos} directly, a relaxed version of model~\eqref{model:boufounos}
\begin{equation}\label{model:boufounos-relaxed}
\min \left\{\lambda \|x\|_1 +  \sum_{i=1}^m h((Y\Phi x )_i)\right\} \quad \mbox{s.t.} \quad  \|x\|_2=1
\end{equation}
was proposed  in \cite{Boufounos-Baraniuk:08} and solved by employing a variation of the fixed point continuation algorithm in \cite{Hale-Yin-Zhang:SAIMOPT-08}. Here $\lambda$ is a regularization parameter and $h$ is chosen to be the one-sided $\ell_1$ (or $\ell_2$) function, defined at $z \in \mathbb{R}$ as follows
\begin{equation}\label{one-sided-2}
h(z): = \left\{
          \begin{array}{ll}
            |z| \;\;( \mbox{or}\; \frac{1}{2}z^2), & \hbox{if $z<0$;} \\
            0, & \hbox{otherwise.}
          \end{array}
        \right.
\end{equation}
We remark that the one-sided $\ell_2$ function was adopted in \cite{Boufounos-Baraniuk:08} due to its convexity and smoothness properties that are required by a fixed point continuation algorithm.

In \cite{Laska-Wen-Yin-Baraniuk:IEEESP:11} a restricted-step-shrinkage algorithm was proposed for solving model~\eqref{model:boufounos}. This algorithm is similar in sprit to trust-region methods for nonconvex optimization on the unit sphere and has a provable convergence guarantees.

Binary iterative hard thresholding (BIHT) algorithms were recently introduced for reconstruction of sparse signals from 1-bit measurements in \cite{Jacques-Laska-Boufounos-Baraniuk:11}. The BIHT algorithms are developed for solving the following constrained optimization model
\begin{equation}\label{model:BIHT}
\min \sum_{i=1}^m h((Y\Phi x )_i) \quad \mbox{s.t.} \quad \|x\|_0 \le s \quad \mbox{and} \quad \|x\|_2=1,
\end{equation}
where $h$ is defined by equation \eqref{one-sided-2}, $s$ is a positive integer, and the $\ell_0$-norm $\|x\|_0$ counts the number of non-zero entries in $x$. Minimizing the objective function of model \eqref{model:BIHT} enforces the consistency condition \eqref{model:1bit-matrix}. The BIHT algorithms for model \eqref{model:BIHT} are a simple modification of the iterative thresholding algorithm proposed in \cite{Blumensath-Davies:ACHA:09}. It was shown numerically that the BIHT algorithms perform significantly better than the other aforementioned algorithms in \cite{Boufounos-Baraniuk:08,Laska-Wen-Yin-Baraniuk:IEEESP:11} in terms of both reconstruction error as well as consistency. Numerical experiments in \cite{Jacques-Laska-Boufounos-Baraniuk:11} further show that the BIHT algorithm with $h$ being the one-sided $\ell_1$ function performs better in low noise scenarios while the BIHT algorithm with $h$ being the one-sided $\ell_2$ function perform better in high noise scenarios.  Recently, a robust method for recovering signals from 1-bit measurements using adaptive outlier pursuit was proposed for the measurements having noise (i.e., sign flips) in \cite{Yan-Yang-Osher:IEEESP:12}.

The algorithms reviewed above for 1-bit compressive sampling are developed for optimization problems having convex objective functions and non-convex constraints. In \cite{Plan-Vershynin:11} a convex optimization program for reconstruction of sparse signals from 1-bit measurements was introduced as follows:
\begin{equation}\label{model:plan}
\min \|x\|_1 \quad \mbox{s.t.} \quad Y\Phi x \ge 0 \quad \mbox{and} \quad \|\Phi x\|_1=p,
\end{equation}
where $p$ is any fixed positive number. The first constraint $Y\Phi x \ge 0$ requires that a solution to model \eqref{model:plan} should be consistent with the 1-bit measurements. If a vector $x$ satisfies the first constraint, so is $ax$ for all $0< a <1$. Hence, an algorithm for minimizing the $\ell_1$-norm by only requiring consistency with the measurements will yield the solution $x$ being zero. The second constraint $\|\Phi x\|_1=p$ is then used to prevent model \eqref{model:plan} from returning a zero solution, thus, resolves the amplitude ambiguity. By taking the first constraint into consideration, we know that $\|\Phi x\|_1 = \langle y, \Phi x \rangle$, therefore, the second constraint becomes $\langle \Phi^\top y, x \rangle = p$. This confirms that both objective function and constraints of model \eqref{model:plan} are convex. It was further pointed out in \cite{Plan-Vershynin:11} that model \eqref{model:plan} can be cast as a linear program. As comparing model \eqref{model:plan} with model~\eqref{model:boufounos}, both the constraint $\|x\|_2=1$ in model~\eqref{model:boufounos} and the constraint $\|\Phi x\|_1=p$ in model~\eqref{model:plan}, the only difference between both models, enforce a non-trivial solution. However, as we have already seen, model~\eqref{model:plan} with the constraint $\|\Phi x\|_1=p$ can be solved by a computationally tractable algorithm.

Let us further comment on models \eqref{model:BIHT} and \eqref{model:plan}. First, the sparsity constraint in model \eqref{model:BIHT} is impractical since the sparsity of the underlying signal is unknown in general. Therefore, instead of imposing this sparse constraint, we consider to minimize an optimization model having the $\ell_0$-norm as its objective function. Second, although model~\eqref{model:plan} can be tackled by efficient linear programming solvers and the solution of model~\eqref{model:plan} preserves the effective sparsity of the underlying signal (see \cite{Plan-Vershynin:11}), the solution is not necessarily sparse in general as shown in our numerical experiments (see Section~\ref{sec:experiments}). Motivated by the aforementioned models and the associated algorithms,  we plan in this paper to reconstruct sparse signals from 1-bit measurements via solving the following constrained optimization model
\begin{equation}\label{model:SSZ}
\min \|x\|_0 \quad \mbox{s.t.} \quad Y\Phi x \ge 0 \quad \mbox{and} \quad \|\Phi x\|_1=p,
\end{equation}
where $p$ is again a arbitrary positive number. This model has the $\ell_0$-norm as its objective function and inequality $Y\Phi x \ge 0$ and equality $\|\Phi x\|_1=p$ as its convex constraints.

We remark that the actual value of $p$ is not important as long as it is positive. More precisely, suppose that $\mathcal{S}$ and $\mathcal{S}^\diamond$ are two sets collecting all solutions of model~\eqref{model:SSZ} with $p=1$ and  $p=p^\diamond>0$, respectively. If $x \in \mathcal{S}$, that is, $Y\Phi x \ge 0$ and $\|\Phi x\|_1=1$,  then, by denoting $x^\diamond:={p^\diamond}x$, it can be verified that  $\|x^\diamond\|_0=\|x\|_0$, $Y\Phi x^\diamond \ge 0$, and $\|\Phi x^\diamond \|_1=p^\diamond$. That indicates  $x^\diamond \in \mathcal{S}^\diamond$. Therefore, we have that ${p^\diamond}\mathcal{S} \subset \mathcal{S}^\diamond$. Conversely, we can show that $\mathcal{S}^\diamond \subset {p^\diamond}\mathcal{S}$ by reverting above steps. Hence, ${p^\diamond}\mathcal{S} = \mathcal{S}^\diamond$.  Without loss of generality, the positive number $p$ is always assumed to be $1$ in the rest part of the paper.

To close this section,  we compare  model \eqref{model:BIHT} and our proposed model~\eqref{model:SSZ} in the following result.
\begin{prop}\label{prop:models-relation}
Let $y \in \mathbb{R}^m$ be the 1-bit measurements from an $m\times n$ measurement matrix $\Phi$ via equation~\eqref{model:1bit} and let $s$ be a positive integer. Assume that the vector $x \in \mathbb{R}^n$ is a solution to model~\eqref{model:SSZ}. Then  model~\eqref{model:BIHT} has the unit vector $\frac{x}{\|x\|_2}$ as its solution if $\|x\|_0 \le s$; otherwise,  model~\eqref{model:BIHT} can not have a solution satisfying the consistency constraint if $\|x\|_0 > s$.
\end{prop}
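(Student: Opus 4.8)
The plan rests on one elementary property of the penalty $h$ in \eqref{one-sided-2}: for both the one-sided $\ell_1$ and the one-sided $\ell_2$ choices, $h(z)\ge 0$ for every real $z$, with $h(z)=0$ if and only if $z\ge 0$. Hence the objective $\sum_{i=1}^m h\bigl((Y\Phi z)_i\bigr)$ of model~\eqref{model:BIHT} is nonnegative and vanishes precisely on the consistency set $\{z:Y\Phi z\ge 0\}$, so a \emph{consistent} feasible point of \eqref{model:BIHT} is automatically a global minimizer of \eqref{model:BIHT}. The second ingredient is pure homogeneity: multiplying a vector by a positive scalar changes neither its support (hence its $\ell_0$-norm) nor the sign pattern of $Y\Phi z$, while it scales $\|\Phi z\|_1$ and $\|z\|_2$ by that same factor. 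These two facts are exactly what let me pass back and forth between the feasible sets of \eqref{model:SSZ} and \eqref{model:BIHT}.

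\emph{Case $\|x\|_0\le s$.} Let $x$ solve \eqref{model:SSZ}. Since $\|\Phi x\|_1=1$ we have $x\neq 0$, so $\widehat x:=x/\|x\|_2$ is well defined and satisfies $\|\widehat x\|_2=1$, $\|\widehat x\|_0=\|x\|_0\le s$, and $Y\Phi\widehat x=\|x\|_2^{-1}Y\Phi x\ge 0$. Thus $\widehat x$ is feasible for \eqref{model:BIHT} and consistent, hence a global minimizer of \eqref{model:BIHT} by the observation above; this is exactly the claim.

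\emph{Case $\|x\|_0>s$.} Argue by contradiction: suppose \eqref{model:BIHT} has a minimizer $z$ with $Y\Phi z\ge 0$, so that $\|z\|_0\le s$ and $\|z\|_2=1$. If $\Phi z\neq 0$, rescale: $z':=z/\|\Phi z\|_1$ satisfies $Y\Phi z'\ge 0$ and $\|\Phi z'\|_1=1$, so $z'$ is feasible for \eqref{model:SSZ} with $\|z'\|_0=\|z\|_0\le s<\|x\|_0$, contradicting the optimality of $x$ for \eqref{model:SSZ}. Therefore no minimizer of \eqref{model:BIHT} can satisfy the consistency constraint.

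The only delicate point — and the step I expect to be the real obstacle — is the degenerate possibility $\Phi z=0$ in the second case: then the rescaling above is undefined, and worse, such a $z$ (unit norm, $s$-sparse) would itself be a consistent minimizer of \eqref{model:BIHT}, since $Y\Phi z=0$ makes the objective vanish. This is ruled out precisely when no nonzero $s$-sparse vector lies in the kernel of $\Phi$ — a mild genericity condition on $\Phi$ (e.g. every $s$ columns of $\Phi$ linearly independent) that is standard in compressive sensing and implicitly in force here; under it, $\|z\|_0\le s$ and $\|z\|_2=1$ force $\Phi z\neq 0$, and the argument closes.
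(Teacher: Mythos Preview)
Your argument is essentially identical to the paper's: normalize $x$ by its $\ell_2$-norm to exhibit a consistent feasible point of \eqref{model:BIHT} in the first case, and in the second case rescale a hypothetical consistent minimizer of \eqref{model:BIHT} by $\|\Phi z\|_1$ to produce a feasible point of \eqref{model:SSZ} with strictly smaller $\ell_0$-norm than $x$, contradicting optimality.

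The one difference is that you explicitly flag the degenerate case $\Phi z=0$, which the paper's proof silently skips when it writes $x^\diamond:=x^\sharp/\|\Phi x^\sharp\|_1$. Your observation is correct: without an assumption such as ``every $s$ columns of $\Phi$ are linearly independent'' (which holds almost surely for Gaussian $\Phi$), a nonzero $s$-sparse $z$ in the kernel of $\Phi$ would be a consistent minimizer of \eqref{model:BIHT} that cannot be rescaled into the feasible set of \eqref{model:SSZ}, and the contradiction would not close. So you have not introduced a gap; you have identified one that is also present in the original and supplied the standard hypothesis under which it disappears.
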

\begin{proof}\ \
Since the vector $x$ is a solution to model~\eqref{model:SSZ}, then $x$ satisfies the consistency constraint $Y\Phi x \ge 0$. Hence, it, together with definition of $h$ in \eqref{one-sided-2}, implies that
$$
\sum_{i=1}^m h\left(\left(Y\Phi \frac{x}{\|x\|_2} \right)_i\right)=0.
$$
We further note that $\left\|\frac{x}{\|x\|_2}\right\|_0 = \|x\|_0$ and $\left\|\frac{x}{\|x\|_2}\right\|_2 =1$. Hence, the vector $\frac{x}{\|x\|_2}$ is a solution of model~\eqref{model:BIHT} if $\|x\|_0 \le s$.

On the other hand, if $\|x\|_0 > s$ then all solutions to model~\eqref{model:BIHT} do not satisfy the consistency constraint. Suppose this statement is \emph{false}. That is,  there exists a solution of model~\eqref{model:BIHT}, say $x^\sharp$, such that  $Y\Phi x^\sharp \ge 0$, $\|x^\sharp\|_0 \le s$, and $\|x^\sharp\|_2=1$ hold. Set $x^\diamond: = \frac{x^\sharp}{\|\Phi x^\sharp\|_1}$. Then $\|x^\diamond\|_0=\|x^\sharp\|_0 \le s$, $Y\Phi x^\diamond \ge 0$, and $\|\Phi x^\diamond\|_1=1$. Since $\|x^\diamond\|_0 < \|x\|_0$, it turns out that $x$ is not a solution of model~\eqref{model:SSZ}. This contracts our assumption on the vector $x$. This completes the proof of the result.
\end{proof}

From Proposition~\ref{prop:models-relation}, we can see that the sparsity $s$ for model~\eqref{model:BIHT} is critical. If $s$ is set too large, a solution to model~\eqref{model:BIHT} may not be the sparsest solution satisfying the consistency constraint; if $s$ is set too small, solutions to model~\eqref{model:BIHT} cannot satisfy the consistency constraint. In contrast, our model~\eqref{model:SSZ} does not require the sparsity constraint used in model~\eqref{model:BIHT} and delivers the sparsest solution satisfying the consistency constraint. Therefore, these properties make our model more attractive for 1-bit compressive sampling than the BIHT. Since sparsity of the underlying signal is not specified in advance in model \eqref{model:SSZ}, we refer it to as \emph{blind 1-bit compressive sampling model}.

\section{An Algorithm for the Blind 1-Bit Compressive Sampling}\label{sec:characterization}
In this section, we will develop algorithms for the proposed model~\eqref{model:SSZ}. We first reformulate model~\eqref{model:SSZ} as an unconstrained optimization problem via the indicator function of a closed convex set in $\mathbb{R}^{m+1}$. It turns out that the objective function of this unconstrained optimization problem is the sum of the $\ell_0$-norm and the indicator function composing with a matrix associated with the 1-bit measurements. Instead of directly solving the  unconstrained optimization problem we use some smooth concave functions to approximate the $\ell_0$-norm and then linearize the concave functions. The resulting model can be viewed as an optimization problem of minimizing a weighted $\ell_1$-norm over the closed convex set. The solution of this resulting model is served as a new point at which the concave functions will be linearized. This process is repeatedly performed until a certain stopping criteria is met. Several concrete examples for approximating the $\ell_0$-norm are provided at the end of this section.

We begin with introducing our notation and recalling some background from convex analysis. For the $d$-dimensional Euclidean space $\mathbb{R}^d$, the class of all lower semicontinuous convex functions $f: \mathbb{R}^d \rightarrow (-\infty, +\infty]$ such that $\mathrm{dom} f:=\{x \in \mathbb{R}^d: f(x) <+\infty\} \neq \emptyset$ is denoted by $\Gamma_0(\mathbb{R}^d)$. The indicator function of a closed convex set $C$ in  $\mathbb{R}^d$  is defined, at $u \in \mathbb{R}^d$, as
$$
\iota_C(u): =\left\{
               \begin{array}{ll}
                 0, & \hbox{if $u\in C$;} \\
                 +\infty, & \hbox{otherwise.}
               \end{array}
             \right.
$$
Clearly, $\iota_C$ is in $\Gamma_0(\mathbb{R}^d)$ for any closed nonempty convex set $C$.

Next, we reformulate model~\eqref{model:SSZ} as an unconstrained optimization problem. To this end, from the $m \times n$ matrix $\Phi$ and the $m$-dimensional vector $y$ in equation~\eqref{model:1bit}, we define an $(m+1)\times n$ matrix
\begin{equation}\label{eq:B}
B := \begin{bmatrix} \mathrm{diag}(y) \\ y^\top \end{bmatrix} \Phi
\end{equation}
and a subset of $\mathbb{R}^{m+1}$
\begin{equation}\label{eq:C}
\mathcal{C}:=\{z:  z_{m+1}=1 \; \mbox{and}\; z_i\ge 0, \; i=1,2,\ldots, m\},
\end{equation}
respectively. Then a vector $x$ satisfies the two constraints of model~\eqref{model:SSZ} if and only if the vector $Bx$ lies in the set $\mathcal{C}$. Hence, model~\eqref{model:SSZ} can be rewritten as
\begin{equation}\label{model:matrix}
\min\{\|x\|_0 + \iota_\mathcal{C}(Bx): x \in \mathbb{R}^n \}.
\end{equation}

Problem~\eqref{model:matrix} is known to be NP-complete due to the non-convexity of the $\ell_0$-norm. Thus, there is a need for an algorithm that can pick the sparsest vector $x$ satisfying the relation $Bx \in \mathcal{C}$. To attack this $\ell_0$-norm optimization problem, a common approach that appeared in recent literature is to approximate the $\ell_0$-norm by its computationally feasible approximations. In the context of compressed sensing, we review several popular choices for defining the $\ell_0$-norm as the limit of a sequence. More precisely, for a positive number $\epsilon \in (0,1)$, we consider separable concave functions of the form
\begin{equation}\label{def:F}
F_\epsilon (x) : = \sum_{i=1}^n f_\epsilon (|x_i|), \quad  x \in \mathbb{R}^n,
\end{equation}
where $f_\epsilon: \mathbb{R}_+ \rightarrow \mathbb{R}$ is strictly increasing, concave, and twice continuously differentiable such that
\begin{equation}\label{def:F-L0}
\lim_{\epsilon \rightarrow 0+} F_\epsilon (x)  = \|x\|_0, \quad  \mbox{for all} \quad x  \in \mathbb{R}^n.
\end{equation}
Since the function $f_\epsilon$ is concave and smooth on $\mathbb{R}_+:=[0, \infty)$, it can be majorized by a simple function formed by its first-order Taylor series expansion at a arbitrary point. Write $\mathcal{F}_\epsilon(x, v):=F_\epsilon (v) + \langle \nabla F_\epsilon (|v|), |x|-|v| \rangle$.  Therefore, at any point $v \in \mathbb{R}^n$ the following inequality holds
\begin{equation}\label{eq:L-expansion}
F_\epsilon (x) < \mathcal{F}_\epsilon(x, v)
\end{equation}
for all $x \in \mathbb{R}^n$ with $|x| \neq |v|$. Here, for a vector $u$, we use $|u|$ to denote a vector such that each element of $|u|$ is the absolute value of the corresponding element of $u$. Clearly, when $v$ is close enough to $x$, $\mathcal{F}_\epsilon(x, v)$ the expression on the right-hand side of \eqref{eq:L-expansion} provides a reasonable approximation to the one on its left-hand side. Therefore, it is considered as a computationally feasible approximation to the $\ell_0$-norm of $x$. With such an approximation, a simplified problem is solved and its solution is used to formulate another simplified problem which is closer to the ideal problem \eqref{model:matrix}. This process is then repeated until the solutions to the simplified problems become stationary or meet a termination criteria. This procedure is summarized in Algorithm~\ref{alg:matrix}.

\begin{algorithm}\caption{(Iterative scheme for model~\eqref{model:matrix})}\label{alg:matrix}
 \begin{algorithmic}[htb]

   \State Initialization: choose $\epsilon \in (0,1)$ and let $x^{(0)} \in \mathbb{R}^n$ be an initial point.
   \Repeat {($k\ge 0$)}
   \State Step 1: Compute $x^{(k+1)}$:
$$
x^{(k+1)} \in \mathrm{argmin}\left\{\mathcal{F}_\epsilon(x, |x^{(k)}|) + \iota_\mathcal{C}(Bx): x \in \mathbb{R}^n \right\}.
$$
   \Until{a given stopping criteria is met}
 \end{algorithmic}
\end{algorithm}

The terms $F_\epsilon (|x^{(k)}|)$ and $\langle \nabla F_\epsilon (|x^{(k)}|), |x^{(k)}| \rangle$ appeared in the optimization problem in Algorithm~\ref{alg:matrix} can be ignored because they are irrelevant to the optimization problem. Hence the expression for $x^{(k+1)}$ in Algorithm~\ref{alg:matrix} can be simplified as
\begin{equation}\label{model:subproblem}
x^{(k+1)} \in \mathrm{argmin}\left\{ \langle \nabla F_\epsilon (|x^{(k)}|), |x| \rangle + \iota_\mathcal{C}(Bx): x \in \mathbb{R}^n \right\}.
\end{equation}
Since $f_\epsilon$ is strictly concave and increasing on $\mathbb{R}_+$, $f'_\epsilon$ is positive on  $\mathbb{R}_+$. Hence,
$\langle \nabla F_\epsilon (|x^{(k)}|), |x| \rangle = \sum_{i=1}^n f'_\epsilon(|x^{(k)}_i|) |x_i|$   can be viewed as the weighted $\ell_1$-norm of $x$ having $f'_\epsilon(|x^{(k)}_i|)$ as its $i$th weight. Thus, the objective function of the above optimization problem is convex. Details for finding a solution to the problem will be presented in the next section.

In the rest of this section,  we list several possible choices of the functions in \eqref{def:F} including but not limited to the Mangasarian function  in \cite{Mangasarain:Optimization:99} and the Log-Det function in \cite{Fazel-Hindi-Boyd:03}.

The Mangasarian function  is given as follows:
\begin{equation}\label{eq:Mangasarain}
F_\epsilon (x)  = \sum_{i=1}^n \left(1- e^{-|x_i|/\epsilon}\right),
\end{equation}
where $x \in \mathbb{R}^n$. This function is used to approximate  the $\ell_0$-norm to obtain minimum-support solutions (that is, solutions with as many components equal to zero as possible).  The usefulness of the Mangasarian function was demonstrated in finding sparse solutions of underdetermined linear systems (see \cite{Jokar-Pfetsch:SISC:08}).


The Log-Det function is defined as
\begin{equation}\label{eq:L-bak}
F_\epsilon (x)  = \sum_{i=1}^n \frac{\log(|x_i|/\epsilon+1)}{\log(1/\epsilon)},
\end{equation}
where $x \in \mathbb{R}^n$. Notice that  $\|x\|_0$ is equal to the rank of the diagonal matrix $\mathrm{diag}(x)$. The function $F_\epsilon (x)$ is equal to $(\log(1/\epsilon))^{-1}\log(\mathrm{det}(\mathrm{diag}(x)+\epsilon I)) + n$, the logarithm of the determinant of the matrix $\mathrm{diag}(x)+\epsilon I$. Hence, it was named as the $\mbox{Log-Det}$ heuristic and  used for minimizing the rank of a positive semidefinite matrix over a convex set in \cite{Fazel-Hindi-Boyd:03}. Constant terms can be ignored since they will not affect the solution of the optimization problem~\eqref{model:subproblem}. Hence the Log-Det function in \eqref{eq:L-bak} can be replaced by
\begin{equation}\label{eq:L}
F_\epsilon (x)  = \sum_{i=1}^n \log(|x_i|+\epsilon).
\end{equation}

The function $F_\epsilon$ for the above three choices are plotted in Figure~\ref{fig:plot-functions} for $n=1$ and $\epsilon$ being $\frac{1}{4}$, $\frac{1}{8}$, $\frac{1}{16}$, and $\frac{1}{32}$. We can see that for a fixed $\epsilon \in (0, 1)$ the Mangasarian function is the one which is the most closest to the $\ell_0$-norm.

We point it out that the Mangasarian function is bounded by $1$, therefore, is non-coercive while the Log-Det function is coercive. This makes difference in convergence analysis of the associated Algorithm~\ref{alg:matrix} that will be presented in the next section. In what follows, the function $F_\epsilon$ is the Mangasarian function or the Log-Det function. We specify it only when it is noted.

\begin{figure}[h]
\centering
\begin{tabular}{cc}
\scalebox{0.37}{\includegraphics{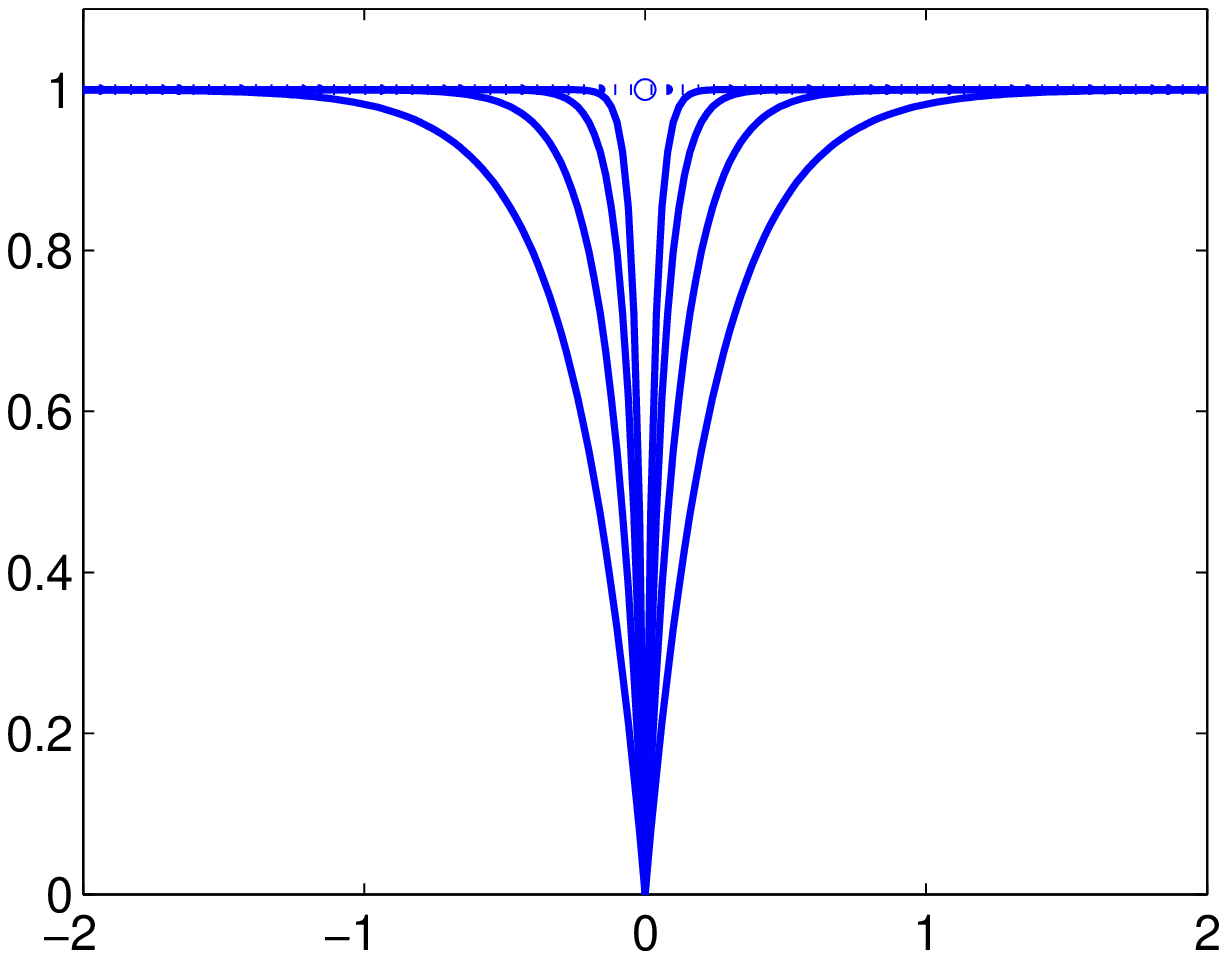}}&
\scalebox{0.37}{\includegraphics{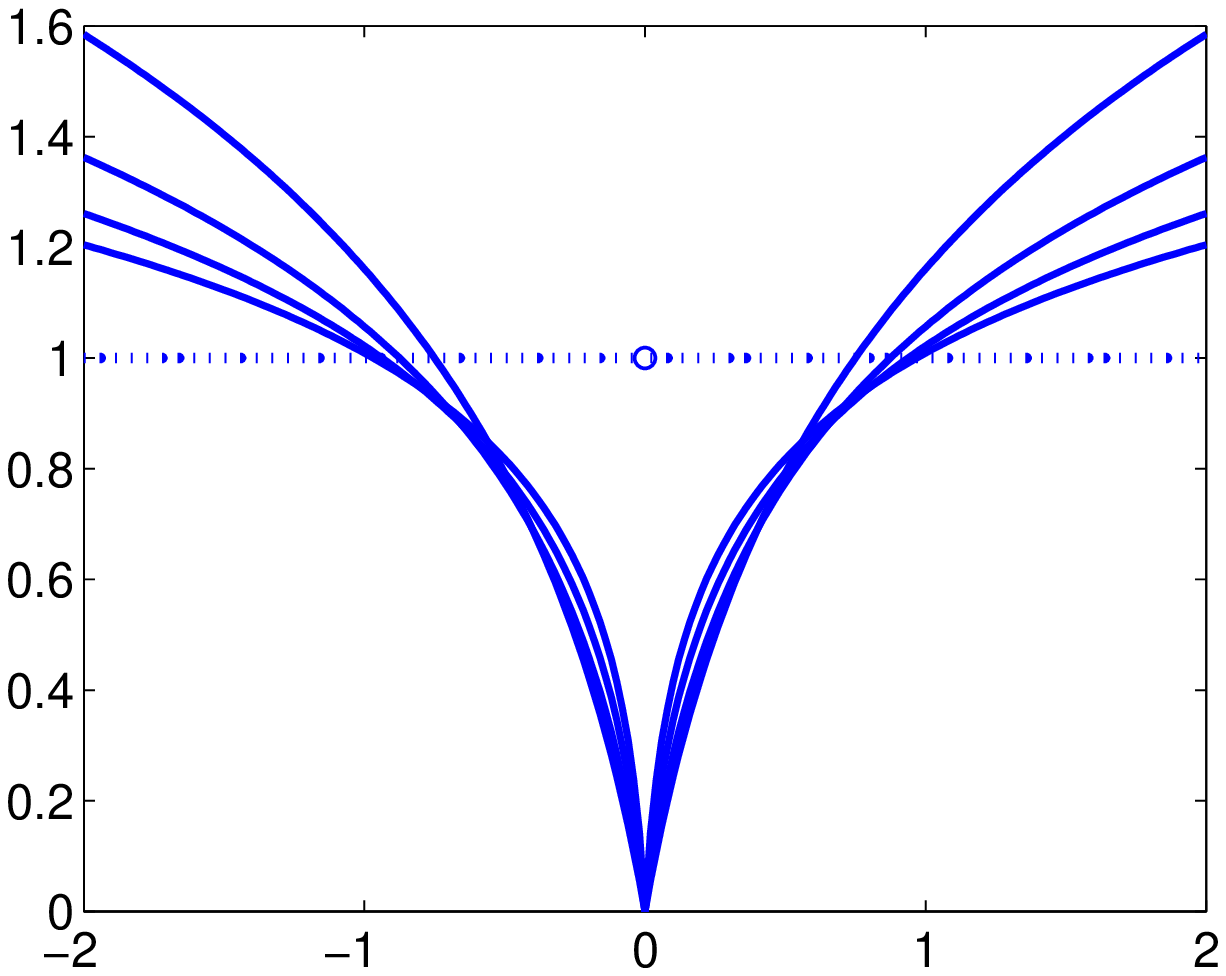}}\\
(a) Mangasarian &(b) Log-Det

\end{tabular}
\caption{Plots of $F_{\frac{1}{4}}$, $F_{\frac{1}{8}}$, $F_{\frac{1}{16}}$, $F_{\frac{1}{32}}$ with $n=1$ for (a) the Mangasarian function; (b) the Log-Det function.}
\label{fig:plot-functions}
\end{figure}
\section{Convergence Analysis}\label{sec:convergence}

In this section, we shall give convergence analysis for Algorithm~\ref{alg:matrix}. We begin with presenting the following result.

\begin{theorem}\label{thm:property}
Given $\epsilon\in(0, 1)$, $x^{(0)}\in\mathbb{R}^n$, and the set $\mathcal{C}$ defined by \eqref{eq:C}, let the sequence $\{x^{(k)}: k\in\mathbb{N}\}$ be generated by Algorithm~\ref{alg:matrix}, where $\mathbb{N}$ is the set of all natural numbers. Then the following three statements hold:
\begin{itemize}
\item [(i)] The sequence $\{F_\epsilon(x^{(k)}): k\in\mathbb{N}\}$ converges when  $F_\epsilon$ is corresponding to  the Mangasarian function \eqref{eq:Mangasarain} or the Log-Det function \eqref{eq:L};

\item [(ii)] The sequence $\{x^{(k)}:  k\in\mathbb{N}\}$ is bounded when  $F_\epsilon$ is the Log-Det function;
\item [(iii)] $\sum_{k=1}^{+\infty}\left\||x^{(k+1)}|-|x^{(k)}|\right\|_2^2$ is convergent when  the sequence $\{x^{(k)}:  k\in\mathbb{N}\}$ is bounded.
\end{itemize}
\end{theorem}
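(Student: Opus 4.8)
The plan is to exploit the majorization inequality \eqref{eq:L-expansion} together with the optimality of $x^{(k+1)}$ for the linearized subproblem \eqref{model:subproblem}. The starting observation is that, since $x^{(k+1)}$ minimizes $x\mapsto\mathcal{F}_\epsilon(x,|x^{(k)}|)+\iota_\mathcal{C}(Bx)$ and $x^{(k)}$ is itself feasible (i.e. $Bx^{(k)}\in\mathcal{C}$, because $x^{(k)}$ was produced by Step 1 at the previous iteration — one must check the base case $k=0$ separately or simply note that the chain starts once $k\ge1$), we have $\mathcal{F}_\epsilon(x^{(k+1)},|x^{(k)}|)\le\mathcal{F}_\epsilon(x^{(k)},|x^{(k)}|)=F_\epsilon(x^{(k)})$. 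Combining this with the majorization $F_\epsilon(x^{(k+1)})\le\mathcal{F}_\epsilon(x^{(k+1)},|x^{(k)}|)$ gives the monotone decrease $F_\epsilon(x^{(k+1)})\le F_\epsilon(x^{(k)})$, which together with boundedness below (both $F_\epsilon$ choices are bounded below — by $0$ for Mangasarian, and on the bounded sequence for Log-Det) yields part (i) and shows $\{F_\epsilon(x^{(k)})\}$ is Cauchy; in particular the successive differences $F_\epsilon(x^{(k)})-F_\epsilon(x^{(k+1)})$ form a summable (telescoping) series.

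The next step is to produce a quantitative gap: I want to show that $F_\epsilon(x^{(k)})-F_\epsilon(x^{(k+1)})$ dominates a constant multiple of $\bigl\||x^{(k+1)}|-|x^{(k)}|\bigr\|_2^2$. This is where the twice-continuous-differentiability and strict concavity of $f_\epsilon$ enter. Writing $g(t)=F_\epsilon$ restricted appropriately, a second-order Taylor expansion of each scalar function $f_\epsilon$ about $|x^{(k)}_i|$ gives, for some intermediate point,
\[
F_\epsilon(x^{(k+1)}) = \mathcal{F}_\epsilon(x^{(k+1)},|x^{(k)}|) + \tfrac12\sum_{i=1}^n f_\epsilon''(\xi_i)\bigl(|x^{(k+1)}_i|-|x^{(k)}_i|\bigr)^2,
\]
so that
\[
F_\epsilon(x^{(k)}) - F_\epsilon(x^{(k+1)}) \ge \bigl(F_\epsilon(x^{(k)})-\mathcal{F}_\epsilon(x^{(k+1)},|x^{(k)}|)\bigr) + \tfrac12\sum_{i=1}^n\bigl(-f_\epsilon''(\xi_i)\bigr)\bigl(|x^{(k+1)}_i|-|x^{(k)}_i|\bigr)^2 \ge \tfrac12\sum_{i=1}^n\bigl(-f_\epsilon''(\xi_i)\bigr)\bigl(|x^{(k+1)}_i|-|x^{(k)}_i|\bigr)^2,
\]
using $F_\epsilon(x^{(k)})\ge\mathcal{F}_\epsilon(x^{(k+1)},|x^{(k)}|)$ from the optimality argument above. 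Here the boundedness hypothesis on $\{x^{(k)}\}$ is used decisively: all the arguments $|x^{(k)}_i|$, $|x^{(k+1)}_i|$, and hence the intermediate points $\xi_i$, lie in a fixed compact subset $[0,R]$ of $\mathbb{R}_+$; since $-f_\epsilon''$ is continuous and (by strict concavity) positive on $\mathbb{R}_+$, it attains a positive minimum $c=c(\epsilon,R)>0$ on $[0,R]$. Therefore $F_\epsilon(x^{(k)})-F_\epsilon(x^{(k+1)})\ge \tfrac{c}{2}\bigl\||x^{(k+1)}|-|x^{(k)}|\bigr\|_2^2$.

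Finally, summing this inequality over $k=1,\dots,N$ telescopes the left-hand side to $F_\epsilon(x^{(1)})-F_\epsilon(x^{(N+1)})\le F_\epsilon(x^{(1)})-\inf_k F_\epsilon(x^{(k)})<\infty$, giving $\tfrac{c}{2}\sum_{k=1}^{N}\bigl\||x^{(k+1)}|-|x^{(k)}|\bigr\|_2^2\le F_\epsilon(x^{(1)})$ for every $N$, hence the series $\sum_{k=1}^{+\infty}\bigl\||x^{(k+1)}|-|x^{(k)}|\bigr\|_2^2$ converges. I expect the main obstacle to be handling the second-order term cleanly: one must be a little careful that the Taylor expansion is applied to $f_\epsilon$ coordinatewise at the (nonnegative) points $|x^{(k)}_i|$ rather than at $x^{(k)}_i$ directly, and one must invoke the compactness coming from the boundedness assumption to get a uniform-in-$k$ lower bound $c$ on $-f_\epsilon''$ — without boundedness, $-f_\epsilon''$ could decay to $0$ (as it does for both the Mangasarian and Log-Det functions as the argument grows) and the argument would collapse. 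A secondary point worth stating explicitly is the feasibility of $x^{(k)}$ used at the very first inequality, which is why the sum is indexed from $k=1$.
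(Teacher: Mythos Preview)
Your approach is essentially the paper's: monotone descent from optimality plus majorization for (i), then a second-order Taylor bound with the uniform negative curvature coming from boundedness for (iii). Two small points deserve tightening. First, you do not address item (ii) at all; the paper dispatches it in one line by observing that the Log-Det function \eqref{eq:L} is coercive, so the convergence (hence boundedness) of $\{F_\epsilon(x^{(k)})\}$ from (i) forces $\{x^{(k)}\}$ to be bounded. Second, your justification that the Log-Det $F_\epsilon$ is bounded below ``on the bounded sequence'' is circular as stated, since for (i) you cannot yet assume $\{x^{(k)}\}$ is bounded; the clean fix (which the paper uses) is simply that $f_\epsilon$ is increasing on $\mathbb{R}_+$, so $F_\epsilon(x^{(k)})\ge F_\epsilon(0)=n\log\epsilon$ for every $k$, independent of any boundedness of the iterates.
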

\begin{proof}\ \
We first prove Item (i). The key step for proving it is to show that the sequence $\{F_\epsilon(x^{(k)}): k\in\mathbb{N}\}$ is decreasing and bounded below. The boundedness of the sequence is due to the fact that $F_\epsilon(0) \le F_\epsilon(x^{(k)})$. From Step 1 of Algorithm~\ref{alg:matrix} or equation~\eqref{model:subproblem}, one can immediately have that
\begin{equation*}\label{eq:temp1-thm}
\iota_\mathcal{C}(Bx^{(k+1)})=0
\end{equation*}
and
\begin{equation}\label{eq:temp1-thm}
\langle \nabla F_\epsilon (|x^{(k)}|), |x^{(k+1)}| \rangle \le \langle \nabla F_\epsilon (|x^{(k)}|), |x^{(k)}| \rangle.
\end{equation}
By identifying $x^{(k)}$ and $x^{(k+1)}$, respectively, as $v$ and $x$ in \eqref{eq:L-expansion} and using the inequality in \eqref{eq:temp1-thm}, we get $F_\epsilon(x^{(k+1)}) \le F_\epsilon(x^{(k)})$. Hence, the sequence $\{F_\epsilon(x^{(k)}): k\in\mathbb{N}\}$ is decreasing and bounded below. Item (i) follows immediately.

When $F_\epsilon$ is chosen as the Log-Det function, the coerciveness of $F_\epsilon$ together with Item (i) implies that  the sequence $\{x^{(k)}:  k\in\mathbb{N}\}$ must be bounded, that is, Item (ii) holds.

Finally, we prove Item (iii).  Denote $ w^{(k)}:=|x^{(k+1)}|-|x^{(k)}|$. From the second-order Taylor expansion of the function $F_\epsilon$ at $x^{(k)}$ we have that
\begin{equation}\label{eq:temp2-thm}
F_\epsilon(x^{(k+1)})=\mathcal{F}_\epsilon(x^{(k+1)}, x^{(k)})+\frac{1}{2} (w^{(k)})^\top \nabla^2  F_\epsilon(v) w^{(k)},
\end{equation}
where $v$ is some point in the line segment linking the points $|x^{(k+1)}|$ and $|x^{(k)}|$ and $\nabla^2  F_\epsilon(v)$ is the Hessian matrix of $F_\epsilon$ at the point $v$.

By \eqref{eq:temp1-thm}, the first term on the right-hand of equation \eqref{eq:temp2-thm} is less than $F_\epsilon(x^{(k)})$. By equation \eqref{eq:L},  $\nabla^2  F_\epsilon(v)$ for $v$ lying in the first octant of $\mathbb{R}^n$ is a diagonal matrix and is equal to
$-\frac{1}{\epsilon^2}\mathrm{diag}(e^{-\frac{v_1}{\epsilon}}, e^{-\frac{v_2}{\epsilon}}, \ldots, e^{-\frac{v_n}{\epsilon}})$ or $-\mathrm{diag}((v_1+\epsilon)^{-2}, (v_2+\epsilon)^{-2}, \ldots (v_n+\epsilon)^{-2})$ which corresponds to $F_\epsilon$ being the Mangasarian or the Log-Det function. Hence, the matrix $\nabla^2  F_\epsilon(v)$ is negative definite. Since the sequence $\{x^{(k)}:  k\in\mathbb{N}\}$ is bounded, there exists a constant $\rho>0$ such that
$$
(w^{(k)})^\top \nabla^2  F_\epsilon(v) w^{(k)} \le -\rho \|w^{(k)}\|_2^2.
$$
Putting all above results together into \eqref{eq:temp2-thm}, we have that
$$
F_\epsilon(x^{(k+1)})\le F_\epsilon(x^{(k)}) -\frac{\rho}{2} \left\||x^{(k+1)}|-|x^{(k)}|\right\|_2^2.
$$
Summing the above inequality from $k=1$ to $+\infty$ and using Item (i) we get the proof of Item (iii).
\end{proof}

From Item (iii) of Theorem~\ref{thm:property}, we have $\left\||x^{(k+1)}|-|x^{(k)}|\right\|_2 \rightarrow 0$ as $k \rightarrow \infty$.

To further study properties of the sequence $\{x^{(k)}: k\in\mathbb{N}\}$ generated by Algorithm~\ref{alg:matrix}, the matrix $B^\top$ is required to have the range space property (RSP) which is originally introduced in \cite{Zhao-Li:SIOP:12}. With this property and motivated by the work in \cite{Zhao-Li:SIOP:12} we prove that Algorithm~\ref{alg:matrix} can yield a sparse solution for model~\eqref{model:matrix}.

Prior to presenting the definition of the RSP, we introduce the notation to be used throughout the rest of this paper. Given a set $S \subset \{1, 2, \ldots, n\}$, the symbol $|S|$ denotes the cardinality of $S$, and $S^c:=\{1, 2, \ldots, n\}\setminus S$ is the complement of $S$. Recall that for a vector $u$, by abuse of notation, we also use $|u|$ to denote the vector whose elements are the absolute values of the corresponding elements of $u$. For a given matrix $A$ having $n$ columns, a vector $u$ in $\mathbb{R}^n$, and a set $S \subset \{1, 2, \ldots, n\}$, we use the notation $A_S$ to denote the submatrix extracted from $A$ with column indices in $S$, and $u_S$ the subvector extracted from $u$ with component indices in $S$.

\begin{definition}[Range Space Property (RSP)]\ \
Let $A$ be an $m\times n$ matrix. Its transpose $A^\top$ is said to satisfy the {\emph {range space property (RSP)}} of order $K$ with a constant $\rho>0$ if for all sets $S\subseteq\{1,\dots, n\}$ with $|S|\geq K$ and for all $\xi$ in the range space of $A^{\top}$ the following inequality holds
$$
\|\xi_{S^c}\|_1\le \rho\|\xi_S\|_1.
$$
\end{definition}

We remark that if the transpose of an $m\times n$ matrix $B$ has the RSP of order $K$ with a constant $\rho>0$, then for every non-empty set $S\subseteq\{1,\dots, n\}$, the transpose of the matrix $B_S$, denoted by $B_S^\top$,  has the RSP of order $K$ with constant $\rho$ as well.

The next result shows that if the transpose of the matrix $B$ in Algorithm~\ref{alg:matrix} possesses the RSP, then  Algorithm~\ref{alg:matrix} can lead to a sparse solution for model~\eqref{model:matrix}. To this end, we define a mapping $\sigma: \mathbb{R}^d \rightarrow \mathbb{R}^d$ such that the $i$th component of the vector $\sigma(u)$ is the $i$th largest component of $|u|$.

\begin{prop}\label{lema:smallestToZero}
Let $B$ be the $(m+1)\times n$ matrix be defined by \eqref{eq:B} and let $\{x^{(k)}: k \in \mathbb{N}\}$ be the sequence generated by Algorithm~\ref{alg:matrix}. Assume that the matrix $B^\top$  has the RSP of order $K$ with $\rho>0$ satisfying $(1+\rho)K<n$. Suppose that the sequence $\{x^{(k)}: k \in \mathbb{N}\}$ is bounded. Then $(\sigma(x^{(k)}))_n$ the $n$th largest component of $x^{(k)}$ converges to $0$.
\end{prop}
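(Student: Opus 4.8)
The plan is to argue by contradiction, the heart being a first‑order optimality characterization of the weighted $\ell_1$ subproblem in Algorithm~\ref{alg:matrix} combined with the range space property. First I would record that for any $u\in\mathbb{R}^n$ one has $(\sigma(u))_n=\min_{1\le i\le n}|u_i|$, so $(\sigma(x^{(k)}))_n=0$ exactly when $x^{(k)}$ has a zero coordinate; hence it suffices to rule out iterates of full support from some index on. Suppose $(\sigma(x^{(k)}))_n\not\to 0$: there are $\delta>0$ and a subsequence $(k_j)$ with $|x^{(k_j)}_i|\ge\delta$ for all $i,j$. Since $\{x^{(k)}\}$ is bounded, Theorem~\ref{thm:property}(iii) gives $\sum_k\big\||x^{(k+1)}|-|x^{(k)}|\big\|_2^2<\infty$, so $\big\||x^{(k+1)}|-|x^{(k)}|\big\|_\infty\le\big\||x^{(k+1)}|-|x^{(k)}|\big\|_2\to 0$; for $j$ large this forces $|x^{(k_j+1)}_i|\ge\delta/2>0$ for every $i$, i.e.\ $x^{(k_j+1)}$ has full support.

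Next I would extract the optimality condition. Writing $w:=\nabla F_\epsilon(|x^{(k_j)}|)$, so $w_i=f'_\epsilon(|x^{(k_j)}_i|)>0$ by the positivity of $f'_\epsilon$ noted after \eqref{model:subproblem}, the feasible minimizer $x^{(k_j+1)}$ minimizes $\langle w,|x|\rangle$ over $\{x:Bx\in\mathcal{C}\}$. For any $d\in\ker B$ we have $B(x^{(k_j+1)}+td)=Bx^{(k_j+1)}\in\mathcal{C}$ for all $t$, so $x^{(k_j+1)}+td$ stays feasible; because $x^{(k_j+1)}$ has full support, $t\mapsto\langle w,|x^{(k_j+1)}+td|\rangle$ is affine near $t=0$ with slope $\langle w\odot\mathrm{sign}(x^{(k_j+1)}),d\rangle$, and optimality at $t=0$ forces that slope to vanish. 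Letting $d$ range over $\ker B$ yields $\xi:=w\odot\mathrm{sign}(x^{(k_j+1)})\in(\ker B)^\perp=\mathrm{range}(B^\top)$, with $|\xi_i|=w_i>0$ for every $i$.

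Then I would feed $\xi$ into the RSP. Choose $S\subseteq\{1,\dots,n\}$ with $|S|=K$ indexing the $K$ smallest values $|\xi_i|$, and set $\mu:=\max_{i\in S}|\xi_i|>0$. Every coordinate of $\xi$ outside $S$ is $\ge\mu$, so $\|\xi_{S^c}\|_1\ge(n-K)\mu$, while $\|\xi_S\|_1\le K\mu$. The RSP of order $K$ applied to $\xi\in\mathrm{range}(B^\top)$ gives $(n-K)\mu\le\|\xi_{S^c}\|_1\le\rho\|\xi_S\|_1\le\rho K\mu$, whence $n-K\le\rho K$, i.e.\ $n\le(1+\rho)K$, contradicting the hypothesis $(1+\rho)K<n$. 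This contradiction yields $(\sigma(x^{(k)}))_n\to 0$.

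The delicate step is the optimality argument: one must check that the subproblem actually attains its minimum (a positively weighted $\ell_1$‑norm is coercive and the feasible set $\{Bx\in\mathcal{C}\}$ is nonempty, containing the true signal rescaled so that $\|\Phi x\|_1=1$), that $\ker B\neq\{0\}$ so the derived membership is not vacuous (this is automatic: if $\mathrm{range}(B^\top)=\mathbb{R}^n$, a standard basis vector supported in $S^c$ already violates the RSP inequality), and that full support of the iterate really does linearize $\langle w,|\cdot|\rangle$ along $\ker B$ near the iterate. Everything else is bookkeeping. I would also remark that the middle two steps use nothing about the subsequence — they show that \emph{any} full‑support iterate contradicts the RSP — so in fact $x^{(k)}$ has a zero coordinate for every $k\ge 1$; the contradiction/subsequence packaging (and hence the appeal to Theorem~\ref{thm:property}(iii)) is needed only to phrase the conclusion as a limit.
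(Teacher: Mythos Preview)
Your proof is correct and follows essentially the same route as the paper's: a contradiction argument using Theorem~\ref{thm:property}(iii) to obtain a full-support iterate $x^{(k_j+1)}$, the first-order optimality condition to place $\xi=\nabla F_\epsilon(|x^{(k_j)}|)\odot\mathrm{sign}(x^{(k_j+1)})$ in $\mathrm{range}(B^\top)$, and then the RSP inequality with the chosen $S$ against $(1+\rho)K<n$. The only cosmetic differences are that the paper derives the optimality condition via Fermat's rule and the subdifferential chain rule where you perturb directly along $\ker B$, and that your closing remark---that any full-support minimizer already contradicts the RSP, so in fact $(\sigma(x^{(k)}))_n=0$ for every $k\ge 1$---is a sharpening the paper does not state.
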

\begin{proof}\ \ Suppose this proposition is \emph{false}. Then there exist a constant $\gamma>0$ and a subsequence $\{x^{(k_j)}:j\in\mathbb{N}\}$ such that $(\sigma(x^{(k_j)}))_n\geq 2\gamma>0$ for all $j \in \mathbb{N}$. From Item (iii) of Theorem \ref{thm:property} we have that
\begin{equation}\label{eq:sigma>gamma}
(\sigma(x^{(k_j+1)}))_n\geq \gamma
\end{equation}
for all sufficient large $j$. For simplicity, we set $y^{(k_j)}:=\nabla F_{\epsilon}(|x^{(k_j)}|)$. Hence, by inequality \eqref{eq:sigma>gamma} and $F_\epsilon$, we know that
\begin{equation} \label{eq:three-ineqs}
|x^{(k_j)}| >0 \quad |x^{(k_j+1)}|>0, \quad \mbox{and} \quad y^{(k_j)}>0
\end{equation}
for all sufficient large $j$. In what follows, we assume that the integer $j$ is large enough such that the above inequalities in \eqref{eq:three-ineqs} hold.

Since the vector $x^{(k_j+1)}$ is obtained through Step 1 of Algorithm~\ref{alg:matrix}, i.e., equation \eqref{model:subproblem}, then by Fermat's rule and the chain rule of subdifferential we have that
$$
0=\mathrm{diag}(y^{(k_j)})\partial {\|\cdot\|_1}(\mathrm{diag}(y^{(k_j)})x^{(k_j+1)})+B^\top b^{(k_j+1)},
$$
where $b^{(k_j+1)}\in\partial \iota_C(B x^{(k_j+1)})$. By \eqref{eq:three-ineqs}, we get
$$
\partial {\|\cdot\|_1}(\mathrm{diag}(y^{(k_j)})x^{(k_j+1)})=\{ \mathrm{sgn}(x^{(k_j+1)})\},
$$
where $\mathrm{sgn}(\cdot)$ denotes the sign of the variable element-wise. Thus
$$
y^{(k_j)}=|\xi^{(k_j+1)}|,
$$
where $\xi^{(k_j+1)}=B^\top b^{(k_j+1)}$ is in the range of $B^\top$.

Let $S$ be the set of indices corresponding to the $K$ smallest components of $|\xi^{(k_j+1)}|$. Hence,
$$
\sum_{i=1}^{n-K}(\sigma(y^{(k_j)}))_i = \|\xi^{(k_j+1)}_{S^c}\|_1
$$
and
$$
\sum_{i=n-K+1}^{n}(\sigma(y^{(k_j)}))_i = \|\xi^{(k_j+1)}_{S}\|_1.
$$
Since $B^\top$ has the RSP of order $K$ with the constant $\rho$, we have that $\|\xi^{(k_j+1)}_{S^c}\|_1\le \rho\|\xi^{(k_j+1)}_S\|_1$. Therefore,
\begin{equation}\label{eq:rsp1}
\sum_{i=1}^{n-K}(\sigma(y^{(k_j)}))_i \le \rho \sum_{i=n-K+1}^{n}(\sigma(y^{(k_j)}))_i.
\end{equation}
However, by the definition of $\sigma$, we have that
$$
\sum_{i=1}^{n-K}(\sigma(y^{(k_j)}))_i \ge (n-K)(\sigma(y^{(k_j)}))_{n-K+1}
$$
and
$$
\sum_{i=n-K+1}^{n}(\sigma(y^{(k_j)}))_i \le K (\sigma(y^{(k_j)}))_{n-K+1}.
$$
These inequalities together with the condition $(1+\rho)K<n$ lead to
$$
\sum_{i=1}^{n-K}(\sigma(y^{(k_j)}))_i > \rho \sum_{i=n-K+1}^{n}(\sigma(y^{(k_j)}))_i,
$$
which contradicts to \eqref{eq:rsp1}. This completes the proof of the proposition.
\end{proof}

From Proposition~\ref{lema:smallestToZero}, we conclude that a sparse solution is guaranteed via Algorithm~\ref{alg:matrix} if the transpose of $B$ satisfies the RSP. Next, we answer how sparse this solution will be. To this end, we introduce some notation and develop a technical lemma. For a vector $x \in \mathbb{R}^d$, we denote by $\tau(x)$ the set of the indices of non-zero elements of $x$, i.e., $\tau(x):=\{i: x_i \neq 0\}$. For a sequence $\{x^{(k)}: k \in \mathbb{N}\}$, a positive number $\mu$, and an integer $k$, we define $I_\mu(x^{(k)}):=\{i: |x_i^{(k)}|\ge \mu\}$.

\begin{lemma}\label{lema:<n}
Let $B$ be the $(m+1)\times n$ matrix defined by \eqref{eq:B}, let $F_\epsilon$ be the Log-Det function defined by \eqref{eq:L}, and let $\{x^{(k)}: k \in \mathbb{N}\}$ be the sequence generated by Algorithm~\ref{alg:matrix}. Assume that the matrix $B^\top$  has the RSP of order $K$ with $\rho>0$ satisfying $(1+\rho)K<n$. If there exist $\mu > \rho \epsilon n$   such that $|I_\mu(x^{(k)})|\geq K$ for all sufficient large $k$,  then there exists a $ k'' \in\mathbb{N}$ such that $\|x^{(k)}\|_0<n$ and $\tau (x^{(k+1)})\subseteq\tau(x^{(k'')})$ for all $k> k''$.
\end{lemma}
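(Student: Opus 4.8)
The plan is to prove the two conclusions separately, in each case by contradiction, using the optimality condition of the subproblem \eqref{model:subproblem} together with the RSP of $B^\top$.

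First note that the $\sigma$-comparison in the proof of Proposition~\ref{lema:smallestToZero} actually yields more than is stated there: since $(1+\rho)K<n$, \emph{every} vector $\xi$ in the range space of $B^\top$ has at least one zero coordinate. Indeed, taking $S$ to be the $K$ indices carrying the smallest values of $|\xi|$, if $|\xi|>0$ coordinatewise then $(\sigma(\xi))_{n-K+1}>0$ and the comparison gives $\|\xi_{S^c}\|_1>\rho\|\xi_S\|_1$, contradicting the RSP. Now suppose $x^{(k)}$ had all $n$ coordinates nonzero for some $k\ge 1$. Since $x^{(k)}$ solves \eqref{model:subproblem} with weights $\nabla F_\epsilon(|x^{(k-1)}|)$ --- which for the Log-Det function has all coordinates equal to $(|x^{(k-1)}_i|+\epsilon)^{-1}>0$ --- Fermat's rule and the subdifferential chain rule (as in the proof of Proposition~\ref{lema:smallestToZero}), together with the fact that $x^{(k)}$ has full support so that $\partial\|\cdot\|_1$ is single-valued at $\mathrm{diag}(\nabla F_\epsilon(|x^{(k-1)}|))x^{(k)}$, produce a vector $\xi^{(k)}=B^\top b^{(k)}$ in the range of $B^\top$ with $|\xi^{(k)}_i|=(|x^{(k-1)}_i|+\epsilon)^{-1}>0$ for every $i$. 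This contradicts the previous observation, so $\|x^{(k)}\|_0<n$ for all $k\ge 1$.

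Next, let $k_0$ be such that $|I_\mu(x^{(k)})|\ge K$ for all $k\ge k_0$, and I would show $\tau(x^{(k+1)})\subseteq\tau(x^{(k)})$ for every $k\ge k_0$. Fix such a $k$ and suppose, to the contrary, that $x^{(k)}_i=0$ but $x^{(k+1)}_i\ne 0$ for some $i$. The optimality condition for $x^{(k+1)}$ in \eqref{model:subproblem}, written as above, gives $\xi^{(k+1)}=B^\top b^{(k+1)}$ in the range of $B^\top$ with $|\xi^{(k+1)}_j|=(|x^{(k)}_j|+\epsilon)^{-1}$ for $j\in\tau(x^{(k+1)})$ and $|\xi^{(k+1)}_j|\le(|x^{(k)}_j|+\epsilon)^{-1}$ otherwise. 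Since $i\in\tau(x^{(k+1)})$ and $x^{(k)}_i=0$ we get $|\xi^{(k+1)}_i|=\epsilon^{-1}$. Pick $S\subseteq I_\mu(x^{(k)})$ with $|S|=K$; then $i\notin S$, so $\|\xi^{(k+1)}_{S^c}\|_1\ge\epsilon^{-1}$, while $|x^{(k)}_j|\ge\mu$ on $S$ gives $\|\xi^{(k+1)}_S\|_1\le K(\mu+\epsilon)^{-1}$. The RSP of $B^\top$ then forces $\epsilon^{-1}\le\rho K(\mu+\epsilon)^{-1}$, that is $\mu\le(\rho K-1)\epsilon$, contradicting $\mu>\rho\epsilon n>\rho\epsilon K>(\rho K-1)\epsilon$ (here $n>K$ follows from $(1+\rho)K<n$). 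Hence $\tau(x^{(k+1)})\subseteq\tau(x^{(k)})$ for $k\ge k_0$; iterating this inclusion from $k_0$ onward shows that with $k'':=k_0$ one has $\tau(x^{(k+1)})\subseteq\tau(x^{(k)})\subseteq\cdots\subseteq\tau(x^{(k'')})$ for all $k>k''$, which together with the first part finishes the proof.

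I expect the second part to be the main obstacle. The delicate points are: reading off from the non-smooth optimality condition of the reweighted $\ell_1$ subproblem the coordinatewise relation between $|\xi^{(k+1)}|$ and the reciprocal Log-Det weights (so that a zero coordinate of $x^{(k)}$ that leaves the support at step $k+1$ forces a large coordinate $\epsilon^{-1}$ of $\xi^{(k+1)}$, while the $K$ coordinates in $I_\mu(x^{(k)})$ keep the corresponding coordinates of $\xi^{(k+1)}$ below $(\mu+\epsilon)^{-1}$); and verifying that the threshold $\mu>\rho\epsilon n$ is precisely what makes the resulting inequality $\epsilon^{-1}\le\rho K(\mu+\epsilon)^{-1}$ impossible. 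The constraint qualification needed for the sum and chain rules of the subdifferential is a routine matter that I would handle exactly as in the proof of Proposition~\ref{lema:smallestToZero}.
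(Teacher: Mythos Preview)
Your proof is correct and takes a genuinely different, more elementary route than the paper's. The paper argues as follows: it invokes Proposition~\ref{lema:smallestToZero} (hence Theorem~\ref{thm:property}(iii) and boundedness of the sequence) to force $(\sigma(x^{(k)}))_n$ below $\min\{\mu/(\rho n)-\epsilon,\mu\}$ for large $k$; it then shows, via the optimality condition and the RSP bound $\sum_{i\in I_\mu}|(B^\top b)_i|\le W^*=n/\mu$, that the coordinate carrying this smallest value must vanish at the next iterate, and repeats to keep it zero thereafter. Your argument bypasses all the asymptotic machinery. In Part~1 you extract directly from the $\sigma$-comparison in the proof of Proposition~\ref{lema:smallestToZero} that no vector in the range of $B^\top$ can have full support, which immediately rules out $\|x^{(k)}\|_0=n$ for any $k\ge1$ (stronger than the paper's ``for $k>k''$''). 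In Part~2 you prove the monotone inclusion $\tau(x^{(k+1)})\subseteq\tau(x^{(k)})$ for every $k\ge k_0$, again stronger than what is stated, by showing that a coordinate that is \emph{exactly} zero cannot re-enter the support: the weight there is $\epsilon^{-1}$, while on any $K$ indices from $I_\mu(x^{(k)})$ the dual coordinates are at most $(\mu+\epsilon)^{-1}$, and the RSP inequality $\epsilon^{-1}\le\rho K(\mu+\epsilon)^{-1}$ is impossible because $\mu>\rho\epsilon n>(\rho K-1)\epsilon$.

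What each approach buys: the paper's argument is tailored to show that \emph{small} (not yet zero) coordinates are driven to zero, which is the mechanism feeding into the subsequent theorem; your argument shows that \emph{already-zero} coordinates stay zero, which suffices for the lemma once Part~1 guarantees at least one such coordinate. Your route avoids any appeal to Theorem~\ref{thm:property} or Proposition~\ref{lema:smallestToZero} (beyond recycling the counting inequality), so it does not use boundedness of $\{x^{(k)}\}$ at all. The coordinatewise reading of the optimality condition you rely on, namely $|\xi_j^{(k+1)}|=y_j^{(k)}$ on $\tau(x^{(k+1)})$ and $|\xi_j^{(k+1)}|\le y_j^{(k)}$ elsewhere, is exactly what the paper uses and is justified by the same subdifferential calculus.
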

\begin{proof}\ \
Set $y^{(k)}:=\nabla F_{\epsilon}(|x^{(k)}|)$. Since $x^{(k+1)}$ is a solution to the optimization problem~\eqref{model:subproblem}, then by Fermat's rule and the chain rule of subdifferential we have that
$$
0\in\mathrm{diag}(y^{(k)})\partial {\|\cdot\|_1}(\mathrm{diag}(y^{(k)})x^{(k+1)})+B^\top b^{(k+1)},
$$
where $b^{(k+1)}\in\partial \iota_C(B x^{(k+1)})$. Hence, if $x_i^{(k+1)}\neq 0$, we have that  $y^{(k)}_i = |(B^\top b^{(k+1)})_i|$.

For $i\in I_\mu(x^{(k)})$, we have that $|x_i^{(k)}| \ge \mu$ and $y^{(k)}_i=f'_{\epsilon}(|x^{(k)}_i|) \le f'_{\epsilon}(\mu)$  for all $k\in\mathbb{N}$, where $f_{\epsilon} = \log (\cdot +\epsilon)$.  Furthermore, there exist a $k'$ such that $|x_i^{k+1}|>0$ for $i\in I_\mu(x^{(k)})$ and $k\ge k'$ due to Item (iii) in Theorem \ref{thm:property}.  Thus, we have for all $k\ge k'$

\begin{eqnarray*}
\sum_{i\in I_\mu(x^{(k)})}{|(B^\top b^{(k+1)})_i|} &=& \sum_{i\in I_\mu(x^{(k)})} y^{(k)}_i \\
&\le& \sum_{i\in I_\mu(x^{(k)})} f'_{\epsilon}(\mu)\le W^*,
\end{eqnarray*}
where $W^* = n \lim_{\epsilon \rightarrow 0+} f_\epsilon'(\mu)=\frac{n}{\mu}$ is a positive number dependent on $\mu$.

Now, we are ready to prove $\|x^{(k)}\|_0<n$ for all $k> k''$. By Proposition~\ref{lema:smallestToZero}, we have that $(\sigma(x^{(k)}))_n \rightarrow 0$ when $k\rightarrow +\infty$. Therefore, there exists an integer $k''>k'$ such that $|I_\mu(x^{(k)})|\ge K$ and  $0\le \sigma(x^{(k)}))_n<\min\{\frac{\mu}{\rho n}-\epsilon, \mu\}$ for all $k\ge k''$.  Let $i_0$ be the index such that $|x_{i_0}^{(k'')}|=(\sigma(x^{(k'')}))_n$.  We will show that $x_{i_0}^{(k''+1)}=0$. If this statement is not true, that is, $x_{i_0}^{(k''+1)}$ is not zero, then
\begin{equation}\label{eq:contr1}
|(B^\top b^{(k''+1)})_{i_0}|=f'_{\epsilon}(|x^{(k'')}_{i_0}|)=\frac{1}{|x^{(k'')}_{i_0}|+\epsilon}>\rho W^*.
\end{equation}
However, since $i_0$ is not in the set $I_\mu(x^{(k'')})$ and $B^\top$ satisfies the RSP, we have that
\begin{eqnarray*}
|(B^\top b^{(k''+1)})_{i_0}|&\le& \sum_{i\notin I_\mu(x^{(k'')})}|(B^\top b^{(k''+1)})_{i}| \\
&\leq& \rho \sum_{i\in I_\mu(x^{(k'')})}|(B^\top b^{(k''+1)})_{i}| \le \rho W^*,
\end{eqnarray*}
which contradicts to \eqref{eq:contr1}. Hence, we have that $x_{i_0}^{(k''+1)}=0$ and $|\tau(x^{(k''+1)})|<n$. By replacing $k''$ by $k''+1$ and repeating this process we can obtain $x_{i_0}^{(k''+\ell)}=0$ for all $\ell\in\mathbb{N}$. Therefore, $\|x\|_0<n$ for all $k>k''$. This process can be also applied to other components satisfying $x_i^{(k''+1)}=0$. Thus there exists a $k'' \in\mathbb{N}$ such that $\tau(x^{(k)})\subseteq\tau(x^{(k'')})$ for all $k\ge k''$.
\end{proof}

With Lemma~\ref{lema:<n}, the next result shows that when the transpose of $B$ satisfies the RSP there exists a cluster point of the sequence generated by Algorithm~\ref{alg:matrix} that is sparse and satisfies the consistency condition.

\begin{theorem}
Let $B$ be the $(m+1)\times n$ matrix defined by \eqref{eq:B}, let $F_\epsilon$ be the Log-Det function defined by \eqref{eq:L}, and let $\{x^{(k)}: k \in \mathbb{N}\}$ be the sequence generated by Algorithm~\ref{alg:matrix}. Assume that the matrix $B^\top$  has the RSP of order $K$ with $\rho>0$ satisfying $(1+\rho)K<n$. Then there is a subsequence $\{x^{(k_j)}: j\in\mathbb{N}\}$ that converges to a $\lfloor (1+\rho)K\rfloor$-sparse solution, that is $(\sigma(x^{(k_j)}))_{\lfloor (1+\rho)K+1\rfloor}\rightarrow 0$ as $j\rightarrow +\infty$ and $\epsilon\rightarrow 0$.
\end{theorem}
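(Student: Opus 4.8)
The plan is to iterate the support‑reduction mechanism of Lemma~\ref{lema:<n}: each application strictly shrinks the coordinate set carrying the (eventual) support of the iterates, the range space property is inherited by the corresponding submatrix of $B$ (with the \emph{same} order $K$ and constant $\rho$, by the remark following the RSP definition), and the process is continued until the ambient dimension falls at or below $\lfloor(1+\rho)K\rfloor$ or the hypothesis of that lemma fails. Before starting, note that item (ii) of Theorem~\ref{thm:property} gives boundedness of $\{x^{(k)}\}$, so every subsequence has a convergent sub‑subsequence; consequently it suffices to exhibit, for each $\epsilon\in(0,1)$, a subsequence $\{x^{(k_j)}\}$ along which $\limsup_{j\to\infty}(\sigma(x^{(k_j)}))_{\lfloor(1+\rho)K\rfloor+1}$ is bounded by a quantity that vanishes as $\epsilon\to 0$ (and equals $0$ in the best case), since the limit of such a subsequence is then $\lfloor(1+\rho)K\rfloor$‑sparse.

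Next I would run the following recursion. Put $S_0=\{1,\dots,n\}$, $n_0=n$. At stage $t$ we have a subsequence along which $\tau(x^{(k)})\subseteq S_t$ with $|S_t|=n_t$ for all large $k$; by the remark $B_{S_t}^\top$ still has the RSP of order $K$ with constant $\rho$, and from that point on the iterates solve Algorithm~\ref{alg:matrix} for the reduced problem with matrix $B_{S_t}$ (a minimizer over $\mathbb{R}^n$ supported in $S_t$ minimizes the restriction). If $n_t\le\lfloor(1+\rho)K\rfloor$ we stop, since then $\|x^{(k)}\|_0\le\lfloor(1+\rho)K\rfloor$ and $(\sigma(x^{(k)}))_{\lfloor(1+\rho)K\rfloor+1}=0$. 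Otherwise $(1+\rho)K<n_t$ and two cases arise. If for some $\mu>\rho\epsilon n_t$ one has $|I_\mu(x^{(k)})|\ge K$ for all large $k$, then Lemma~\ref{lema:<n}, applied to the reduced problem, produces a fixed set $S_{t+1}$ with $|S_{t+1}|=n_{t+1}<n_t$ that contains $\tau(x^{(k)})$ for all large $k$, and we pass to stage $t+1$. If no such $\mu$ exists, then choosing e.g.\ $\mu=2\rho\epsilon n_t>\rho\epsilon n_t$ there are infinitely many $k$ with $|I_\mu(x^{(k)})|\le K-1$, whence $(\sigma(x^{(k)}))_{\lfloor(1+\rho)K\rfloor+1}\le(\sigma(x^{(k)}))_K<2\rho\epsilon n_t\le 2\rho\epsilon n$ along that subsequence, and letting $\epsilon\to0$ concludes. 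Since $n_t$ decreases strictly, the recursion halts after finitely many stages in one of these terminal situations; in either case a further convergent sub‑subsequence (boundedness) delivers the $\lfloor(1+\rho)K\rfloor$‑sparse cluster point, with Proposition~\ref{lema:smallestToZero} (the $n$‑th largest coordinate tends to $0$) acting as the engine inside each reduction step.

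The most delicate point is the interaction between the fixed algorithmic parameter $\epsilon$ and the limit $\epsilon\to 0$: for a fixed $\epsilon$ the residual term $\rho\epsilon n$ in Lemma~\ref{lema:<n} prevents pushing the small coordinates exactly to zero, so the second case above only yields a bound proportional to $\epsilon n$ and the sparsity conclusion is genuinely asymptotic in $\epsilon$. The remaining work is bookkeeping: checking that the Fermat‑rule/chain‑rule optimality characterization, the RSP with the unchanged order $K$ (which is exactly what keeps the induction closed), and the $|I_\mu|$‑type hypotheses all descend to the reduced problem on $B_{S_t}$, and arranging the subsequences extracted at successive stages to be nested so that a single final subsequence serves the whole argument.
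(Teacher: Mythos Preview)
Your proposal is correct and follows essentially the same route as the paper: repeatedly apply Lemma~\ref{lema:<n} to shrink the effective support, using the remark that $B_{S}^{\top}$ inherits the RSP of order $K$ with the same constant $\rho$, until the ambient dimension drops to $\lfloor(1+\rho)K\rfloor$ or the lemma's hypothesis fails. The only organizational difference is that the paper argues by contradiction (assume $(\sigma(x^{(k)}))_{\lfloor(1+\rho)K\rfloor+1}\ge\mu^{*}$ for all large $k$, which forces $|I_{\mu^{*}}(x^{(k)})|\ge K$ at every stage, reduce until $\tau(x^{(k)})=I_{\mu^{*}}(x^{(k)})$, then contradict Proposition~\ref{lema:smallestToZero}), whereas you run the same reduction directly and handle the failure of the $|I_{\mu}|\ge K$ hypothesis by the explicit bound $(\sigma(x^{(k)}))_{K}<2\rho\epsilon n$ along a subsequence; this is the same mechanism, just presented constructively rather than as a reductio.
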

\begin{proof}\ \
Suppose the theorem is \emph{false}. Then there exist $\mu^*$, for any  $0<\epsilon^*<\frac{\mu^*}{\rho n}$, there exist a $\epsilon \in (0, \epsilon^*)$ and $k'$ such that
$(\sigma(x^{(k)}))_{\lfloor (1+\rho)K+1\rfloor}\ge \mu^*$ for all $k\ge k'$. It implies that for all $k\ge k'$
\begin{equation}\label{eq:rho_k}
|I_{\mu^*}(x^{(k)})|\ge \lfloor (1+\rho)K+1\rfloor>(1+\rho)K> K.
\end{equation}
By Lemma \ref{lema:<n}, there exist a $k''\ge k'$ such that $\|x^{(k)}\|_0<n$ and $\tau(x^{(k+1)})\subseteq\tau (x^{(k'')})$ for all $k\ge k''$. Let $S=\tau(x^{(k'')})$. Thus $x^{(k)}_{S^c}=0$ for all $k\ge k''$. Therefore, the optimization problem \eqref{model:subproblem} for updating  $x^{(k+1)}$ can be reduced to the following one
\begin{equation}\label{eq:subp}
x_S^{k+1} \in \mathrm{arg}\min\{\langle (\nabla F_{\epsilon}(|x^{(k)}|))_S, u \rangle +\iota((B_S) u): u\in\mathbb{R}^{|S|}\}.
\end{equation}

If $|\tau(x^{(k'')})|>|I_{\mu^*}(x^{(k'')})|$, from \eqref{eq:rho_k} we have $(1+\rho)K<|S|$. Thus from Lemma \ref{lema:<n} and $B^\top_S$ having RSP with the same parameters, there exist a $k'''>k''$ such that  $\tau (x^{(k)})<\tau (x^{(k'')})$ for all $k\ge k'''$. Therefore, by induction, there must exist a $\tilde k$ such that for all $k\ge \tilde k$
 $$
 \tau (x^{(k)})=I_{\mu^*}(x^{(k)}), \; \tau (x^k)\subseteq \tau (x^{(\tilde k)}).
 $$
It means that for all $k\ge \tilde k$ all the nonzero components of $x^{(k)}$ are bounded below by $\mu^*$.
  Therefore, for any $k\ge \tilde k$, the updating equation \eqref{model:subproblem} is reduced by \eqref{eq:subp} with $S=I_{\mu^*}(x^{(k)})$. From Lemma \ref{lema:smallestToZero} we get $[\sigma(x^{(k)})]_{|S|}\rightarrow 0$ which contradicts with $|x_{|S|}^k|\ge \mu^*$. Therefore, we get this theorem.
\end{proof}


\section{An Implementation of Algorithm~\ref{alg:matrix}}\label{sec:complete-algorithm}
In this section, we describe in detail an implementation of Algorithm~\ref{alg:matrix} and show how to select the parameters of the associated  algorithm.

Solving problem~\eqref{model:subproblem} is the main issue for Algorithm~\ref{alg:matrix}. A general model related to \eqref{model:subproblem} is
\begin{equation}\label{model:general}
\min\{\|\Gamma x\|_1 + \varphi (Bx): x \in \mathbb{R}^n\},
\end{equation}
where $\Gamma$ is a diagonal matrix with positive diagonal elements and $\varphi$ is in $\Gamma_0(\mathbb{R}^{m+1})$. In particular, if we choose $\Gamma = \nabla F_\epsilon (|x^{(k)}|)$ and $\varphi = \iota_\mathcal{C}$, where $x^{(k)}$ is a vector in $\mathbb{R}^n$, $\epsilon$ is a positive number, $\mathcal{C}$ is given by \eqref{eq:C}, and $F_\epsilon$ is a function given by \eqref{def:F}, then model~\eqref{model:general} reduces to the optimization problem in Algorithm~\ref{alg:matrix}.

We solve model~\eqref{model:general} by using recently developed first-order primal-dual algorithm (see, e.g., \cite{Chambolle-Pock:JMIV11,Li-Micchelli-Shen-Xu:IP-12,Zhang-Burger-Osher:JSC:2011}). To present this algorithm, we need two concepts  in convex analysis,  namely,  the proximity operator and conjugate function. The proximity operator was introduced in \cite{moreau:RASPS:62}. For a function $f \in \Gamma_0(\mathbb{R}^d)$, the proximity operator of $f$ with parameter $\lambda$, denoted by $\mathrm{prox}_{\lambda f}$, is a mapping from $\mathbb{R}^d$ to itself, defined for a given point $x \in \mathbb{R}^d$ by
\begin{equation*}\label{eq:prox}
\mathrm{prox}_{\lambda f} (x):=\mathop{\mathrm{argmin}} \left\{\frac{1}{2\lambda} \|u-x\|^2_2 + f(u): u \in \mathbb{R}^d \right\}.
\end{equation*}
The conjugate of $f\in \Gamma_0(\mathbb{R}^d)$ is the function $f^* \in
\Gamma_0(\mathbb{R}^d)$ defined at $z \in \mathbb{R}^d$ by
$$
f^*(z):= \sup\{\langle x, z \rangle-f(x): x\in \mathbb{R}^d\}.
$$
With these notation, the first-order primal-dual (PD) method for solving \eqref{model:general} is summarized in Algorithm~\ref{Alg:PD} (referred to as PD-subroutine).

\begin{algorithm}\caption{ PD-subroutine (The first-order primal-dual algorithm for solving \eqref{model:general})}\label{Alg:PD}
 \begin{algorithmic}[htb]
   \State \textbf{Input}: the $(m+1) \times n$ matrix $B$ defined by \eqref{eq:B}; two positive numbers $\alpha$ and $\beta$ satisfying the relation $\alpha \beta < \frac{1}{\|B\|^2}$; the $n\times n$ diagonal matrix $\Gamma$ with all diagonal elements positive; and the function $\varphi \in \Gamma_0(\mathbb{R}^n)$.

   \State \textbf{Initialization}: $i=0$ and an initial guess $(u^{-1}, u^{0}, x^{0}) \in \mathbb{R}^{m+1} \times \mathbb{R}^{m+1} \times \mathbb{R}^{n}$
   \Repeat {($i\ge 0$)}
   \State Step 1: Compute $x^{i+1}$:
$$
x^{i+1}=\mathrm{prox}_{\alpha\|\cdot\|_1 \circ \Gamma} \left(x^{i}-\alpha B^\top (2u^{i}-u^{i-1})\right)
$$
   \State Step 2: Compute $u^{i+1}$:
$$
u^{i+1}=\mathrm{prox}_{\beta \varphi^*} (u^{i}+\beta B x^{i+1})
$$
    \State Step 3: Set $i:=i+1$.

   \Until{a given stopping criteria is met and the corresponding vectors $u^i$, $u^{i+1}$, and $x^{i+1}$ are denoted by $u^{cur}$, $u^{new}$, and $x^{new}$, respectively.}
    \State \textbf{Output}: $(u^{cur}, u^{new}, x^{new})=\mathrm{PD}(\alpha,\beta, B, \Gamma, \varphi, u^{-1}, u^{0}, x^{0})$
 \end{algorithmic}
\end{algorithm}

\begin{theorem}\label{thm:convergence-L1-Iota}
Let $B$ be an $(m+1) \times n$ matrix defined by \eqref{eq:B}, let $\mathcal{C}$ be the set given by \eqref{eq:C}, let $\alpha$ and $\beta$ be two positive numbers, and let $L$ be a positive such that $L \ge \|B\|^2$, where $\|B\|$ is the largest singular value of $B$. If
\begin{equation*}\label{eq:alpha-beta}
\alpha \beta L < 1,
\end{equation*}
then for any arbitrary initial vector $(x^{-1}, x^{0}, u^{0}) \in \mathbb{R}^n \times \mathbb{R}^n \times \mathbb{R}^{m+1}$, the sequence $\{x^k: k \in \mathbb{N}\}$ generated by Algorithm \ref{Alg:PD} converges to a solution of model~\eqref{model:general}.
\end{theorem}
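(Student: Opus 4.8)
The plan is to recognize Algorithm~\ref{Alg:PD} as a preconditioned proximal-point iteration for a maximal monotone operator whose zeros encode the primal--dual optimality conditions of \eqref{model:general}; equivalently, it is the first-order primal--dual scheme of \cite{Chambolle-Pock:JMIV11} (with relaxation parameter one) applied to the Fenchel dual of \eqref{model:general}. Write $g:=\|\cdot\|_1\circ\Gamma\in\Gamma_0(\mathbb{R}^n)$, so that \eqref{model:general} is $\min_x\{g(x)+\varphi(Bx)\}$ with $\varphi\in\Gamma_0(\mathbb{R}^{m+1})$; since $\mathrm{dom}\,\varphi=\mathcal{C}$ is polyhedral and, by feasibility of the model \eqref{model:SSZ}, meets $\mathrm{ran}\,B$, the chain rule $\partial(\varphi\circ B)(x)=B^\top\partial\varphi(Bx)$ holds and a primal--dual (saddle) point exists.

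First I would use Fermat's rule, the chain rule above, and the equivalence $u\in\partial\varphi(Bx)\Leftrightarrow Bx\in\partial\varphi^*(u)$ to show that $x^\star$ solves \eqref{model:general} if and only if $(x^\star,u^\star)$ is a zero of
$$\mathcal{A}(x,u):=\bigl(\partial g(x)+B^\top u,\ \partial\varphi^*(u)-Bx\bigr),$$
an operator on $\mathbb{R}^n\times\mathbb{R}^{m+1}$ that is maximal monotone (the subdifferential of the separable convex function $(x,u)\mapsto g(x)+\varphi^*(u)$ plus a bounded skew-symmetric linear map) and whose zero set is nonempty. Next I would read off the optimality conditions of the two $\mathrm{prox}$ steps of Algorithm~\ref{Alg:PD} and, after the harmless relabeling that treats the stored iterate $u^{i-1}$ as the current dual variable (so that Step~1 uses the \emph{extrapolated new} dual), rewrite one sweep of the algorithm as the single inclusion
$$0\in\mathcal{A}(v^{k+1})+M\,(v^{k+1}-v^k),\qquad M:=\begin{bmatrix}\tfrac1\alpha I_n & -B^\top\\[3pt] -B & \tfrac1\beta I_{m+1}\end{bmatrix},$$
that is, $v^{k+1}=(I+M^{-1}\mathcal{A})^{-1}v^k$, where $v^k=(x^k,u^k)$ up to the relabeling.

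The hypothesis $\alpha\beta L<1$ enters only here: since $L\ge\|B\|^2$, a Schur-complement computation gives $M\succ0\Leftrightarrow\tfrac1\beta I\succ\alpha B B^\top\Leftrightarrow\alpha\beta\|B\|^2<1$, which holds. Hence $\langle x,y\rangle_M:=\langle Mx,y\rangle$ is a genuine inner product on $\mathbb{R}^{n+m+1}$, $M^{-1}\mathcal{A}$ is maximal monotone in that geometry, and its resolvent $(I+M^{-1}\mathcal{A})^{-1}$ is firmly nonexpansive with respect to $\|\cdot\|_M$. Convergence of $\{v^k\}$ is then the standard argument for the proximal-point / averaged-operator iteration with nonempty fixed-point set, as in \cite{Chambolle-Pock:JMIV11,Li-Micchelli-Shen-Xu:IP-12,Zhang-Burger-Osher:JSC:2011}: $\{v^k\}$ is Fej\'er monotone relative to $\mathrm{zer}\,\mathcal{A}$ in the $M$-norm, hence bounded with $v^{k+1}-v^k\to0$; every cluster point lies in $\mathrm{zer}\,\mathcal{A}$ by demiclosedness of the maximal monotone $\mathcal{A}$; and Fej\'er monotonicity upgrades this to convergence of the whole sequence to some $(\hat x,\hat u)\in\mathrm{zer}\,\mathcal{A}$. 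By the first step $\hat x$ solves \eqref{model:general}, which is the claim for $\{x^k\}$.

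I expect the only genuine obstacle to be that first reformulation step: making the index bookkeeping of the two interleaved $\mathrm{prox}$ updates --- in particular the extrapolation $2u^i-u^{i-1}$ in Step~1 --- line up exactly with the symmetric preconditioner $M$ and the single monotone inclusion. Once that identification is in place, verifying $M\succ0$ from $\alpha\beta L<1$ is a one-line Schur complement and the convergence of the resulting preconditioned proximal-point iteration is classical. A minor point to state carefully is the constraint qualification (polyhedrality of $\mathcal{C}$ together with feasibility of \eqref{model:SSZ}) that validates the chain rule and makes $\mathrm{zer}\,\mathcal{A}$ nonempty.
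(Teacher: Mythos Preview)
Your proposal is correct and in fact goes well beyond what the paper does: the paper's own ``proof'' consists of a single sentence deferring to Theorem~1 of \cite{Chambolle-Pock:JMIV11} (or Theorem~3.5 of \cite{Li-Micchelli-Shen-Xu:IP-12}) and explicitly skips the argument. What you outline---the He--Yuan style recasting of the Chambolle--Pock iteration as a preconditioned proximal-point method for the maximal monotone primal--dual operator, with the step-size condition $\alpha\beta\|B\|^2<1$ entering precisely as positive definiteness of the preconditioner $M$ via a Schur complement---is exactly the standard proof of the cited result, so the two approaches are aligned; you simply unpack what the paper black-boxes. Your caveat about the index bookkeeping for the extrapolation $2u^i-u^{i-1}$ is well placed: Algorithm~\ref{Alg:PD} uses dual rather than primal extrapolation, but this is the symmetric variant of Chambolle--Pock and is handled by the same preconditioned-PPA identification after swapping the roles of the two blocks, so no new idea is needed there.
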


The proof of Theorem~\ref{thm:convergence-L1-Iota} follows immediately from Theorem 1 in \cite{Chambolle-Pock:JMIV11} or Theorem 3.5 in \cite{Li-Micchelli-Shen-Xu:IP-12}. We skip its proof here.

Both proximity operators $\mathrm{prox}_{\alpha\|\cdot\|_1 \circ \Gamma}$ and $\mathrm{prox}_{\beta \varphi^*}$ should be computed easily and efficiently in order to make the iterative scheme in Algorithm \ref{Alg:PD} numerically efficient.  Indeed, the proximity operator $\mathrm{prox}_{\alpha\|\cdot\|_1 \circ \Gamma}$ is given at $z \in \mathbb{R}^n$ as follows: for $j=1, 2, \ldots, n$
\begin{equation}\label{eq:prox-L1}
\left(\mathrm{prox}_{\alpha\|\cdot\|_1 \circ \Gamma}(z)\right)_j=\max\left\{|z_j|-\alpha \gamma_j,0\right\} \cdot \mathrm{sign}(z_j),
\end{equation}
where $\gamma_j$ is the $j$th diagonal element of $\Gamma$. Using the well-known Moreau decomposition (see, e.g. \cite{Bauschke-Combettes:11,moreau:RASPS:62})
\begin{equation}\label{identity-prox-dual}
\mathrm{prox}_{\beta \varphi^*} = I- \beta \; \mathrm{prox}_{\frac{1}{\beta} \varphi} \circ \left(\frac{1}{\beta} I\right),
\end{equation}
we can compute the proximity operator $\mathrm{prox}_{\beta \varphi^*}$ via $\mathrm{prox}_{\frac{1}{\beta} \varphi}$ which depends on a particular form of the function $\varphi$. As our purpose is to develop algorithms for the optimization problem in Algorithm~\ref{alg:matrix}, we need  to compute the proximity operator of $\iota^*_\mathcal{C}$ which is given in the following.
\begin{lemma}\label{lemma:iotaC}
If $\mathcal{C}$ is the set given by \eqref{eq:C} and $\beta$ is a positive number, then for $z \in \mathbb{R}^{m+1}$ we have that
\begin{equation}\label{eq:prox-dual-iotaC}
\mathrm{prox}_{\beta \iota^*_{\mathcal{C}}}(z)=(z_1-(z_1)_+, \ldots, z_m-(z_m)_+, z_{m+1}-\beta),
\end{equation}
where $(s)_+$ is $s$ if $s \ge 0$ and $0$ otherwise.
\end{lemma}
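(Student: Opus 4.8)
The plan is to compute $\mathrm{prox}_{\beta\iota_{\mathcal C}^*}$ by means of the Moreau decomposition identity \eqref{identity-prox-dual}, reducing everything to the computation of $\mathrm{prox}_{\frac1\beta\iota_{\mathcal C}}$. First I would recall that for any closed convex set $\mathcal C$ and any scalar $\lambda>0$ one has $\mathrm{prox}_{\lambda\iota_{\mathcal C}}=\mathrm{prox}_{\iota_{\mathcal C}}=P_{\mathcal C}$, the metric projection onto $\mathcal C$, since scaling the indicator function by a positive constant does not change it. So the whole problem collapses to: (a) identify $P_{\mathcal C}$ for the specific set $\mathcal C=\{z\in\mathbb R^{m+1}: z_{m+1}=1,\ z_i\ge 0,\ i=1,\dots,m\}$, and (b) substitute into \eqref{identity-prox-dual} and simplify.

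For step (a), observe that $\mathcal C$ is a Cartesian product: $\mathcal C=\mathbb R_+^m\times\{1\}$, i.e. the constraints on the first $m$ coordinates (nonnegativity) are completely decoupled from the constraint on the last coordinate (equality to $1$). Projection onto a product set is the product of the projections, so for $z=(z_1,\dots,z_m,z_{m+1})$ we get $(P_{\mathcal C}(z))_i=(z_i)_+$ for $i=1,\dots,m$ and $(P_{\mathcal C}(z))_{m+1}=1$. Then step (b): \eqref{identity-prox-dual} reads $\mathrm{prox}_{\beta\iota_{\mathcal C}^*}(z)=z-\beta\,P_{\mathcal C}\!\left(\tfrac1\beta z\right)$. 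Componentwise, for $i=1,\dots,m$ this gives $z_i-\beta\,\bigl(\tfrac1\beta z_i\bigr)_+ = z_i-(z_i)_+$ (using $\beta(s/\beta)_+=(s)_+$ for $\beta>0$), and for the last component $z_{m+1}-\beta\cdot 1=z_{m+1}-\beta$. This is exactly \eqref{eq:prox-dual-iotaC}.

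I would present the argument in that order: state that $\mathrm{prox}_{\frac1\beta\iota_{\mathcal C}}=P_{\mathcal C}$, compute $P_{\mathcal C}$ using the product structure, then invoke \eqref{identity-prox-dual} and simplify. The only point needing a line of care is the positive-homogeneity manipulation $\beta\bigl(\tfrac1\beta s\bigr)_+=(s)_+$, which is immediate since $(\cdot)_+$ is positively homogeneous of degree one; there is no real obstacle here — the lemma is essentially a bookkeeping exercise once one recognizes the product structure of $\mathcal C$ and has the Moreau decomposition in hand. If one prefers to avoid citing \eqref{identity-prox-dual}, an alternative is to compute $\mathrm{prox}_{\beta\iota_{\mathcal C}^*}$ directly from the definition, using $\iota_{\mathcal C}^*(u)=\sup_{z\in\mathcal C}\langle u,z\rangle = u_{m+1}+\sum_{i=1}^m (u_i)_+$ (with the convention that this is $+\infty$ unless... actually it is always finite here coordinatewise except it is finite; the support function of $\mathbb R_+$ is $\iota_{\mathbb R_{-}}$, so $\iota_{\mathcal C}^*(u)=u_{m+1}+\sum_{i=1}^m\iota_{(-\infty,0]}(u_i)$), and then minimizing $\frac1{2\beta}\|u-z\|_2^2+\iota_{\mathcal C}^*(u)$ separates across coordinates; but the Moreau-decomposition route is shorter and cleaner, so that is the one I would write up.
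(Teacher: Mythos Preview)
Your proposal is correct and follows essentially the same route as the paper: both arguments reduce $\mathrm{prox}_{\beta\iota_{\mathcal C}^*}$ to the projection $P_{\mathcal C}$ via the Moreau decomposition \eqref{identity-prox-dual}, exploit the separable (product) structure of $\mathcal C$ to compute that projection coordinatewise, and then substitute and simplify using the positive homogeneity of $(\cdot)_+$. The paper's write-up is slightly terser but the logic is identical.
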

\begin{proof}\ \ We first give an explicit form for the proximity operator $\mathrm{prox}_{\frac{1}{\beta}\iota_{\mathcal{C}}}$. Note that  $\iota_{\mathcal{C}} = \frac{1}{\beta}\iota_{\mathcal{C}}$ for $\beta>0$ and $\iota_{\mathcal{C}}(z) = \iota_{\{1\}}(z_{m+1}) + \sum_{i=1}^m \iota_{[0,\infty)}(z_i)$, for $z \in \mathbb{R}^{m+1}$. Hence, we have that
\begin{equation}\label{eq:prox-iotaC}
\mathrm{prox}_{\frac{1}{\beta}\iota_{\mathcal{C}}}(z)=((z_1)_+, (z_2)_+, \ldots, (z_m)_+, 1),
\end{equation}
where $(s)_+$ is $s$ if $s \ge 0$ and $0$ otherwise. Here we use the facts that $\mathrm{prox}_{\iota_{[0,+\infty)}}(s)=(s)_+$  and $\mathrm{prox}_{\iota_{\{1\}}}(s)=1$ for any $s \in \mathbb{R}$.

By the Moreau decomposition \eqref{identity-prox-dual}, we have that $\mathrm{prox}_{\beta \iota^*_{\mathcal{C}}}(z)=z-\beta \mathrm{prox}_{\frac{1}{\beta}\iota_{\mathcal{C}}}(\frac{1}{\beta}z)$. This together with equation~\eqref{eq:prox-iotaC} yields \eqref{eq:prox-dual-iotaC}.
\end{proof}

Next, we comment on the diagonal matrix $\Gamma$ in model~\eqref{model:general}. When the function $\varphi$ in model~\eqref{model:general} is chosen to be $\iota_C$, then the relation $a \varphi = \varphi$ holds for any positive number $a$. Hence, by rescaling the diagonal matrix $\Gamma$ in model~\eqref{model:general} with any positive number, that does not alter the solutions of model~\eqref{model:general}. Therefore, we can assume that the largest diagonal entry of $\Gamma$ is always equal to one.

In applications of Theorem~\ref{thm:convergence-L1-Iota} as in Algorithm \ref{Alg:PD}, we should make the product of $\alpha$ and $\beta$ as close to ${1}/{\|B\|^2}$ as possible. In our numerical simulations, we always set
\begin{equation}\label{eq:alpha-beta}
\alpha = \frac{0.999}{\beta \|B\|^2}.
\end{equation}
In such the way, $\beta$ is essentially the only parameter that needs to be determined.

Prior to computing $\alpha$ for a given $\beta$ by equation~\eqref{eq:alpha-beta}, we need to know the norm of the matrix $B$. When $\min \{m,n\}$ is small, the norm of the matrix $B$ can be computed directly. When $\min \{m,n\}$ is large, an upper bound of the norm of the matrix $B$ is estimated in terms of the size of $B$ as follows.

\begin{prop}\label{prop:norm}
Let $\Phi$ be an $m \times n$ matrix with i.i.d. standard Gaussian entries and $y$ be an $m$-dimensional vector with its component being $+1$ or $-1$. We define an $(m+1)\times n$ matrix $B$ from $\Phi$ and $y$ via equation \eqref{eq:B}. Then
$$
\mathbb{E} \{\|B\|\} \le \sqrt{m+1}(\sqrt{n}+\sqrt{m}).
$$
Moreover,
$$
\|B\| \le \sqrt{m+1}(\sqrt{n}+\sqrt{m}+t)
$$
holds with probability at least $1-2 e^{-t^2/2}$ for all $t \ge 0$.
\end{prop}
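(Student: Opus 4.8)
The plan is to factor $B$ as the product of a deterministic matrix depending only on $y$ and the Gaussian matrix $\Phi$, to evaluate the spectral norm of the deterministic factor exactly, and then to quote the classical bounds for the largest singular value of a standard Gaussian matrix.

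First, write $B = M\Phi$ with $M := \begin{bmatrix} \mathrm{diag}(y) \\ y^\top \end{bmatrix}$ the $(m+1)\times m$ matrix appearing in \eqref{eq:B}. Submultiplicativity of the spectral norm gives, for every realization of $\Phi$ and for every sign vector $y$,
$$
\|B\| \le \|M\|\,\|\Phi\|.
$$
Next, compute $\|M\|$. Since the entries of $y$ are $\pm 1$, we have $\mathrm{diag}(y)^\top\mathrm{diag}(y)=I_m$, so $M^\top M = I_m + yy^\top$. The symmetric matrix $I_m+yy^\top$ has the eigenvalue $1+\|y\|_2^2 = 1+m$ with eigenvector $y$ and the eigenvalue $1$ on the orthogonal complement of $y$; hence $\|M\|^2=\lambda_{\max}(M^\top M)=m+1$, i.e.\ $\|M\|=\sqrt{m+1}$. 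Note in particular that $\|M\|$ does not depend on which sign pattern $y$ is, so the bound $\|B\|\le\sqrt{m+1}\,\|\Phi\|$ holds pointwise regardless of any dependence of $y$ on $\Phi$.

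It remains to bound $\|\Phi\|$. For an $m\times n$ matrix with i.i.d.\ standard Gaussian entries the classical estimate $\mathbb{E}\{\|\Phi\|\}\le\sqrt{n}+\sqrt{m}$ holds, and since $\Phi\mapsto\|\Phi\|$ is $1$-Lipschitz with respect to the Frobenius norm, Gaussian concentration yields $\mathbb{P}\bigl(\|\Phi\|>\sqrt{n}+\sqrt{m}+t\bigr)\le e^{-t^2/2}$, which is in particular at most $2e^{-t^2/2}$. Taking expectations in $\|B\|\le\sqrt{m+1}\,\|\Phi\|$ proves the first inequality, and intersecting the event $\{\|\Phi\|\le\sqrt{n}+\sqrt{m}+t\}$ with the pointwise bound proves the second.

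The only step with any content is the evaluation $\|M\|=\sqrt{m+1}$, which hinges on $\mathrm{diag}(y)^2=I_m$ and the rank-one structure of $M^\top M$; the remaining ingredients are submultiplicativity of the spectral norm and the standard Davidson--Szarek/Gaussian-concentration estimates for $\|\Phi\|$, so no serious obstacle is anticipated.
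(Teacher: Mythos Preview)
Your proposal is correct and follows essentially the same route as the paper: factor $B=M\Phi$, bound $\|B\|\le\|M\|\,\|\Phi\|$, compute $\|M\|=\sqrt{m+1}$, and invoke the Davidson--Szarek estimates for $\|\Phi\|$. The only cosmetic difference is that the paper computes $MM^\top=\begin{bmatrix} I_m & 1_m\\ 1_m^\top & m\end{bmatrix}$ and cites an arrow-head eigenvalue result, whereas you compute $M^\top M=I_m+yy^\top$ and read off the spectrum directly from the rank-one perturbation; your version is slightly more self-contained.
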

\begin{proof}\ \ By the structure of the matrix $B$ in \eqref{eq:B}, we know that
$$
\|B\| \le \left\|\begin{bmatrix} \mathrm{diag}(y) \\ y^\top \end{bmatrix}\right\| \cdot \|\Phi\|.
$$
Therefore, we just need to compute the norms on the right-hand side of the above inequality. Denote by $I_m$ the $m \times m$ identity matrix and ${1}_m$ the vector with all its components being $1$. Then
$$
\begin{bmatrix} \mathrm{diag}(y) \\ y^\top \end{bmatrix} \begin{bmatrix} \mathrm{diag}(y) & y \end{bmatrix} =
\begin{bmatrix} I_m & {1}_m \\ {1}_m^\top & m \end{bmatrix},
$$
which is a special arrow-head matrix and has $m+1$ as its largest eigenvalue (see \cite{Shen-Suter:Eurasip:09}). Hence,
$$
\left\|\begin{bmatrix} \mathrm{diag}(y) \\ y^\top \end{bmatrix}\right\|=\sqrt{m+1}.
$$
Furthermore, by using random matrix theory for the matrix $\Phi$, we know that  $\mathbb{E} \{\|\Phi\|\} \le \sqrt{n}+\sqrt{m}$ and $\|\Phi\| \le \sqrt{n}+\sqrt{m}+t$ with probability at least $1-2 e^{-t^2/2}$ for all $t \ge 0$ (see, e.g., \cite{Davidson-Szarek:01}). This completes the proof of this proposition.
\end{proof}

Let us compute the norm of $B$ numerically for $100$ randomly generated matrices $\Phi$ and vectors $y$ for the pair $(m,n)$ with three different choices $(500,1000)$,  $(1000,1000)$, and $(1500,1000)$, respectively. Corresponding to these choices, the mean values of $\|B\|$ are about $815$, $1276$, and $1711$ while the upper bounds of the expected values of $\|B\|$ by Proposition~\ref{prop:norm} are about $1208$, $2001$, and $2726$, respectively. We can see that the norm of $B$ varies with its size and turns to be a big number when the value of $\min \{m,n\}$ is relatively large. As a consequence, the parameter $\alpha$ or $\beta$ must be very small relative to the other by equation~\eqref{eq:alpha-beta}. Therefore, in what follows, the used matrix $B$ in model~\eqref{model:general} is considered to have been rescaled in the following way:
\begin{equation}\label{eq:mormalize-B}
\frac{B}{\|B\|} \quad \mbox{or} \quad \frac{B}{\sqrt{m+1}(\sqrt{n}+\sqrt{m})}
\end{equation}
when the norm of $B$ can be computed easily or not.

%
%

The complete procedure for model~\eqref{model:matrix} and how the PD-subroutine is employed are summarized in Algorithm~\ref{alg:matrix-final}.
\begin{algorithm}\caption{(Iterative scheme for model~\eqref{model:matrix})}\label{alg:matrix-final}
 \begin{algorithmic}[htb]
   \State \textbf{Input}: the $(m+1) \times n$ matrix $B$ formed by an $m \times n$ matrix $\Phi$ and an $m$-dimensional vector $y$ via \eqref{eq:B}; the set $\mathcal{C}$ given by \eqref{eq:C}; $\epsilon \in (0,1)$, and $\tau>0$; $\alpha_{\max}$ and $\epsilon_{\min}$ be two real numbers; the maximum iteration number $k_{\max}$.

   \State \textbf{Initialization}: normalizing $B$ according to \eqref{eq:mormalize-B}; $\Gamma$ being the $n \times n$ identity matrix;  an initial guess $(u^{old_{0}}, u^{cur_0}, x^{(0)}) \in \mathbb{R}^{m+1} \times \mathbb{R}^{m+1} \times \mathbb{R}^{n}$; and initial parameters $\beta$ and $\alpha=0.999/\beta$.
   \While{$k < k_{\max}$}

    \State Step 1: Compute
    \begin{eqnarray*}
    &&(u^{old_{k+1}}, u^{cur_{k+1}}, x^{(k+1)})\\
    &=&\mathrm{PD}(\alpha,\beta, B, \Gamma, \iota_{\mathcal{C}}, u^{old_{k}}, u^{cur_k}, x^{(k)})
    \end{eqnarray*}

    \State Step 2: Update $\Gamma$ as the scaled matrix $\mathrm{diag}(\nabla F_\epsilon(x^{(k+1)}))$ such that the largest diagonal element of $\Gamma$ is one.

    \State Step 3: If $\alpha < \alpha_{\max}$, update $\alpha \leftarrow 2\alpha, \quad \beta \leftarrow \beta/2$;
    if $\epsilon > \epsilon_{\min}$, update $\epsilon \leftarrow \tau \epsilon$;

    \State Step 4: Update $k \leftarrow k+1$.
\EndWhile
    \State \textbf{Output}: $x^{(k_{\max})}$
 \end{algorithmic}
\end{algorithm}

\section{Numerical Simulations}\label{sec:experiments}
In this section, we demonstrate the performance of Algorithm~\ref{alg:matrix-final} for 1-bit compressive sampling reconstruction in terms of accuracy and consistency and compare it with the BIHT algorithm.

Through this section, all random $m \times n$ matrices $\Phi$ and length-$n$, $s$-sparse vectors $x$ are generated based on the following assumption: entries of $\Phi$ and $x$ on its support are i.i.d. Gaussian random variables with zero mean and unit variances. The locations of the nonzero entries (i.e., the support) of $x$ are randomly permuted. We then generate the 1-bit observation vector $y$ by equation \eqref{model:1bit}. We obtain reconstruction of $x^\star$ from $y$ by using the BIHT and Algorithm~\ref{alg:matrix-final}. The quality of the reconstructed $x^\star$ is measured in terms of the signal-to-noise ratio (SNR) in dB
$$
\mathrm{SNR}(x,x^\star) = 20 \log_{10} \left(\left\|\frac{x}{\|x\|}\right\|_2/\left\|\frac{x}{\|x\|}-\frac{x^\star}{\|x^\star\|}\right\|_2\right).
$$
The accuracy of the BIHT and Algorithm~\ref{alg:matrix-final} is measured by the average of SNR values over 100 trials unless otherwise noted. For all figures in this section, results by  the BIHT and Algorithm~\ref{alg:matrix-final} with the Mangasarian function \eqref{eq:Mangasarain} and the Log-Det function \eqref{eq:L} are marked by the symbols ``$\triangledown$'', ``$\circ$'',  and ``$\star$'', respectively.

\subsection{Effects of using inaccurate sparsity on the BIHT}

The BIHT requires the availability of the sparsity of the underlying signals. This requirement is, however, not known in practical applications. In this subsection, we demonstrate through numerical experiments that the mismatched sparsity for a signal will degenerate the performance of the BIHT.

To this end, we fix $n=1000$ and $s=10$ and consider two cases of $m$ being $500$ and $1000$. For each case, we vary the sparsity input for the BIHT from $8$ to $12$ in which $10$ is the only right choice. Therefore, there are total ten configurations. For each configuration, we record the SNR values and the numbers of sign constraints not being satisfied of the reconstructed signals by the BIHT.

Figure~\ref{fig:VaryingM-BIHT} depicts the SNR values of the experiments. The plots in the left column of Figure~\ref{fig:VaryingM-BIHT} are for the case $m=500$ while the plots in the right column are for the case $m=1000$. The marks in each plot represent the pairs of the SNR values with a mismatched sparsity input (i.e., $s=8$, $s=9$, $s=11$, or $s=12$ corresponding to the row 1, 2, 3, or 4) and with the correct sparsity input (i.e., $s=10$). A mark below the red line indicates that the BIHT with the correct sparsity input works better than the one with an incorrect sparsity input. A mark that is far away from the red line indicates the BIHT with the correct sparsity input works much better than the one with an incorrect sparsity input or vice versa.  Except the second plot in the left column, we can see that the BIHT with the correct sparsity input performs better than the one with an inaccurate sparsity input. In particular, when an underestimated sparsity input to the BIHT is used, the performance of the BIHT will be significantly reduced (see the plots in the first two columns of Figure~\ref{fig:VaryingM-BIHT}). When an overestimated sparsity input to the BIHT is used, majority marks are under the red lines and are relatively closer to the red lines than those from the BIHT with underestimated sparsity input. We further report that the average SNR values for the sparsity input $s=8$, $9$, $10$, $11$, and $12$ for $m=500$ are $21.89$dB, $24.18$dB, 	 $23.25$dB, $22.10$dB, and $21.00$dB, respectively. Similarly, for $m=1000$, the average SNR values for the sparsity input $s=8$, $9$, $10$, $11$, and $12$ are $19.77$dB, $26.37$dB, 	$34.74$dB,	$31.12$dB, and $29.46$dB, respectively.

\begin{figure}[h]
\centering
\begin{tabular}{cc}
\scalebox{0.36}{\includegraphics{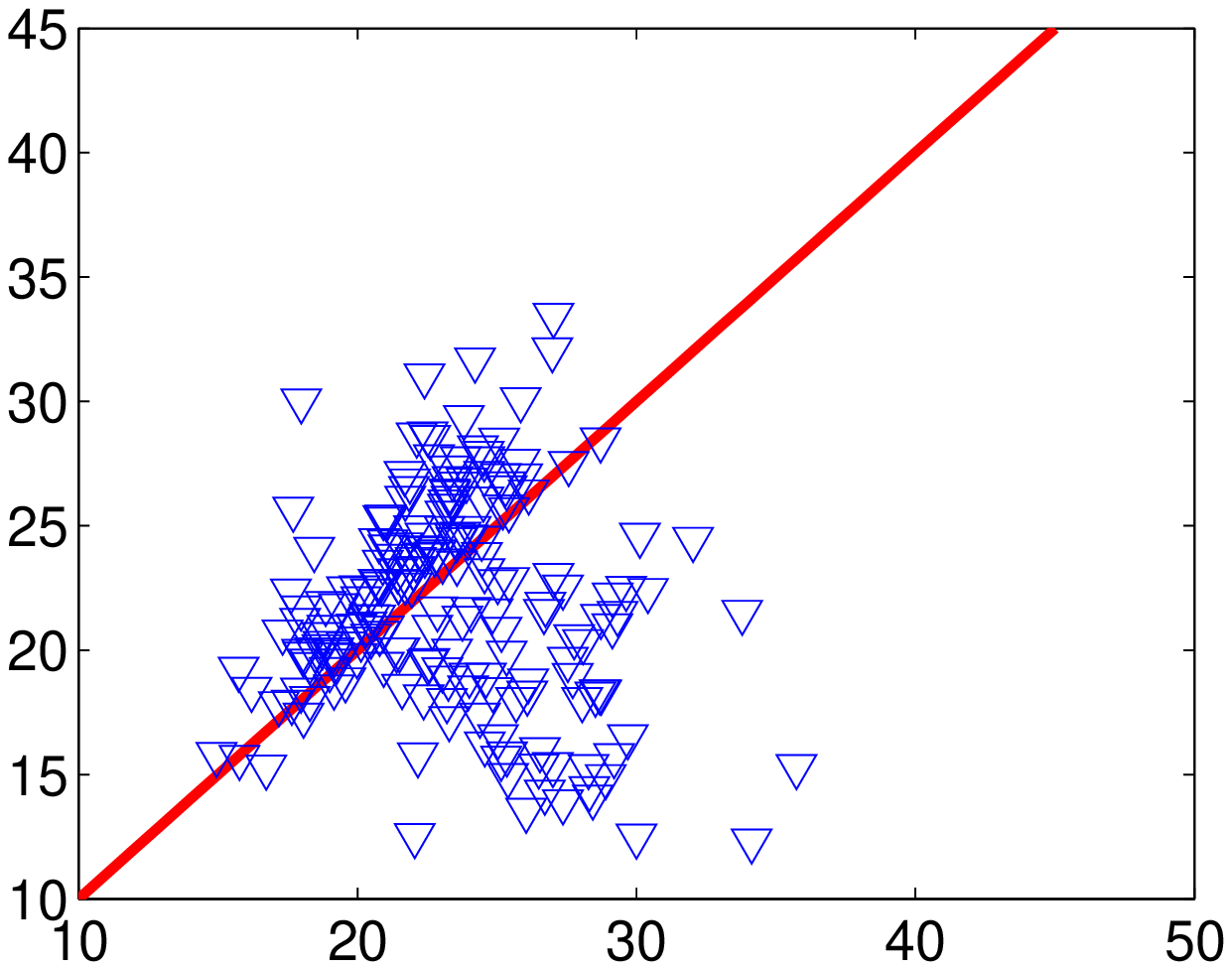}}&\scalebox{0.36}{\includegraphics{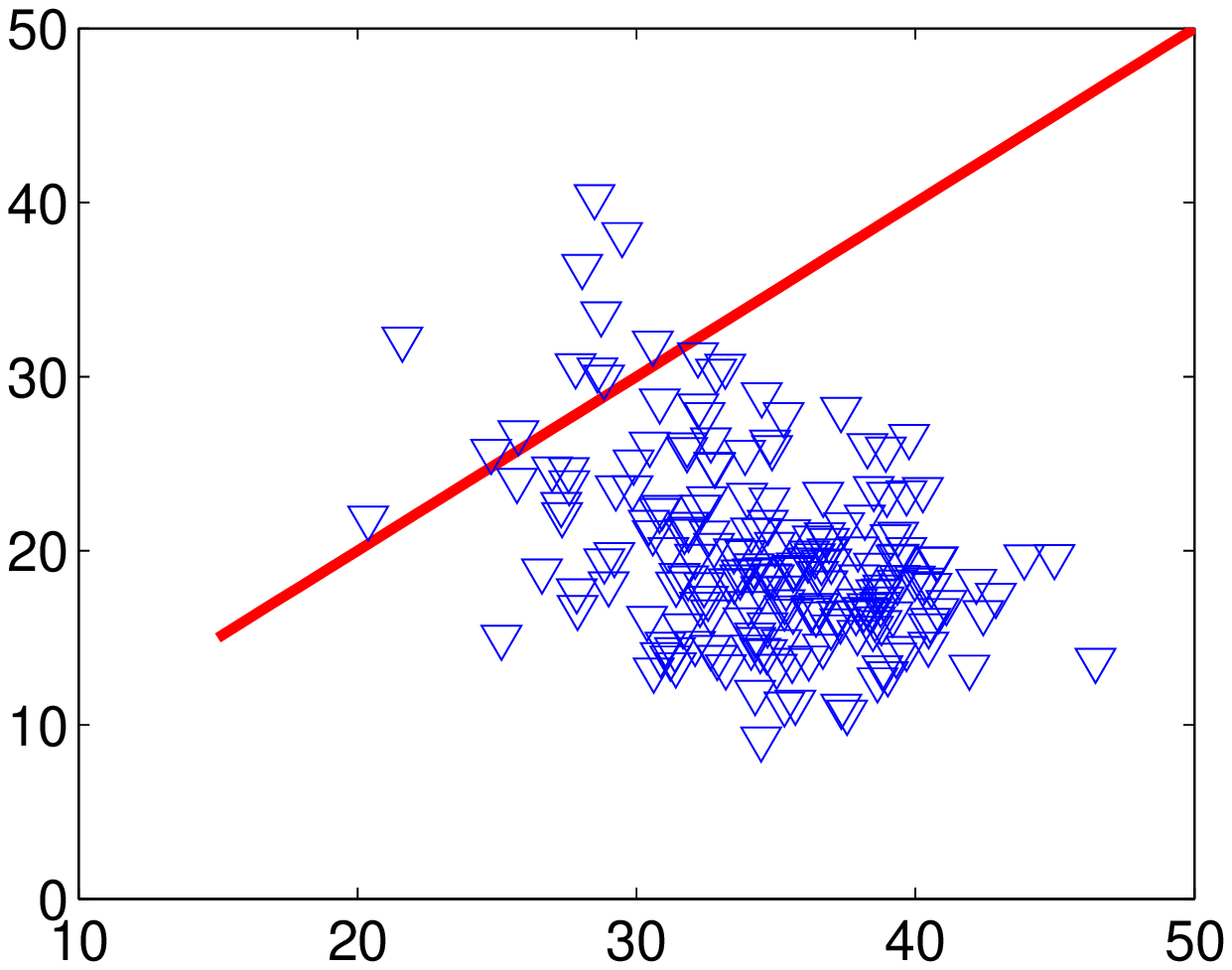}}\\
\scalebox{0.36}{\includegraphics{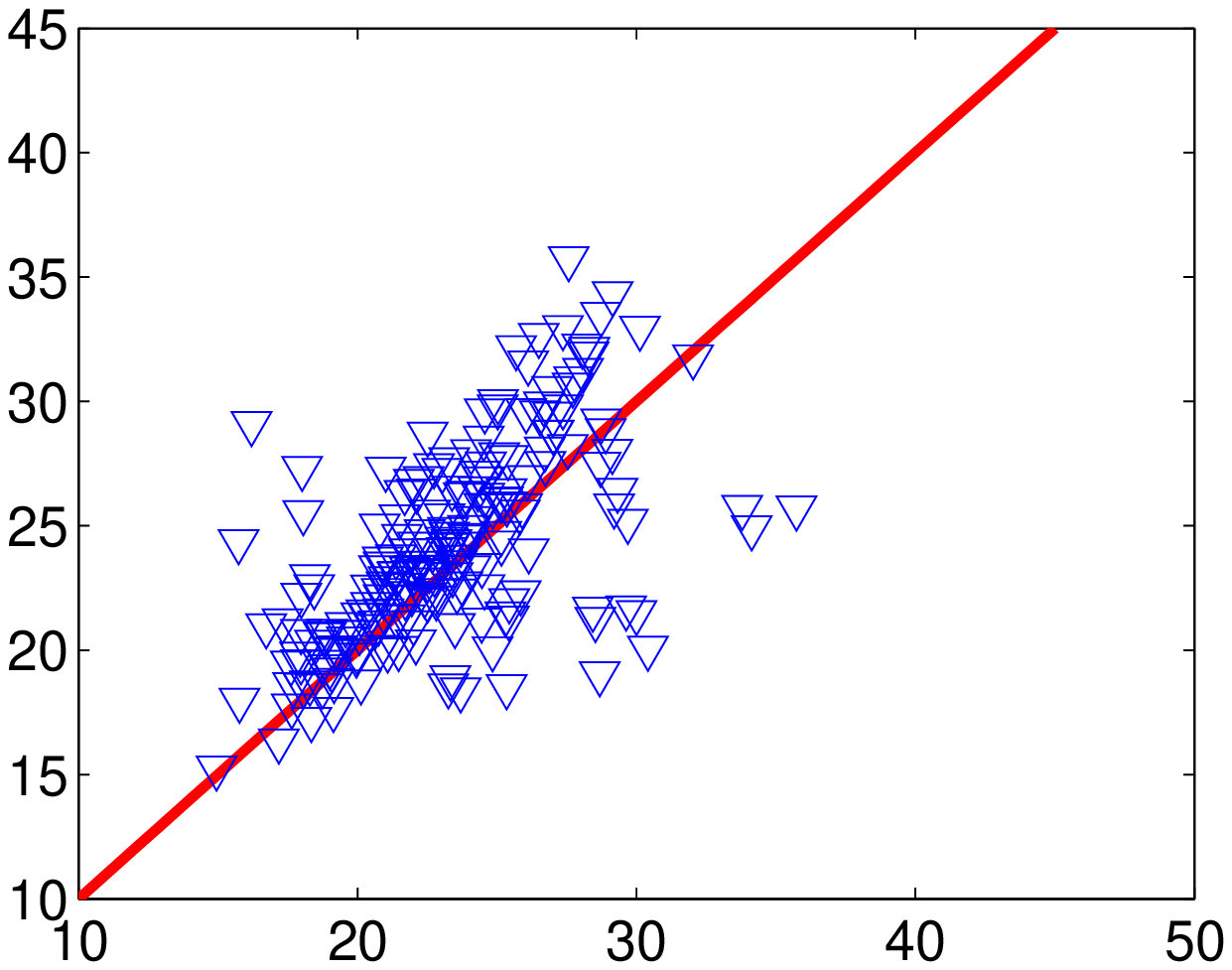}}&\scalebox{0.36}{\includegraphics{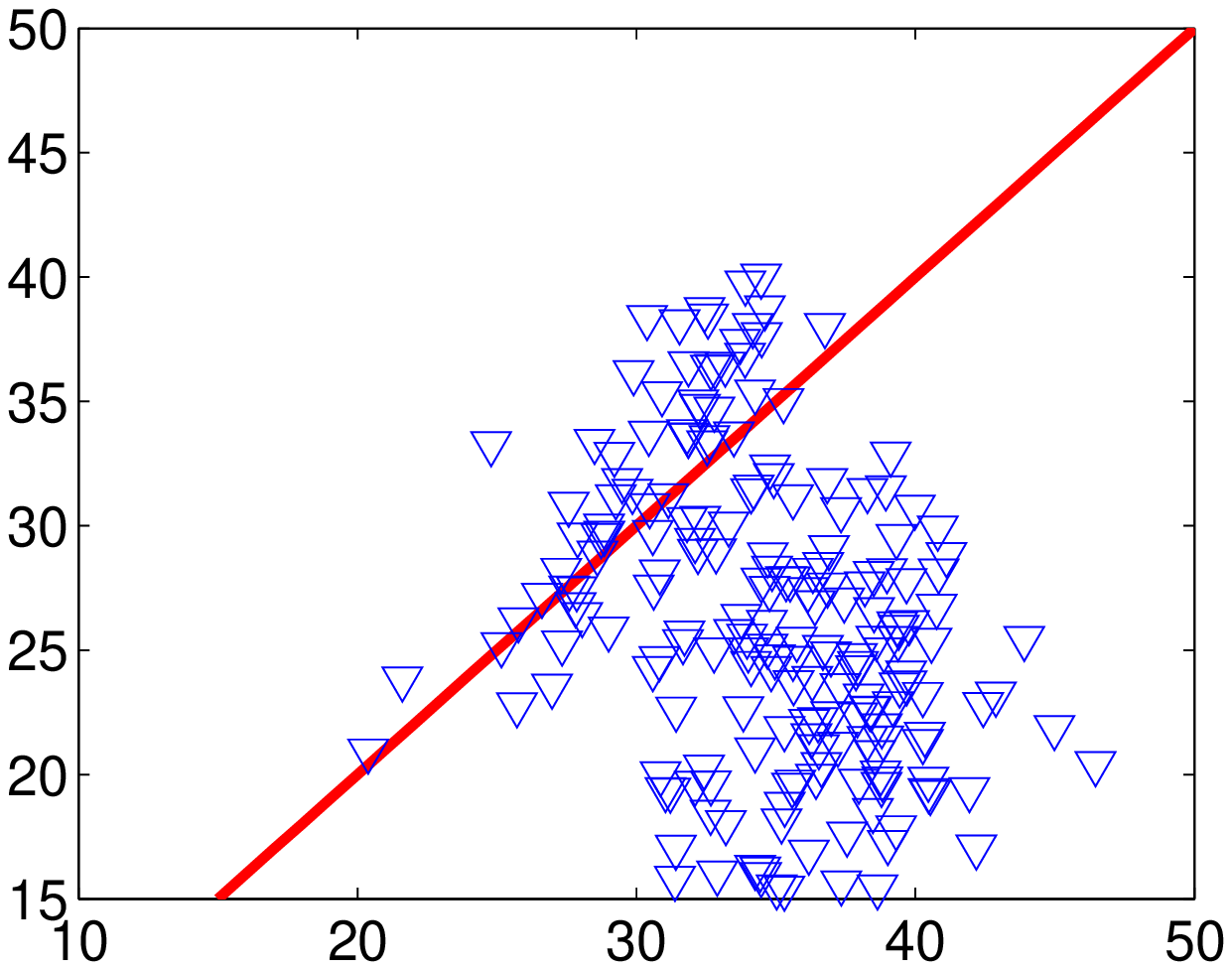}}\\
\scalebox{0.36}{\includegraphics{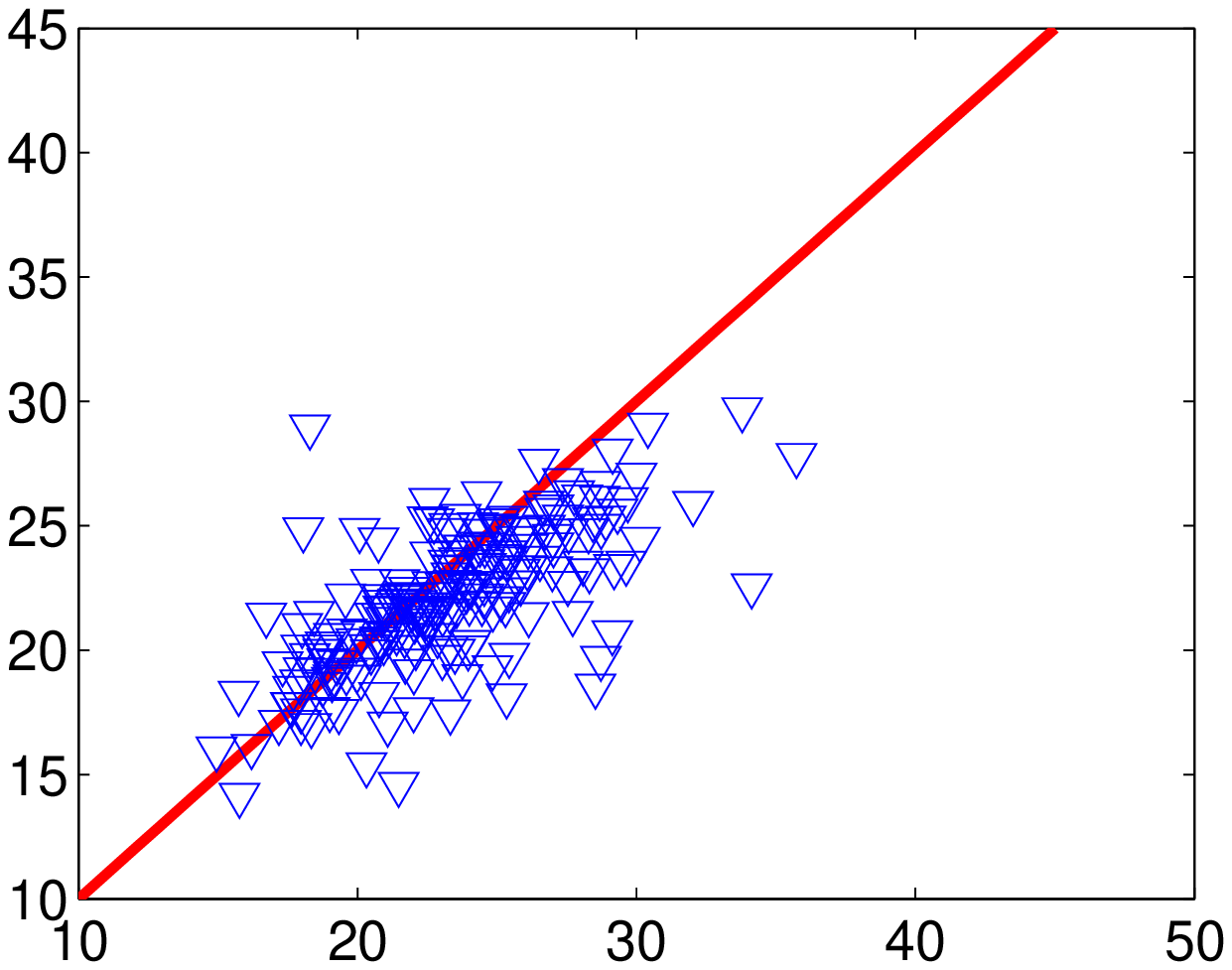}}&\scalebox{0.36}{\includegraphics{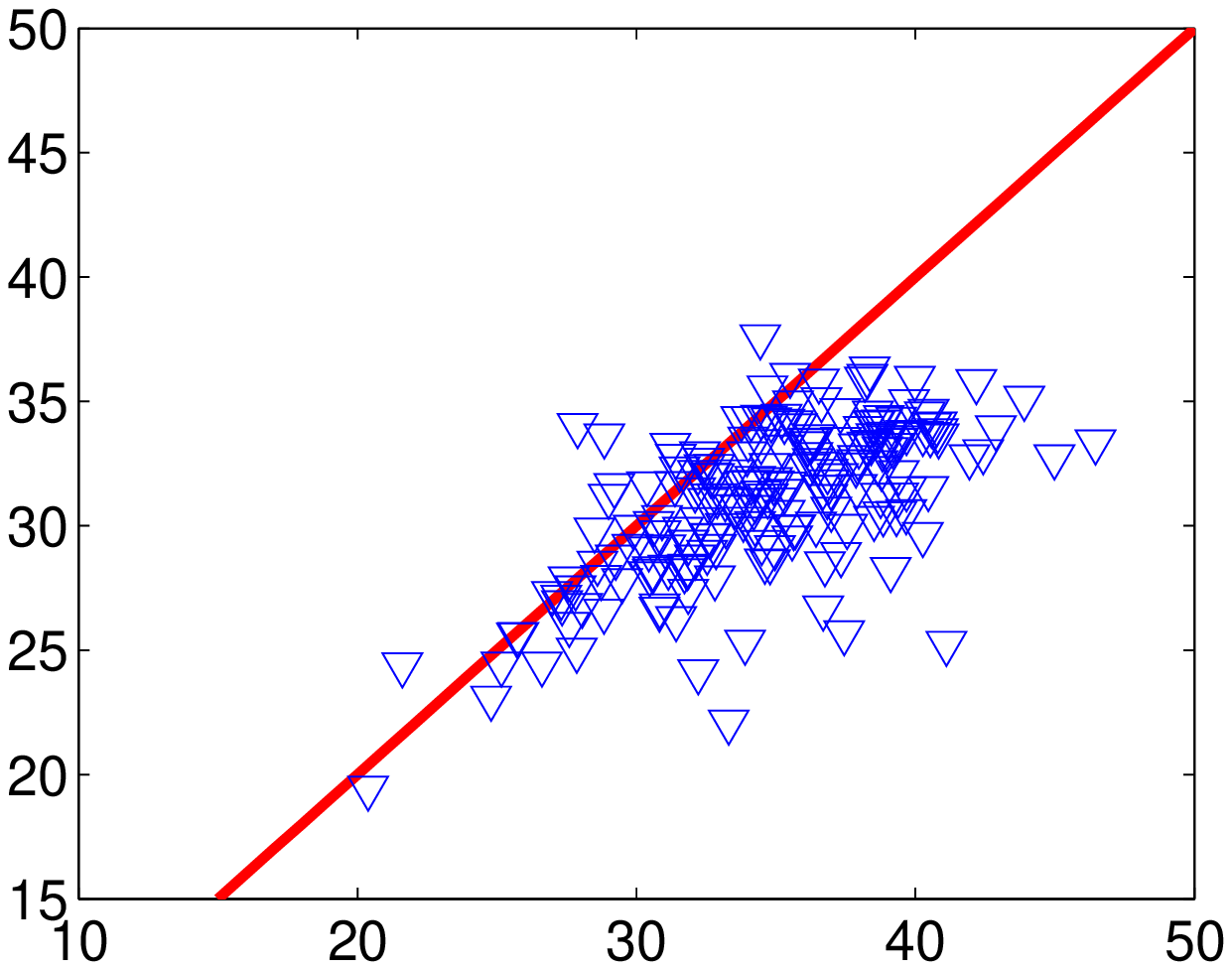}}\\
\scalebox{0.36}{\includegraphics{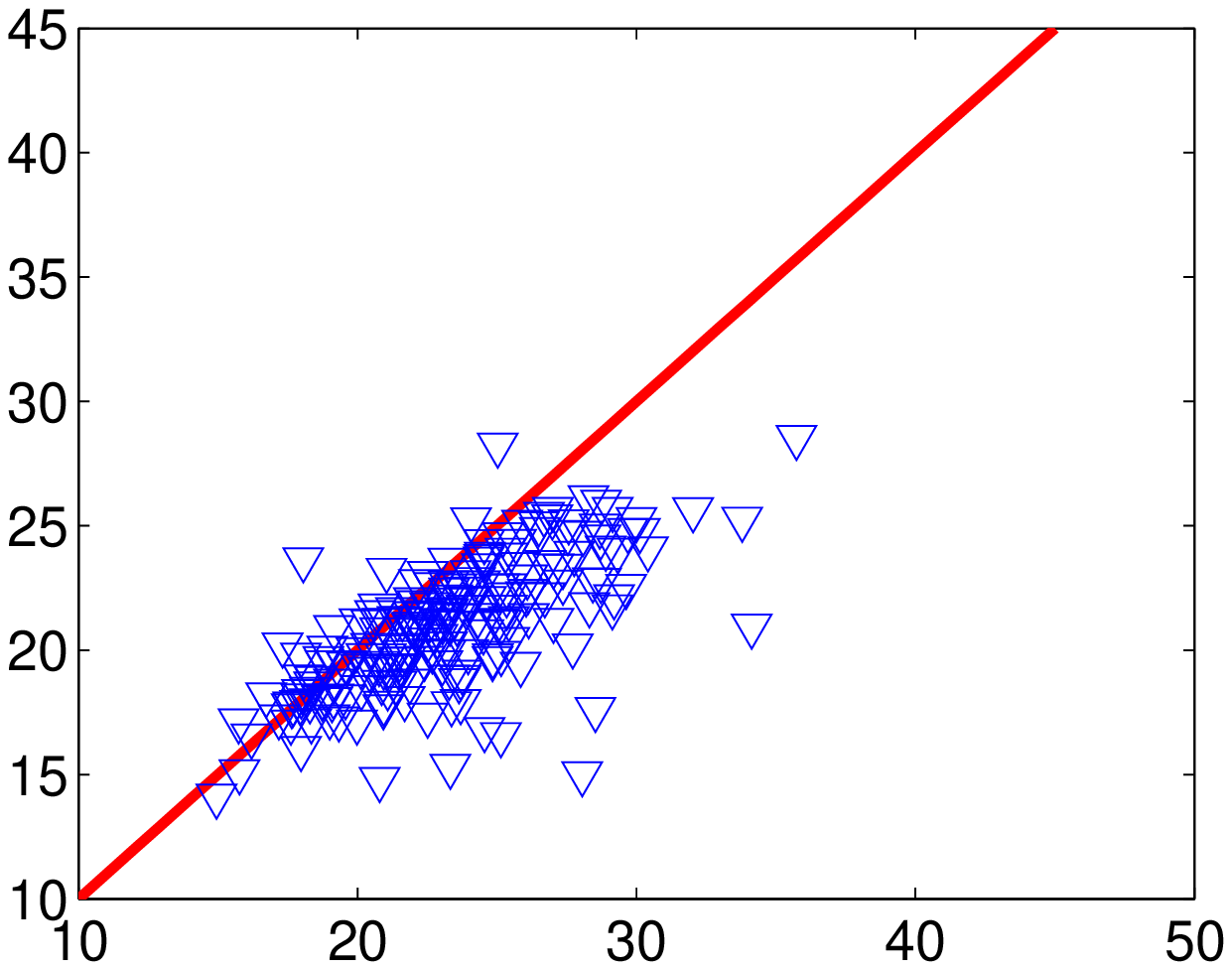}}&\scalebox{0.36}{\includegraphics{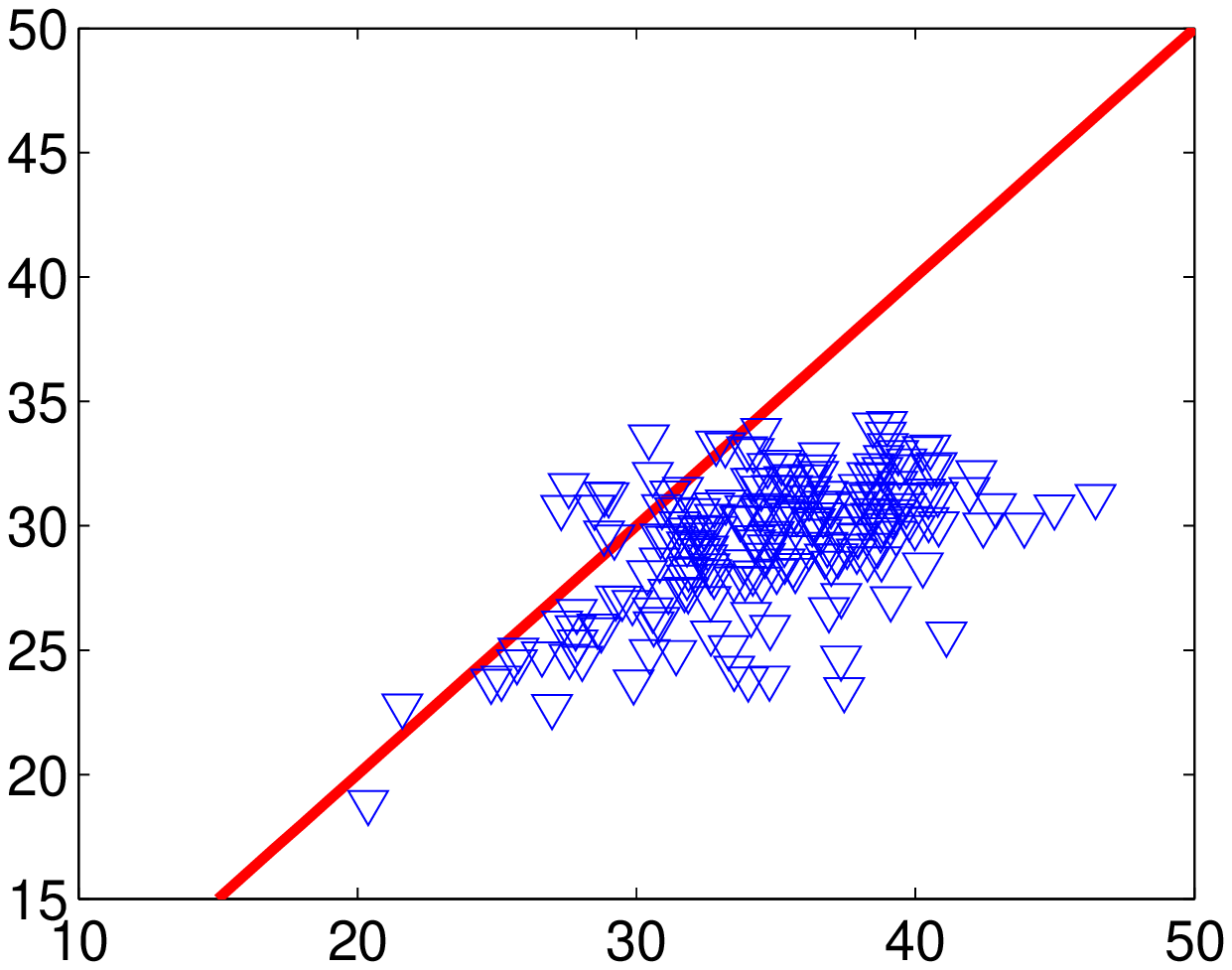}}\\\
(a) $m=500$  & (b) $m=1000$
\end{tabular}
\caption{The marks in the plots (from the top row to the bottom row) represent the pairs of the SNR values from the BIHT with the correct sparsity input (i.e., $s=10$) and incorrect sparsity inputs 8, 9, 11, and 12, respectively. We fix $n=1000$.}
\label{fig:VaryingM-BIHT}
\end{figure}

Figure~\ref{fig:VaryingM-BIHT-Cons} (a) and (b) illustrate the histograms of the numbers of unsatisfied consistency conditions over 200 trials for $m=500$ and $1000$, respectively. We can see from Figure~\ref{fig:VaryingM-BIHT-Cons} (a) that the use of an underestimated sparsity constraint ($s=8$ or $9$) will tend to yield, on average, a solution with a large amount of sign constraints unsatisfied, in other words, under the current setting the solution to model~\eqref{model:BIHT} via the BIHT does not satisfy equation \eqref{model:1bit}. As expected, when an overestimated sparsity constraint ($s=11$ or $12$) is used, the sign constraints are usually satisfied.

In summary, we conclude that a proper chosen sparsity constraint is critical for the success of the BIHT.

\begin{figure}[h]
\centering
\begin{tabular}{cc}
\scalebox{0.46}{\includegraphics{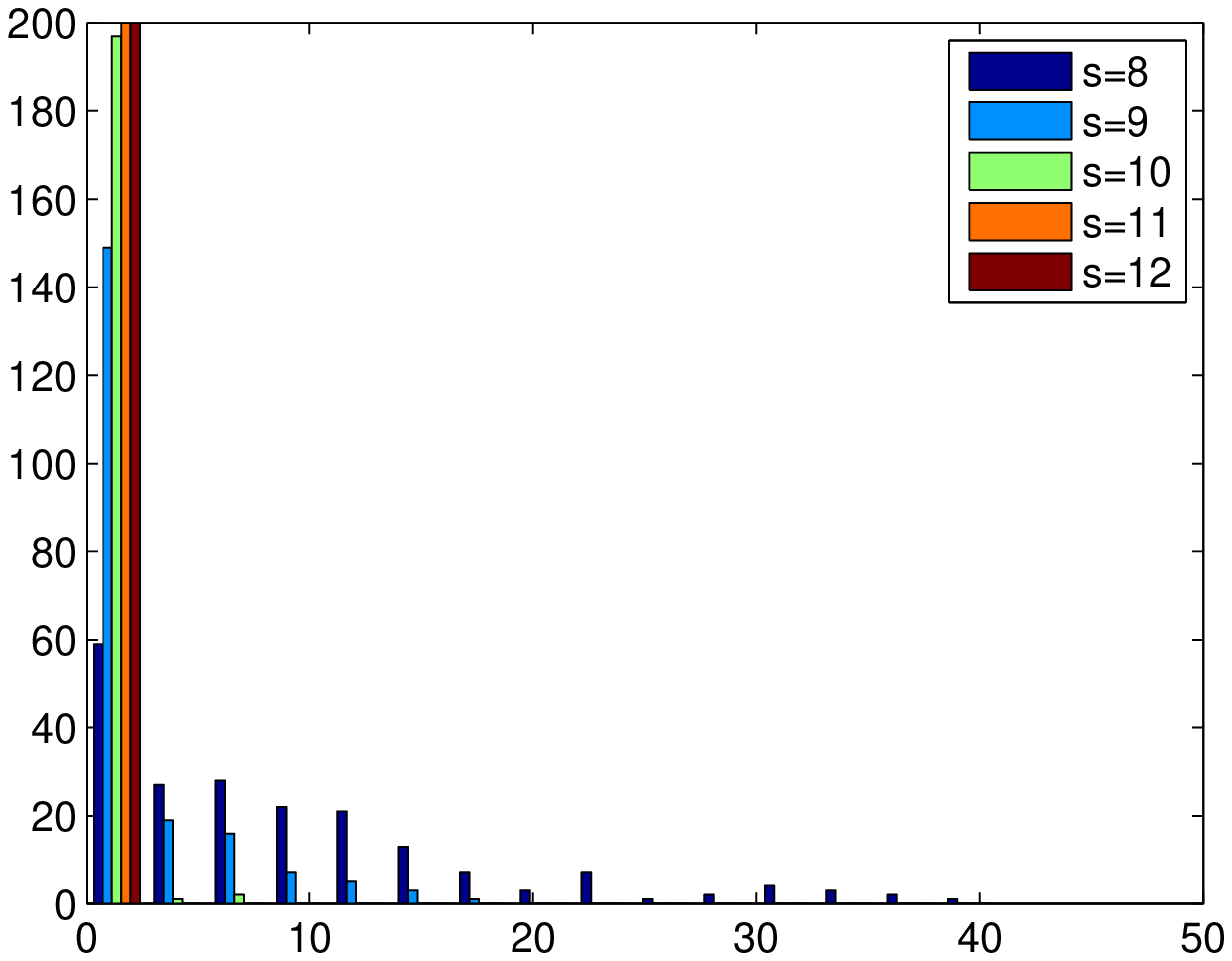}}&
\scalebox{0.46}{\includegraphics{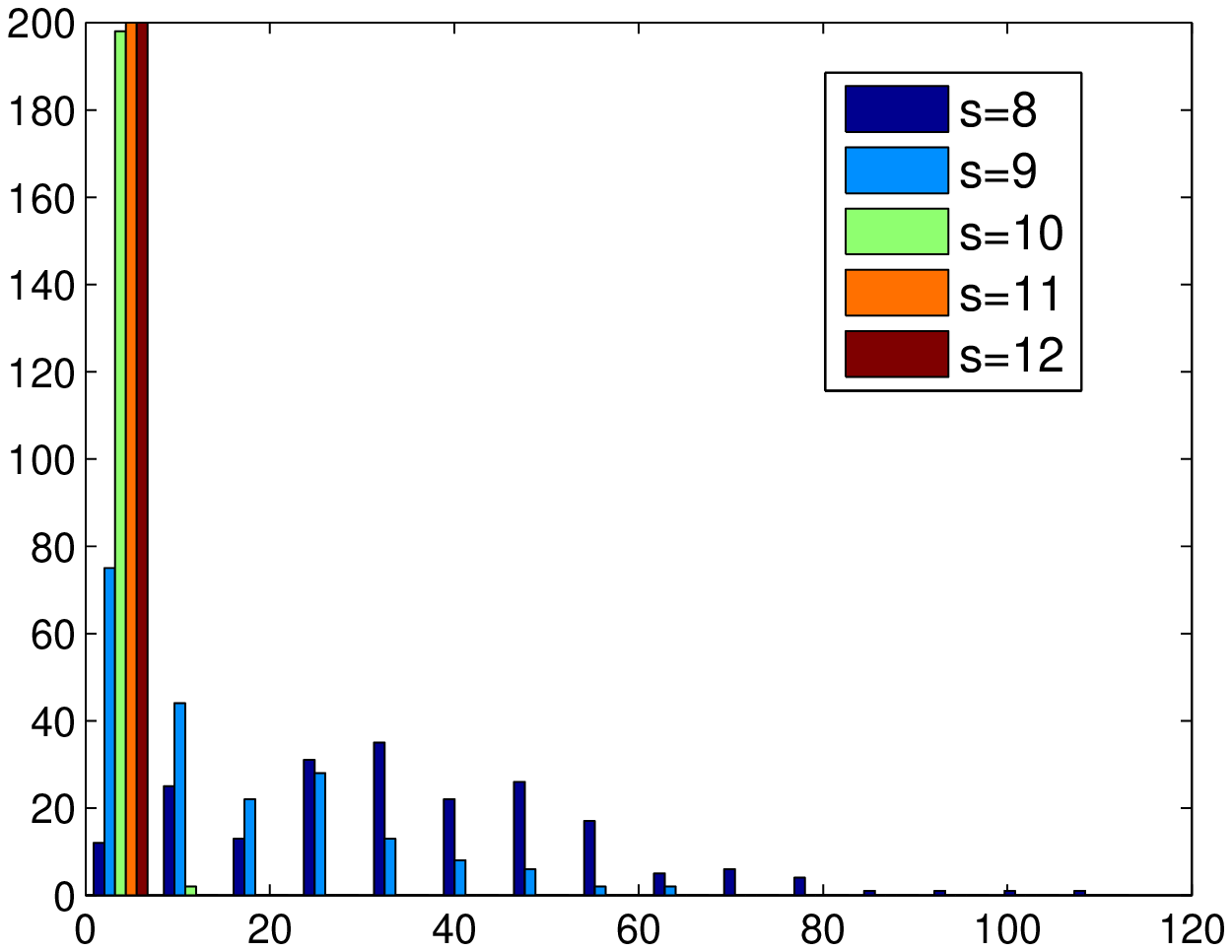}}\\
(a) &  (b)

\end{tabular}
\caption{The histograms of the numbers of unsatisfied consistency conditions over 200 trials with (a) $(m,n)=(500, 1000)$  and (b) $(m,n)=(1000, 1000)$.}
\label{fig:VaryingM-BIHT-Cons}
\end{figure}

\subsection{Plan-Vershynin's model for 1-bit reconstruction}
Both our model~\eqref{model:SSZ} and Plan-Vershynin's model~\eqref{model:plan} use the same constraint conditions. Their objective functions are different. Our model uses the $\ell_0$-norm while Plan-Vershynin's model uses the $\ell_1$-norm. As suggested in \cite{Plan-Vershynin:11}, linear programming can be applied for the Plan-Vershynin model. We report here some numerical results for this model.

In our simulations, we fix $n=1000$, $m=1000$, and $s=10$. All simulations were performed 100 trials. Figure~\ref{fig:PlanVershynin} illustrates the sparsity of the reconstructions of all trials which are clearly greater than 10 (indicated by the solid red line in the figure). The average sparsity  of the reconstructions over 100 trials is $23.42$. Recall that the average SNR values of all reconstructions by the BIHT is $34.74$dB.

\begin{figure}[h]
\centering
\begin{tabular}{c}
\scalebox{0.33}{\includegraphics{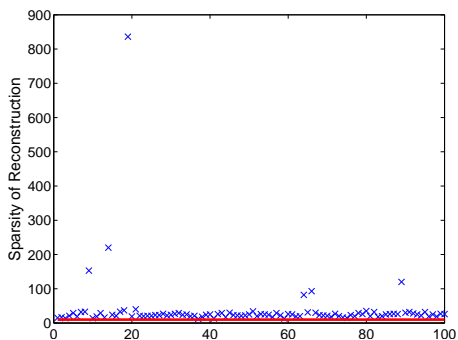}}
\end{tabular}
\caption{Results for Plan-Vershynin's model using Linear programming over 100 trials. }
\label{fig:PlanVershynin}
\end{figure}



\subsection{Performance of Algorithm~\ref{alg:matrix-final}}

Prior to applying Algorithm~\ref{alg:matrix-final} for 1-bit compressive sampling problem, parameters $k_{\max}$, $\tau$, $\alpha_{\max}$, $\epsilon_{\min}$, $\alpha$, and $\epsilon$ in Algorithm~\ref{alg:matrix-final} need to be determined.  Under the aforementioned setting for the random matrix $\Phi$ and sparse signal $x$, we fix $k_{\max}=17$, $\tau=\frac{1}{2}$, $\alpha_{\max} = 8000$, $\epsilon_{\min}=10^{-5}$. For the functions $F_\epsilon$  defined by \eqref{eq:Mangasarain}  and  \eqref{eq:L}, we set the pair of initial parameters $(\alpha,\epsilon)$ as $(500, 0.25)$ and $(250, 0.125)$, respectively. The iterative process in the PD-subroutine is forced to stop if the corresponding number of iteration exceeds $300$. These parameters are used in all simulations performed by Algorithm~\ref{alg:matrix-final} in the rest of this section.

To evaluate the performance of Algorithm~\ref{alg:matrix-final} in terms of SNR values at various scenarios, we consider three configurations for the size of the random matrix $\Phi$ and the sparsity of the vector $x$. In the first configuration, we fix $n=1000$ and $s=10$ and vary $m$ such that the ratio $m/n$ is between $0.1$ and $2$. In the second configuration, we fix $m=1000$ and $n=1000$ and vary the sparsity of $x$ from $1$ to $20$. In the third configuration, we fix $m=1000$ and $s=10$ and vary $n$ from $500$ to $1400$.

For every cases in each configuration, we compare the accuracy of Algorithm~\ref{alg:matrix-final} with the BIHT by computing the average of SNR values over 100 trials. For the given parameters and stopping criteria adopted by Algorithm~\ref{alg:matrix-final}, the estimate $x^{(k_{\max})}$ may not satisfy the consistency condition~\eqref{model:1bit-matrix}, that is the signs of measurements of the estimate $x^{(k_{\max})}$ are not completely consistent with that of the original measurements. Thus, for a fair comparison, we only compute the average of SNR values for those trials that both reconstructions from the BIHT and Algorithm~\ref{alg:matrix-final} satisfy the consistency condition~\eqref{model:1bit-matrix} and we say the corresponding trials are valid.

For the first configuration, the SNR values in decibels of the average reconstruction errors by both the BIHT and Algorithm~\ref{alg:matrix-final} are depicted in Figure~\ref{fig:VaryingM-Alg}. The plots demonstrate that our proposed algorithm performs as equally good as the BIHT, in particular, when $m/n$ is greater than $1$, even thought our algorithm does not require to know the exact sparsity of the original signal. We can see that Algorithm~\ref{alg:matrix-final} with the Log-Det function \eqref{eq:L} (Figure~\ref{fig:VaryingM-Alg}(b)) performs slightly better than with the Mangasarian function \eqref{eq:Mangasarain} (Figure~\ref{fig:VaryingM-Alg}(a)) .

\begin{figure}[ht]
\centering
\begin{tabular}{cc}
\scalebox{0.32}{\includegraphics{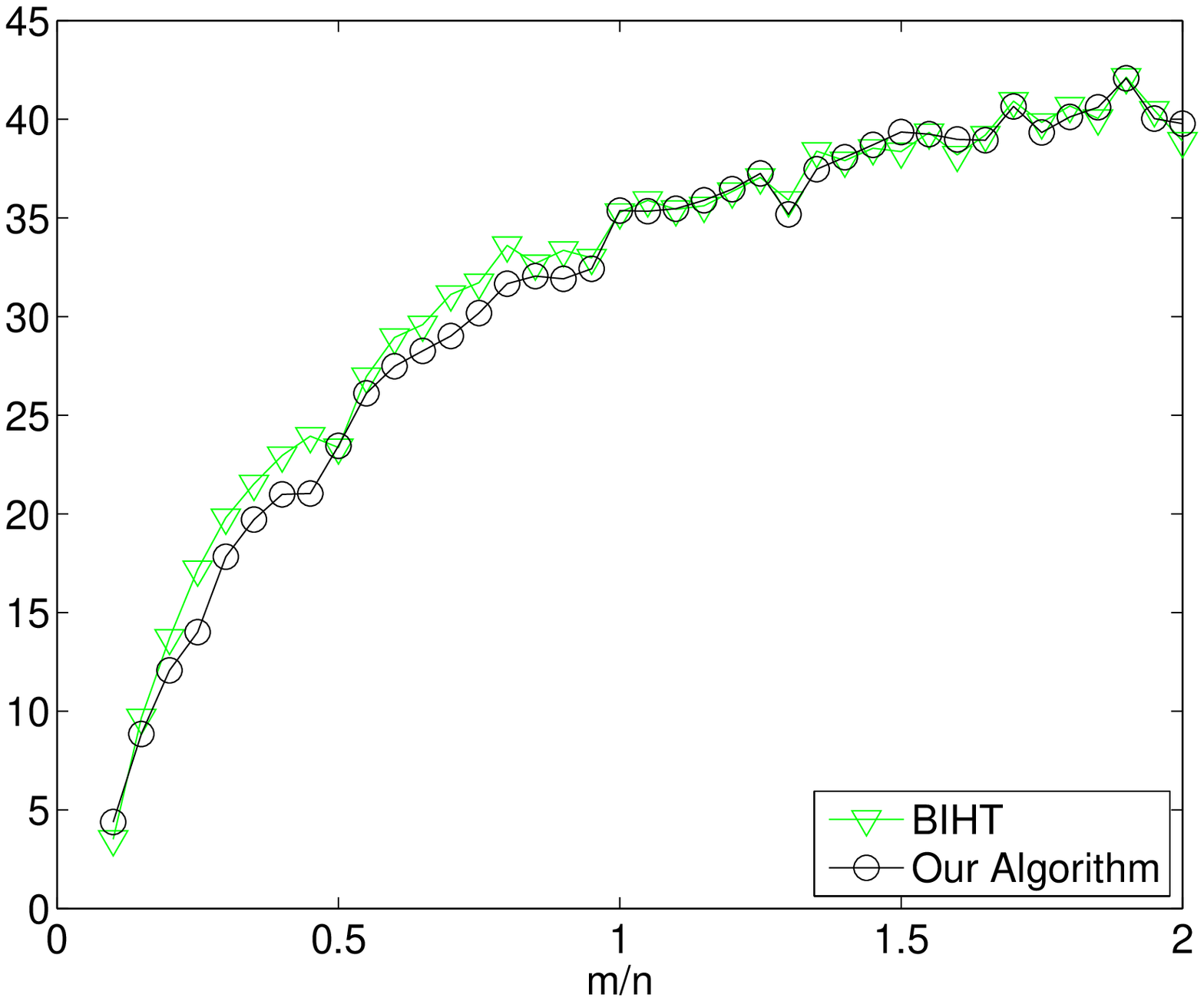}}&
\scalebox{0.32}{\includegraphics{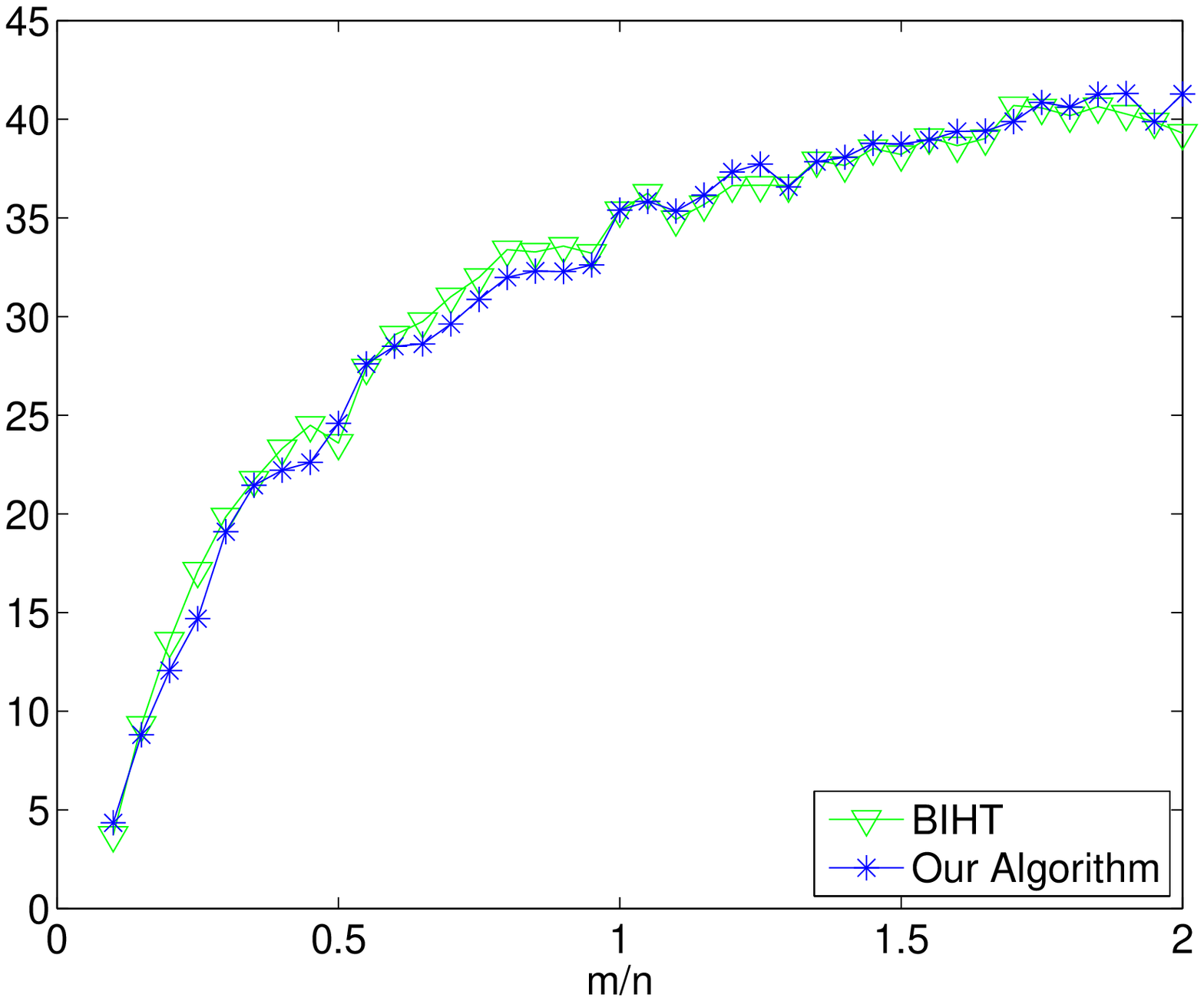}}\\
(a) Mangasarian &(b) Log-Det
\end{tabular}
\caption{Average SNR values vs. $m/n$ for fixed $n=1000$ and $s=10$. }
\label{fig:VaryingM-Alg}
\end{figure}

\begin{figure}[htb]
\centering
\begin{tabular}{cc}
\scalebox{0.31}{\includegraphics{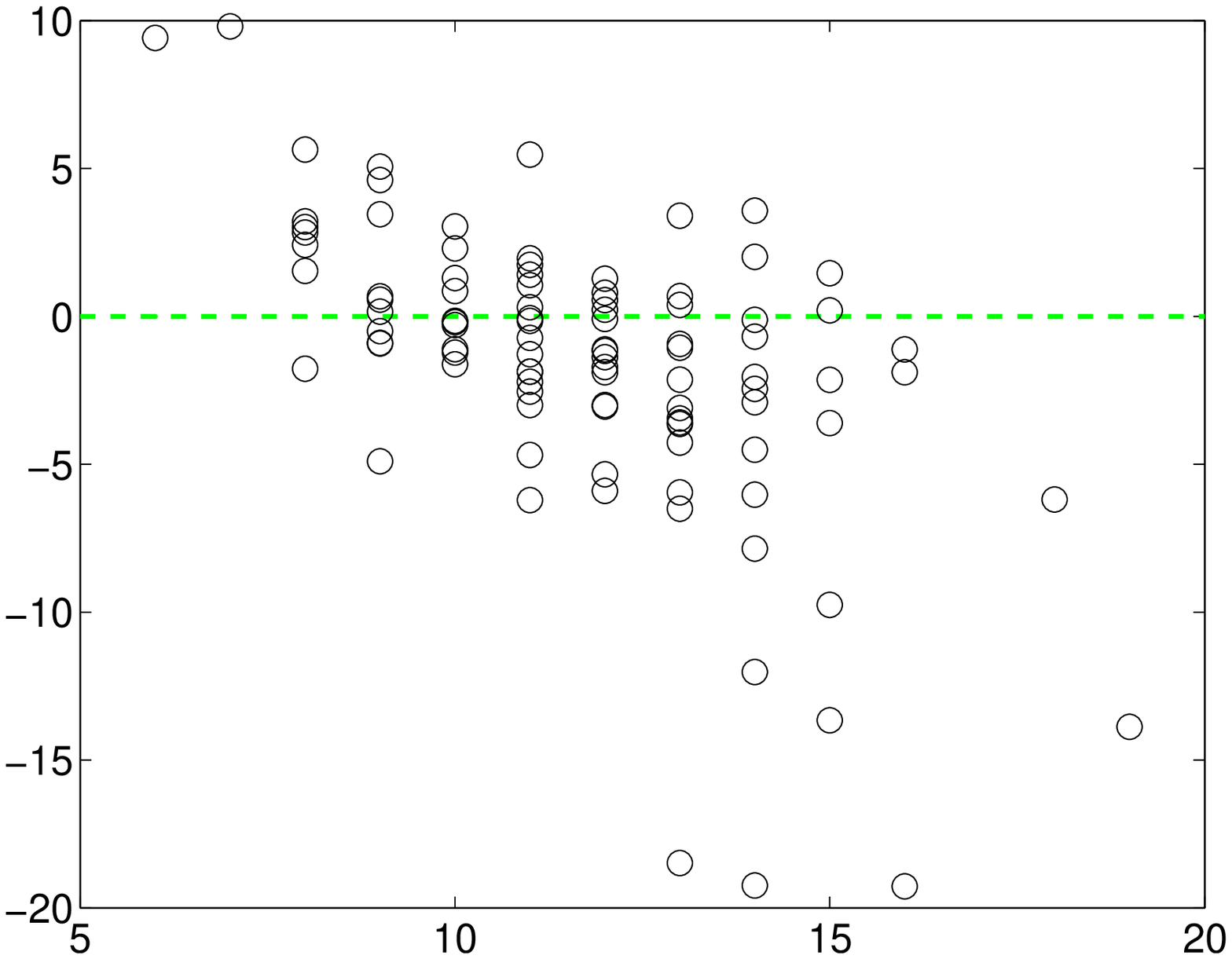}}&\scalebox{0.31}{\includegraphics{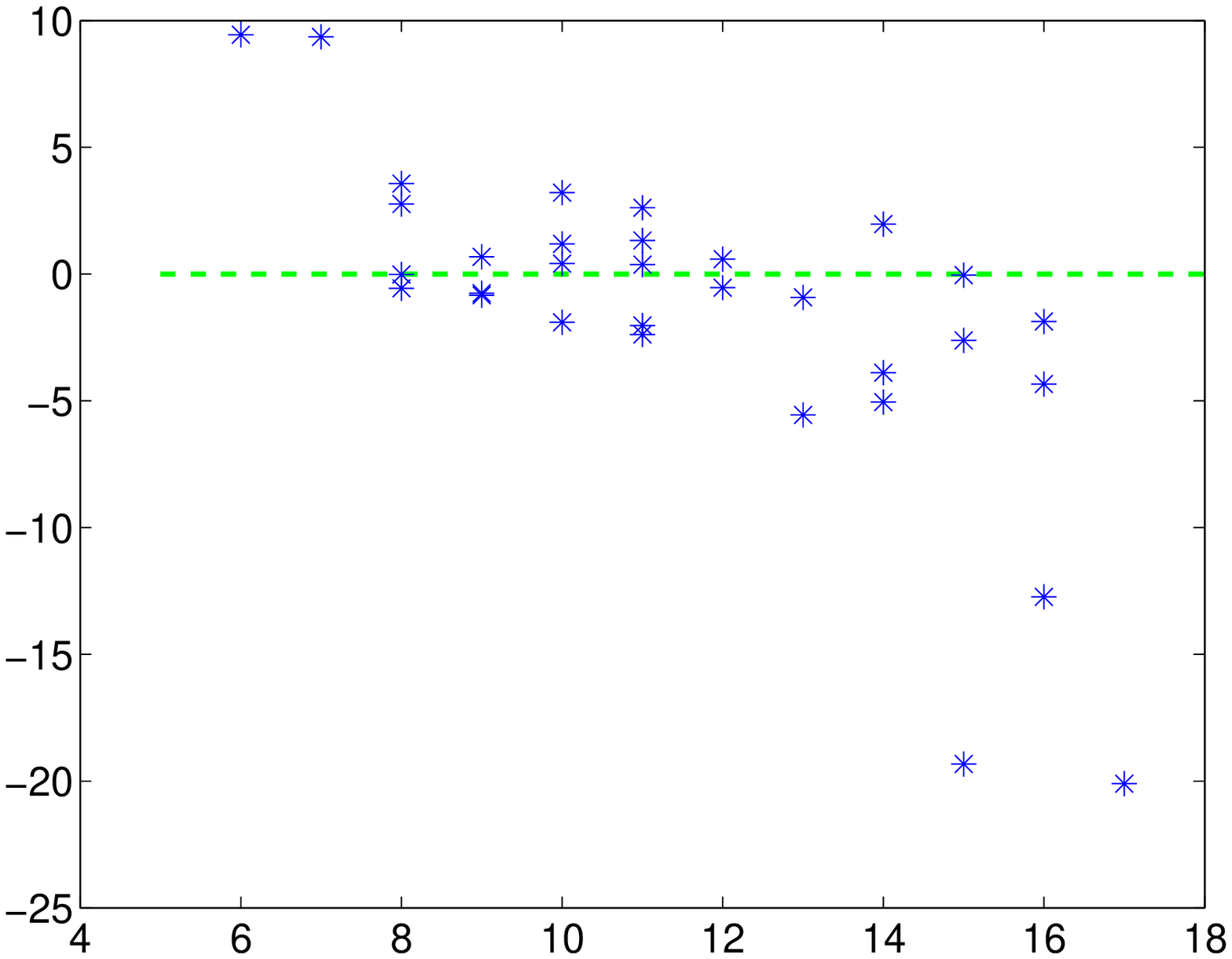}}\\
\scalebox{0.31}{\includegraphics{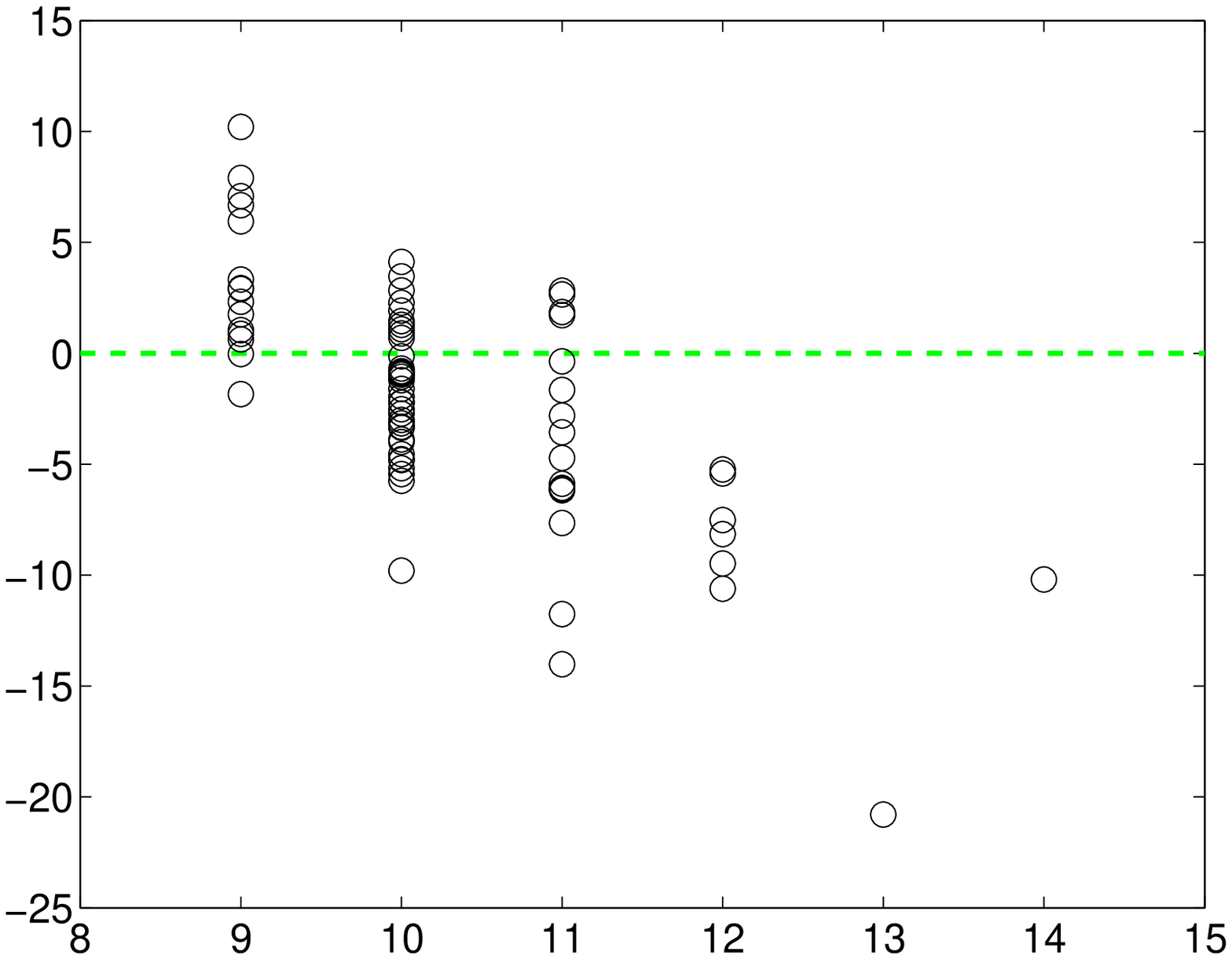}}&\scalebox{0.31}{\includegraphics{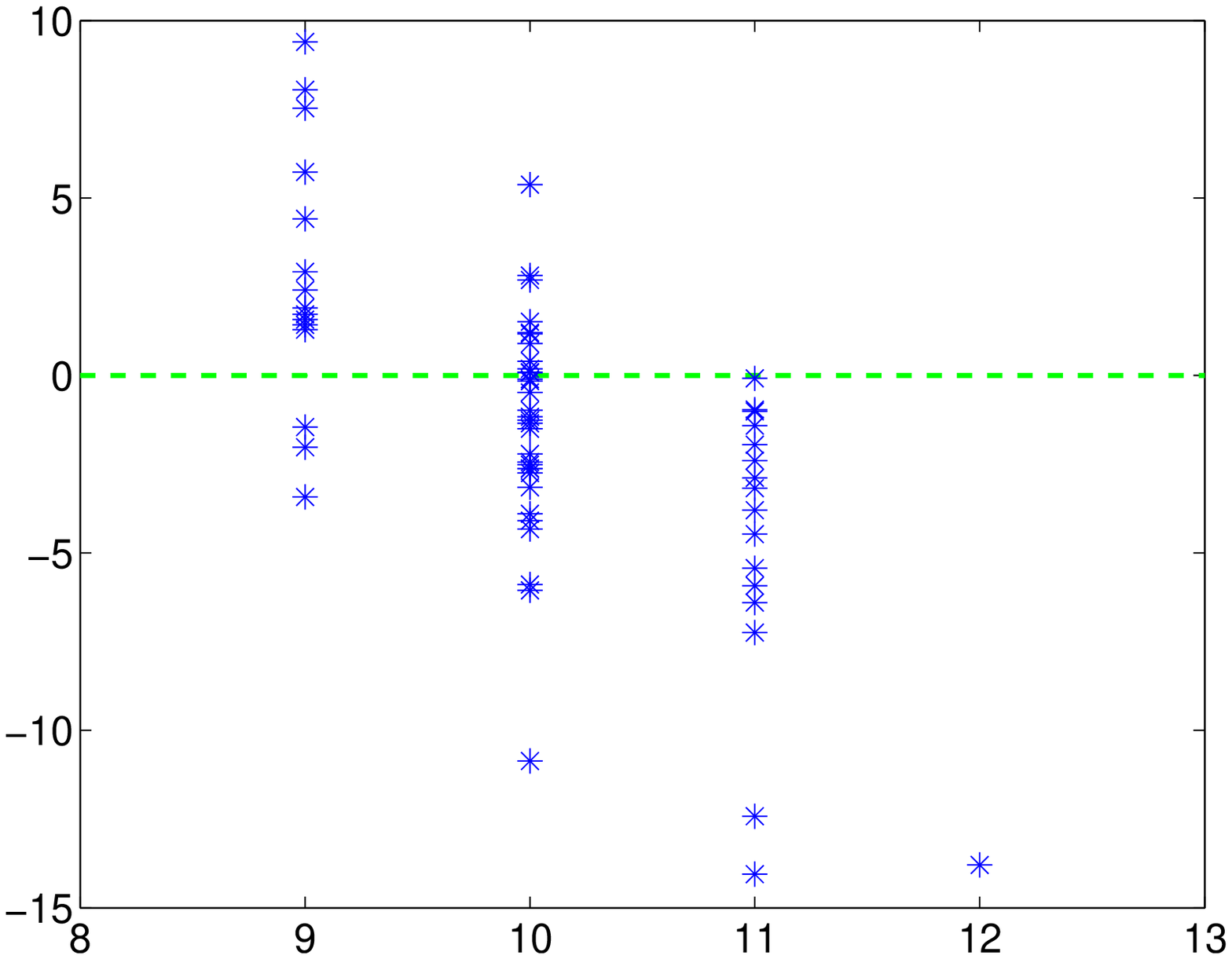}}\\
\scalebox{0.31}{\includegraphics{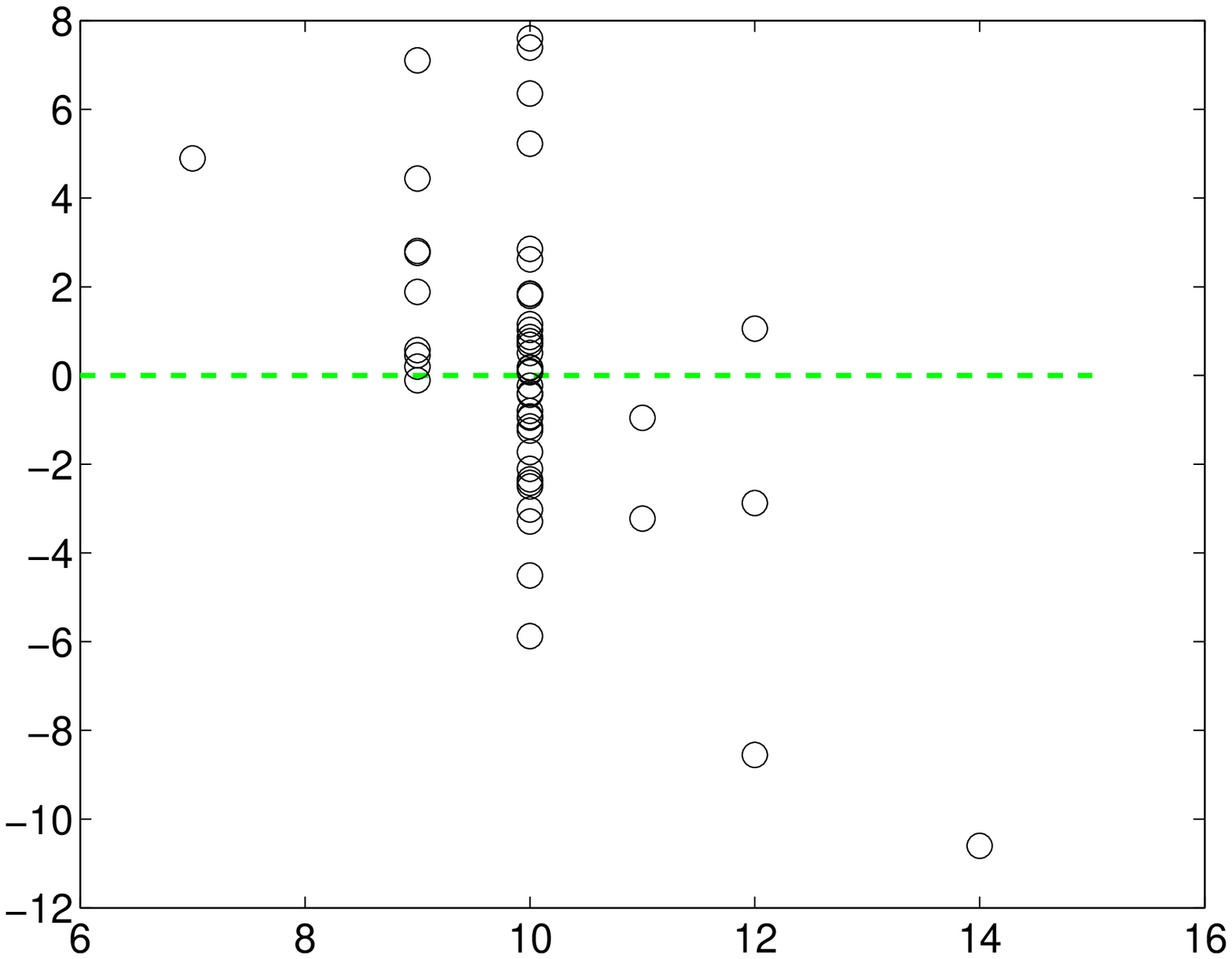}}&\scalebox{0.31}{\includegraphics{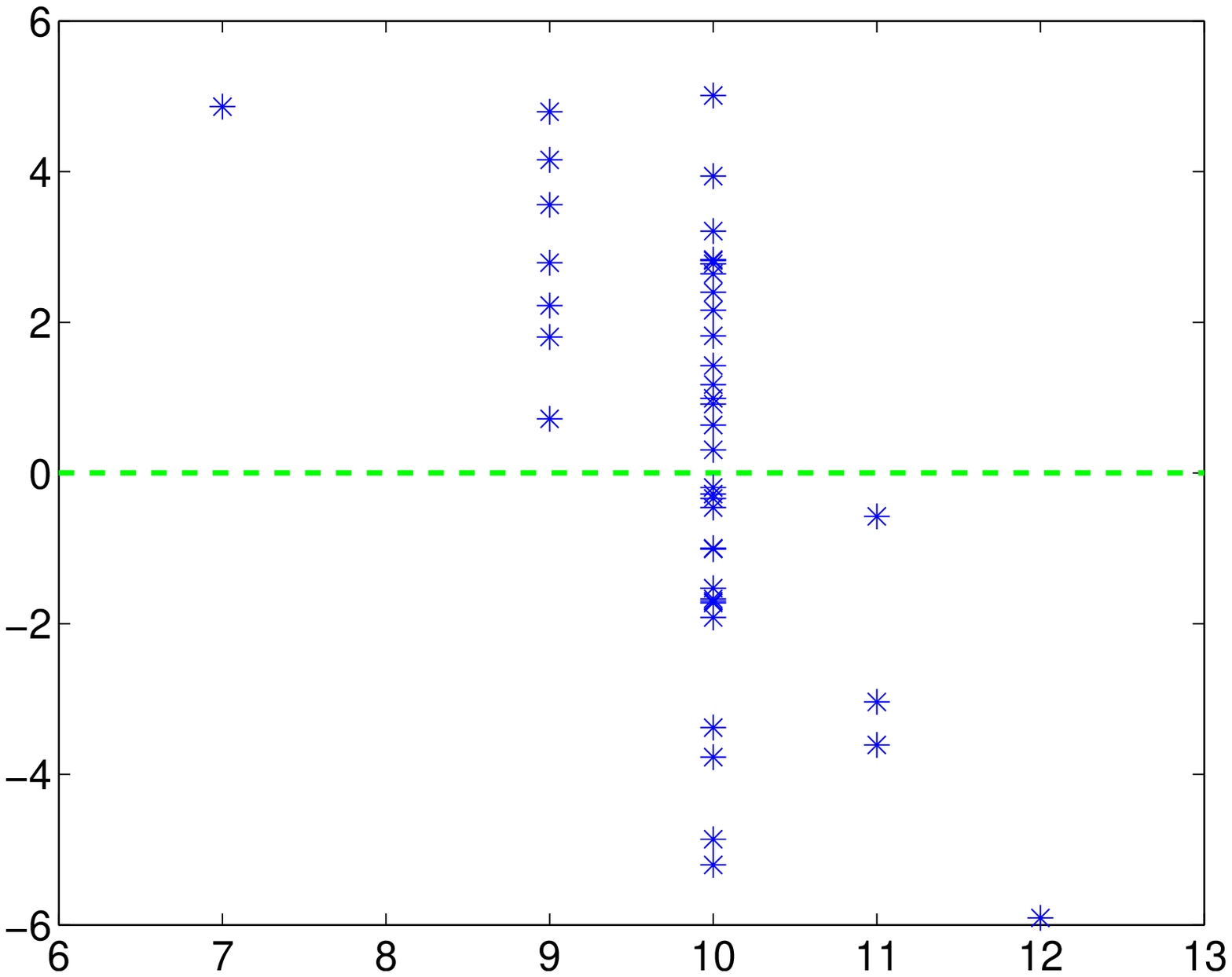}}\\
\scalebox{0.31}{\includegraphics{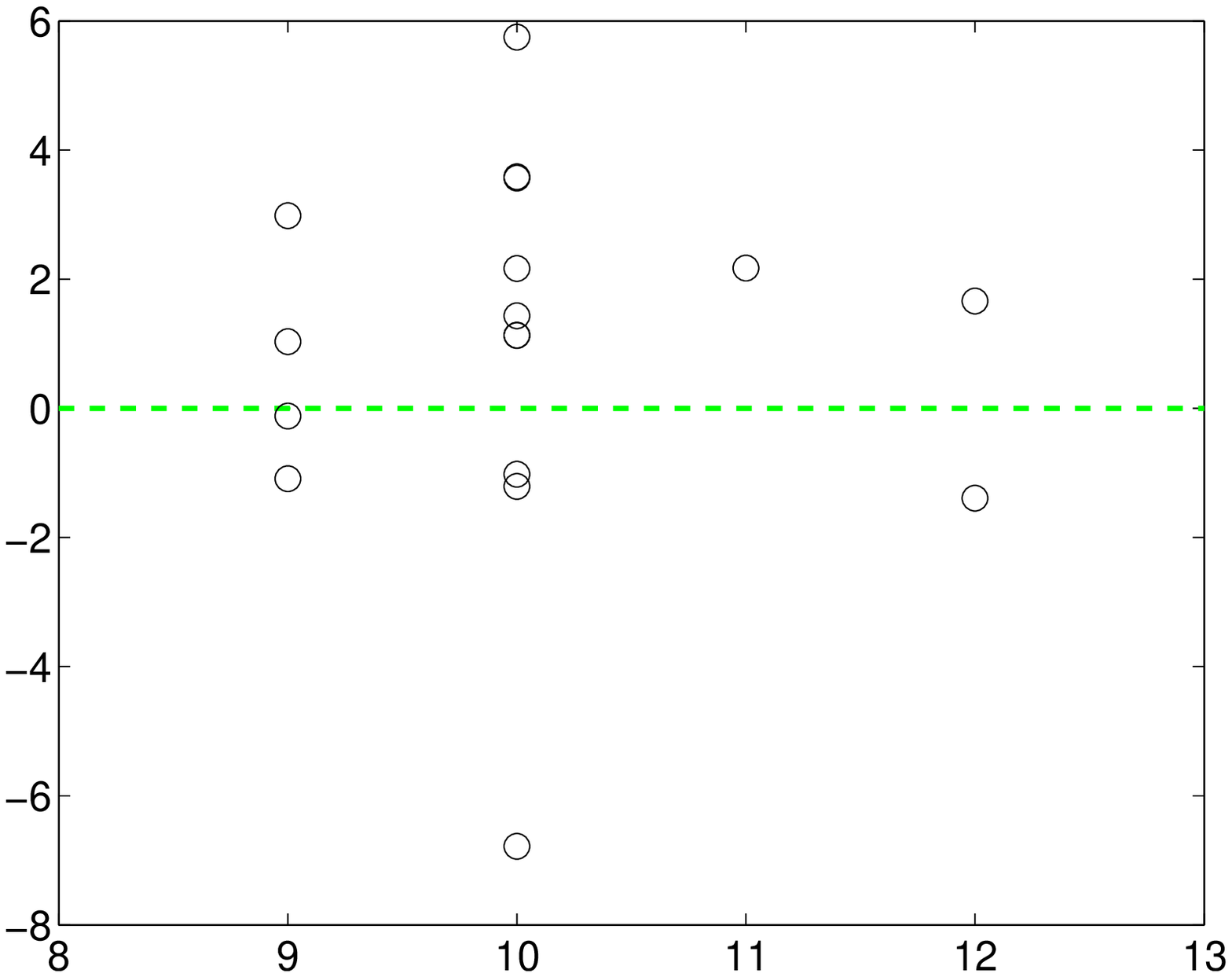}}&\scalebox{0.31}{\includegraphics{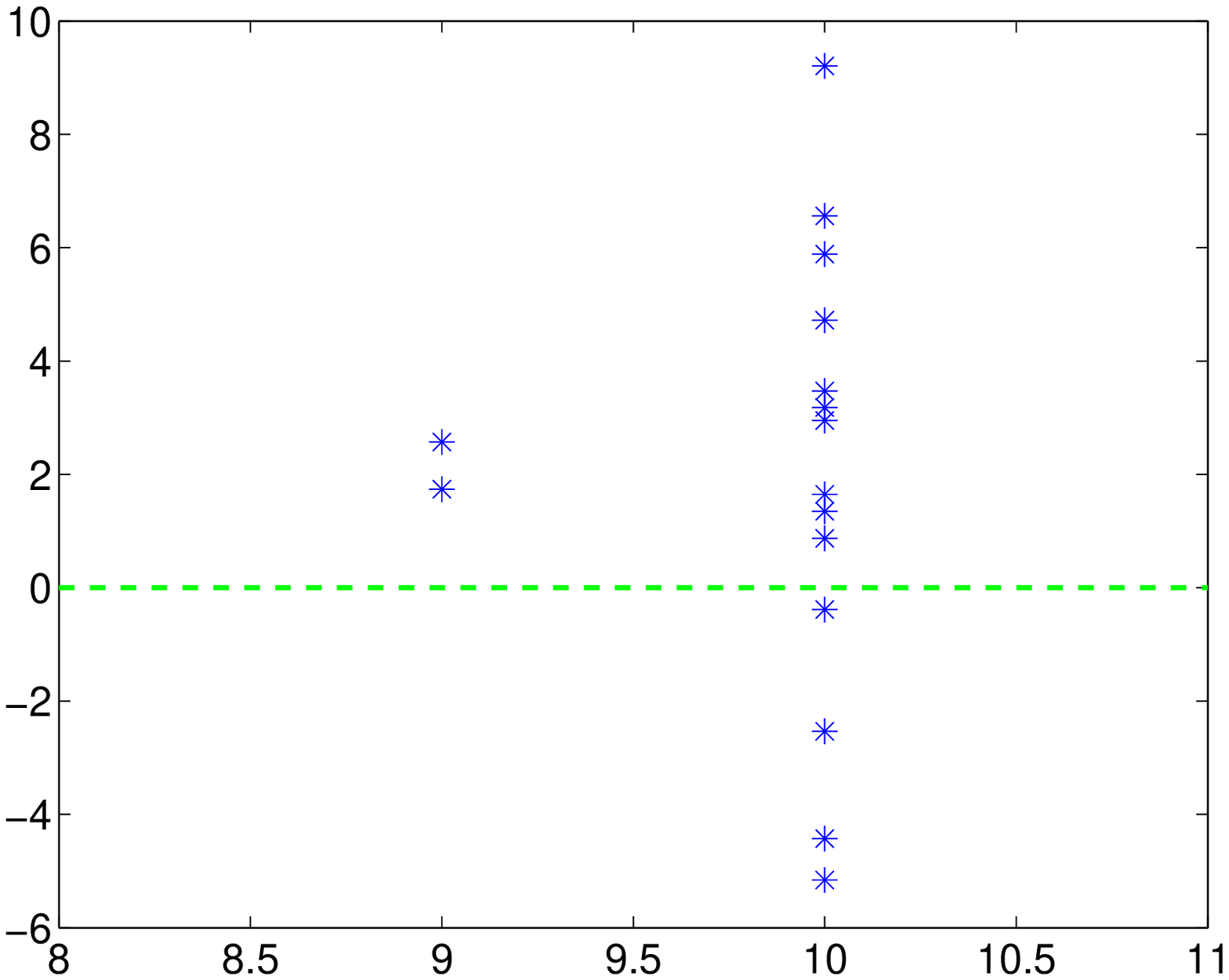}}\\
(a) Mangasarian &(b) Log-Det
\end{tabular}
\caption{The difference of SNR values of estimates by Algorithm~\ref{alg:matrix-final} and the BIHT vs. sparsity of reconstructed estimates by Algorithm~\ref{alg:matrix-final}. We fix $n=1000$ and $s=10$. Row 1 to Row 4 are corresponding to $m$ being 200, 800, 1400, and 2000, respectively.}
\label{fig:VaryingM-Alg-Cases}
\end{figure}

Detailed descriptions for valid trials for $m=200$, $800$, $1400$, and $2000$ are displayed in the rows (from top to bottom) of Figure~\ref{fig:VaryingM-Alg-Cases}, respectively. The horizontal axis of each plot represents the sparsity of the reconstructed signals by Algorithm~\ref{alg:matrix-final} while the vertical axis represents the difference of the SNR values of the reconstructions between Algorithm~\ref{alg:matrix-final} and the BIHT. Therefore, the marks (``$\circ$'' and ``$\star$'') above the dashed horizontal lines indicate that Algorithm~\ref{alg:matrix-final} performs better than the BIHT for the corresponding trials. Since all ideal signals in our simulations are 10-sparse, the marks whose horizontal axis are bigger than, exactly equal to, or smaller than $10$ imply that the $\ell_0$-norm of the reconstructions by Algorithm~\ref{alg:matrix-final} are bigger than, exactly equal to, or smaller than $10$, respectively. Thus, the $\ell_0$-norm of a reconstruction over $10$ indicates that the reconstruction is not a global minimizer of model~\eqref{model:SSZ},  the $\ell_0$-norm of a reconstruction being $10$ indicates that the sparsity of the reconstruction is consistent with the one of the original test signal, the $\ell_0$-norm of a reconstruction below $10$ indicates that the reconstruction is potentially a global minimizer of model~\eqref{model:SSZ} and the original test signal is not a solution to model~\eqref{model:SSZ}. We can conclude from Figure~\ref{fig:VaryingM-Alg-Cases} that (i) the reconstructions by Algorithm~\ref{alg:matrix-final} with sparsity higher (res. lower) than 10 usually have lower (res. higher) SNR values than that by the BIHT; (ii) Increasing $m$ (number of measurements) tends to reduce the sparsity of the reconstructions. For example, average sparsity of the reconstructions for $m=200$, $800$, $1400$, and $2000$ are, respectively, 11.74, 10.26, 10, and 10.06 for Algorithm~\ref{alg:matrix-final} with the Mangasarian function,  and 11.53, 10.05, 9.88, 9.88 for Algorithm~\ref{alg:matrix-final} with the Log-Det function.

For the second configuration, the SNR values in decibels of the average reconstruction errors by both the BIHT and Algorithm~\ref{alg:matrix-final} are compared in Figure~\ref{fig:VaryingK-Alg} for varying sparsity of original signals. The plots demonstrate that our proposed algorithm performs better than the BIHT for sparsity $s$ being $2$ and $6$ to $10$. We emphasize again that unlike the BIHT the exact sparsity of the original signal is not required in advance by Algorithm~\ref{alg:matrix-final}. We remark that when $s=1$ both the BIHT and Algorithm~\ref{alg:matrix-final} find an exact solution to model~\eqref{model:SSZ}. This phenomenon was also reported in \cite{Laska-Wen-Yin-Baraniuk:IEEESP:11}. Detailed descriptions for valid trials for $s=2$, $8$, $14$, and $20$ are displayed in the rows (from top to bottom) of Figure~\ref{fig:VaryingK-Alg-Cases}, respectively. The marks in each plot of Figure~\ref{fig:VaryingK-Alg-Cases} have the same meaning as that in Figure~\ref{fig:VaryingM-Alg-Cases}. For fixed $m=1000$ and $n=1000$  we can draw conclusions from Figure~\ref{fig:VaryingK-Alg-Cases} that (i) Algorithm~\ref{alg:matrix-final} tends to produce an estimate whose sparsity is consistent with the ideal sparse signal; (ii) Algorithm~\ref{alg:matrix-final} can give an estimate whose sparsity is smaller than that of the ideal sparse signal, in particular, when the sparsity of an original signal is relative large.

\begin{figure}[htb]
\centering
\begin{tabular}{cc}
\scalebox{0.32}{\includegraphics{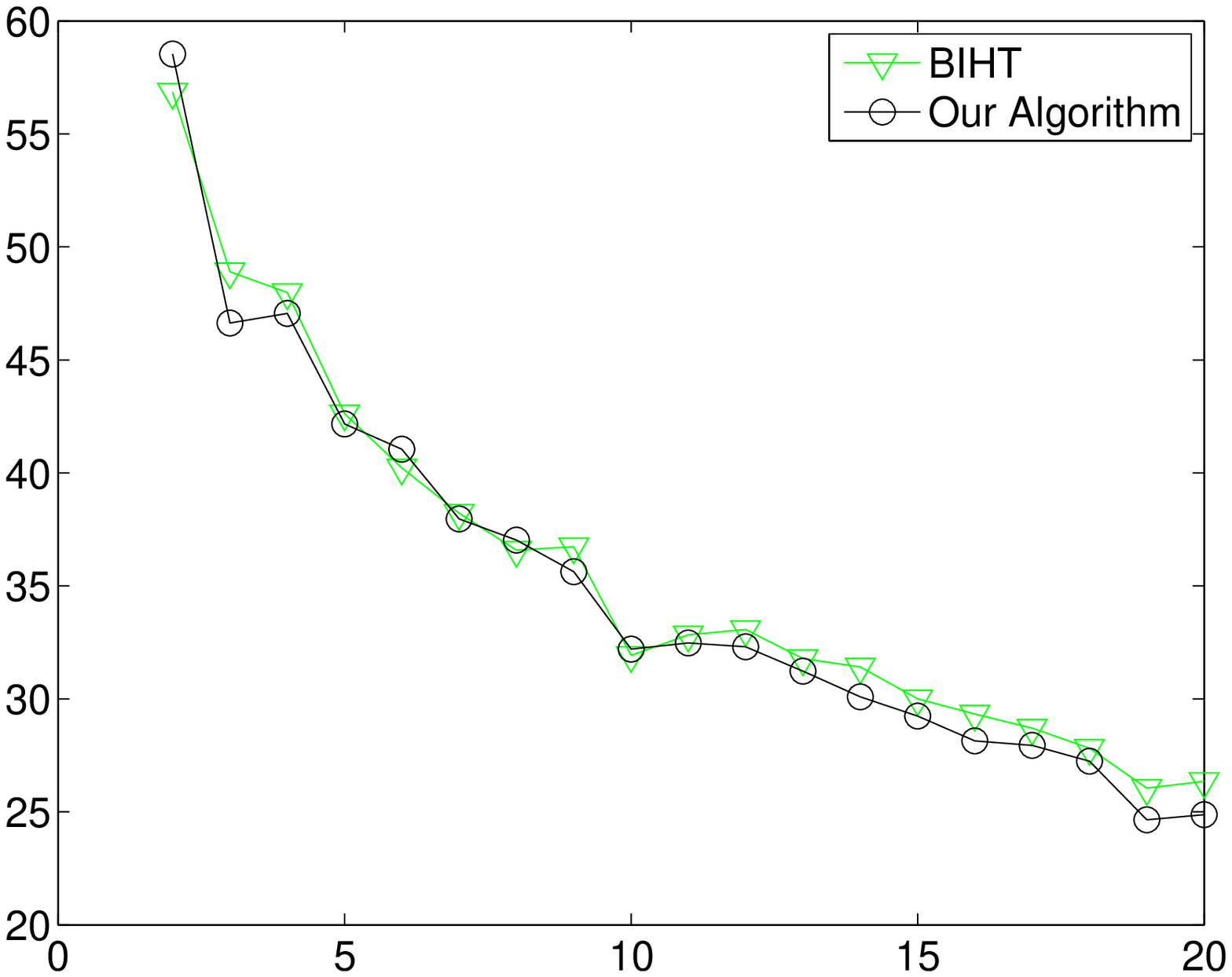}}&
\scalebox{0.32}{\includegraphics{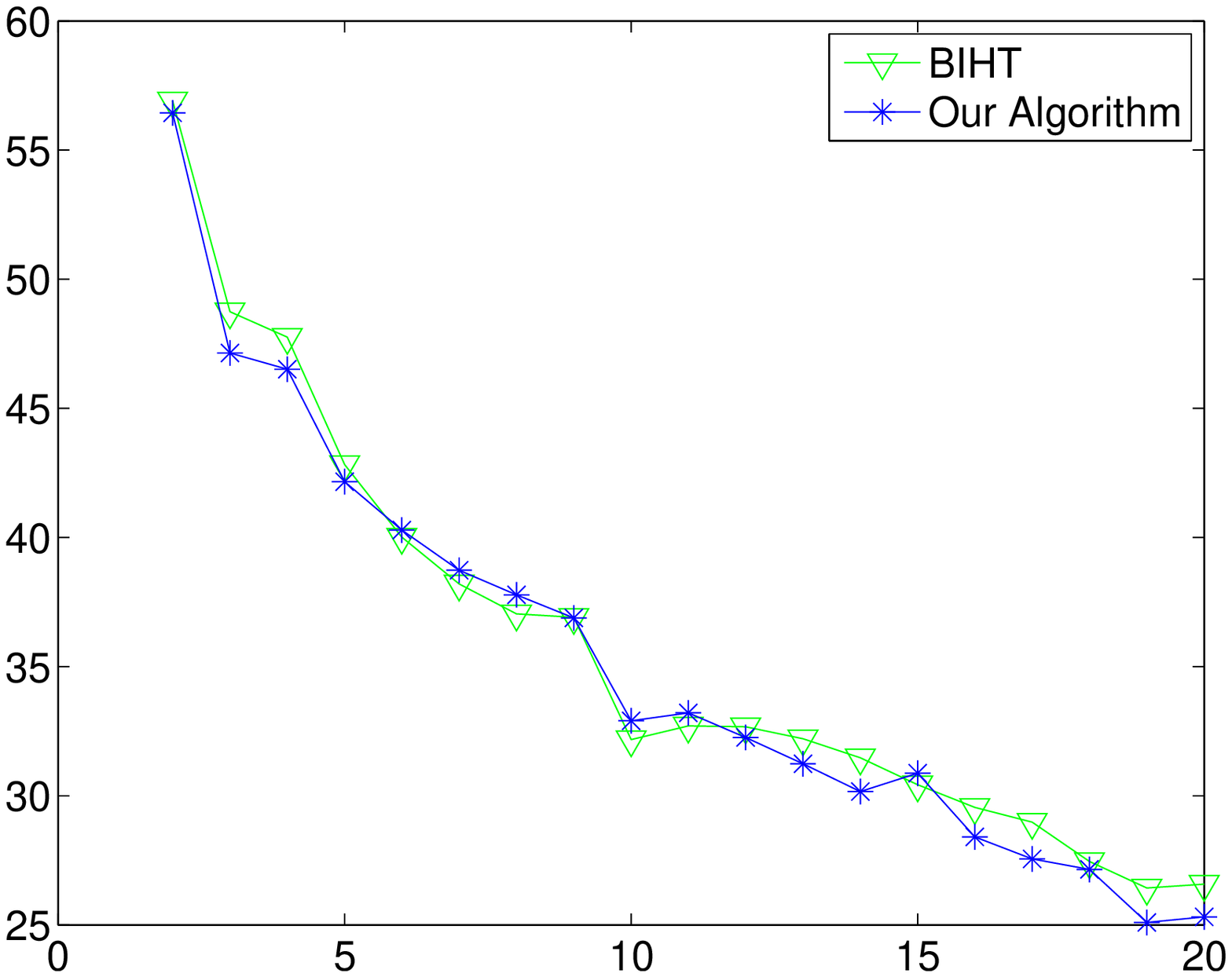}}\\
(a) Mangasarian &(b)  Log-Det
\end{tabular}
\caption{Average SNR values vs. sparsity of original signals for fixed $n=1000$ and $m=1000$. }
\label{fig:VaryingK-Alg}
\end{figure}

\begin{figure}[htb]
\centering

\begin{tabular}{cc}
\scalebox{0.31}{\includegraphics{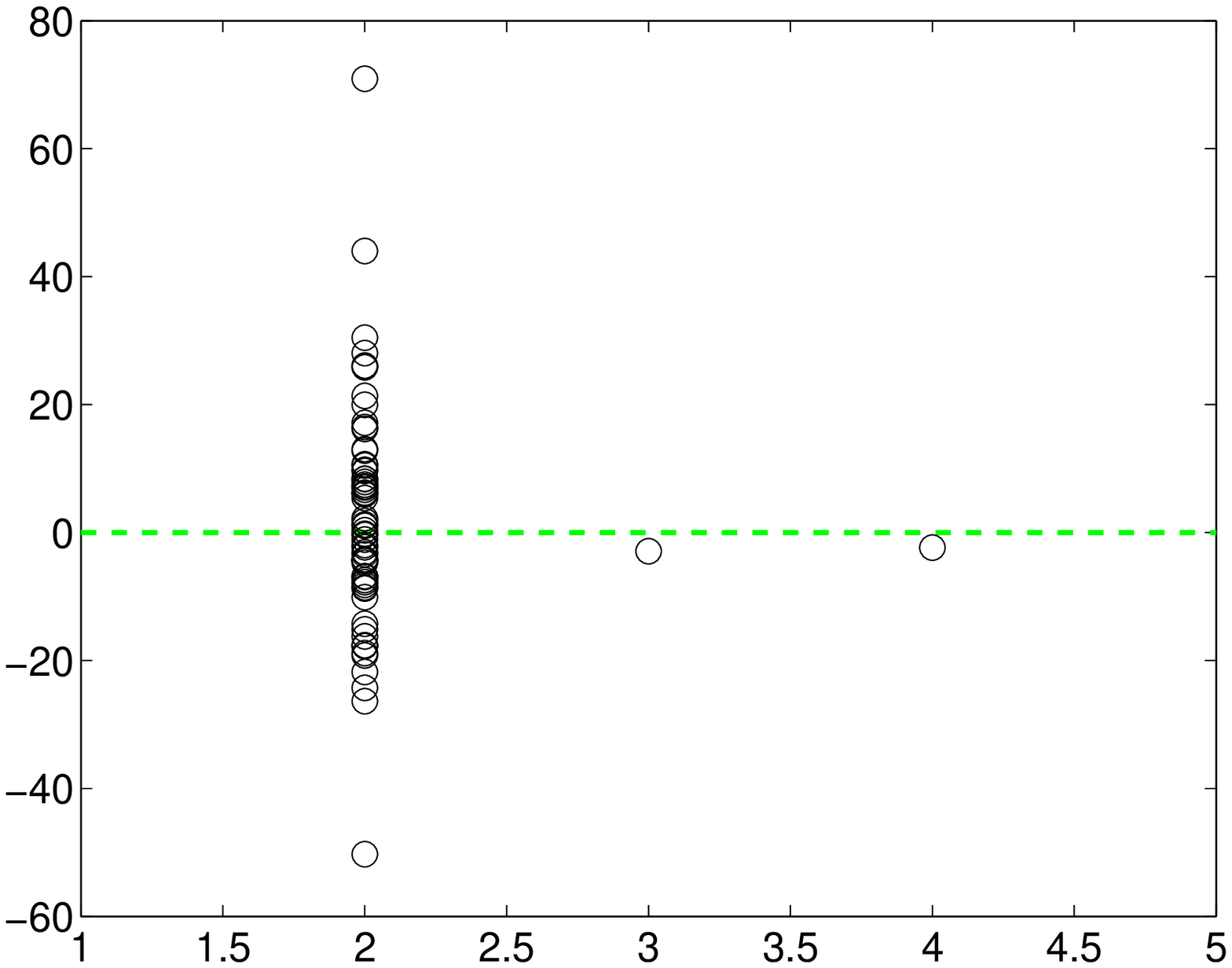}}&\scalebox{0.31}{\includegraphics{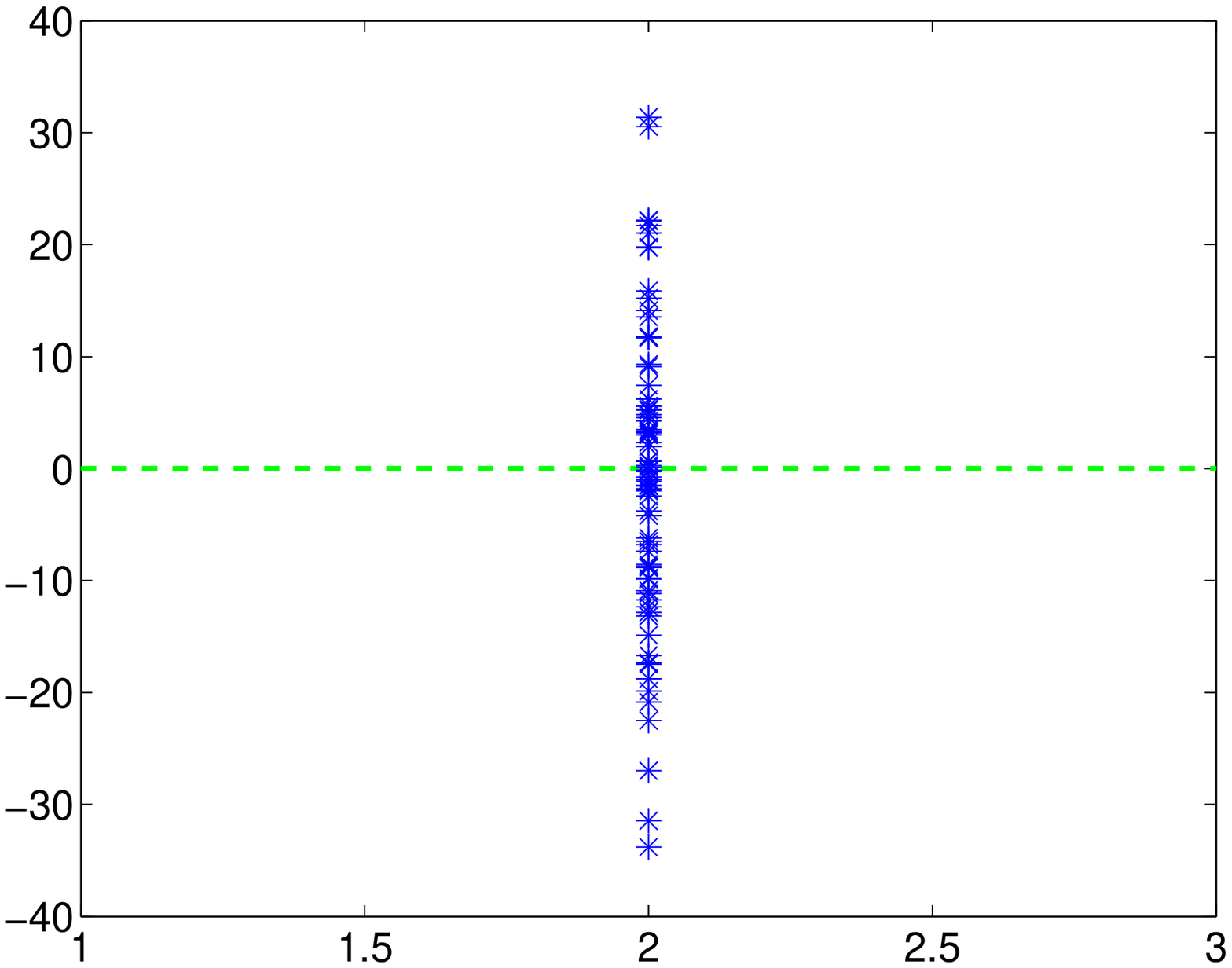}}\\
\scalebox{0.31}{\includegraphics{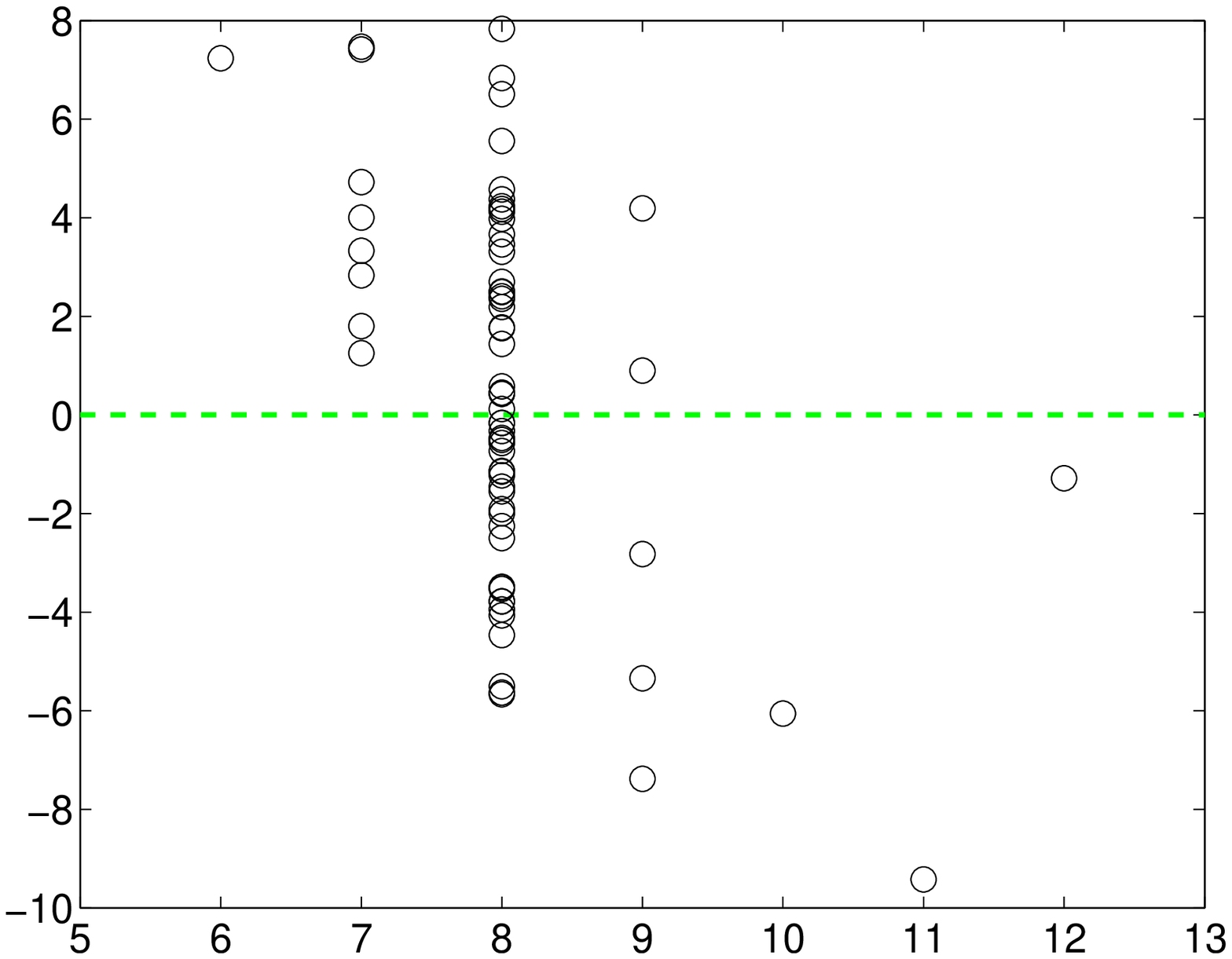}}&\scalebox{0.31}{\includegraphics{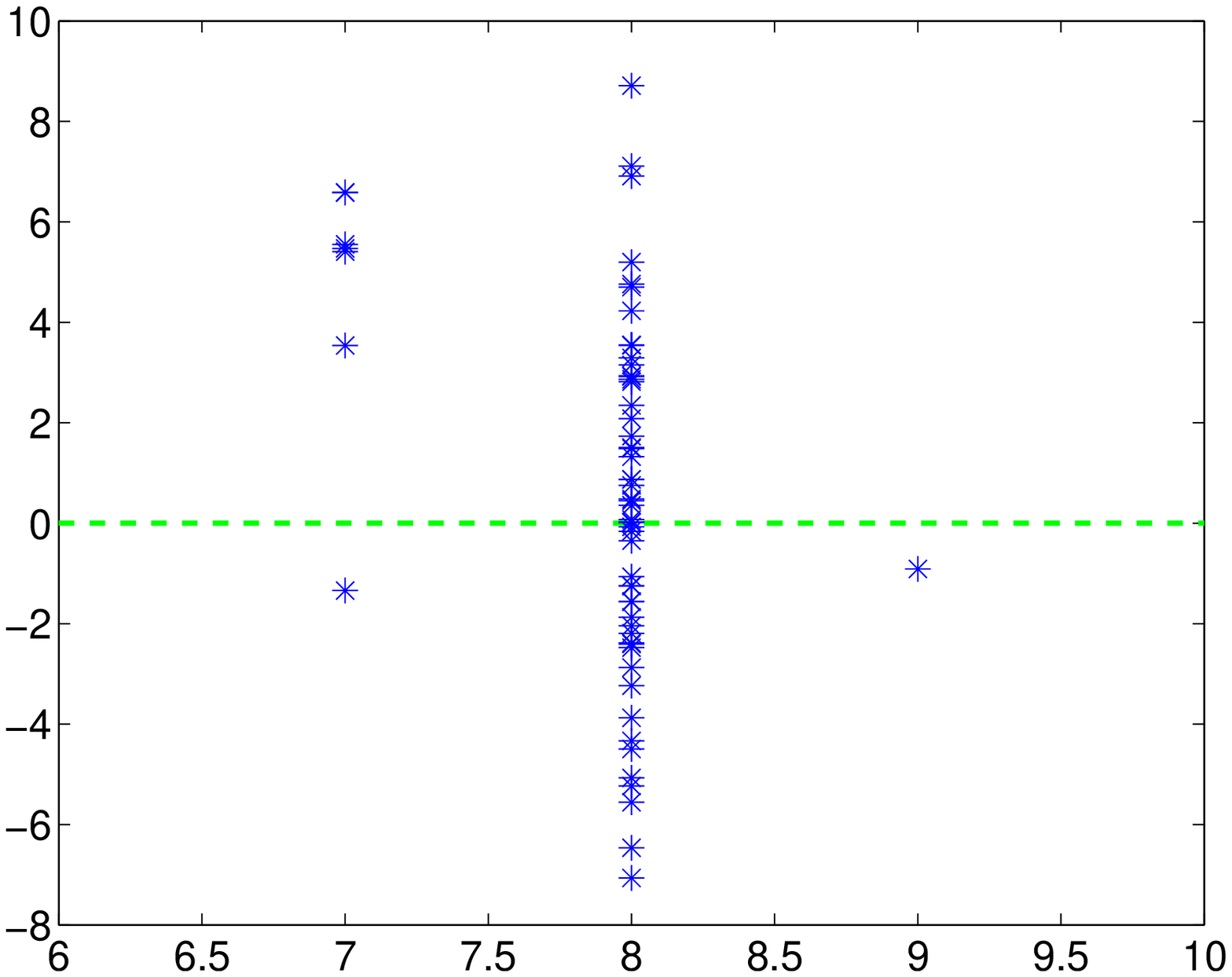}}\\
\scalebox{0.31}{\includegraphics{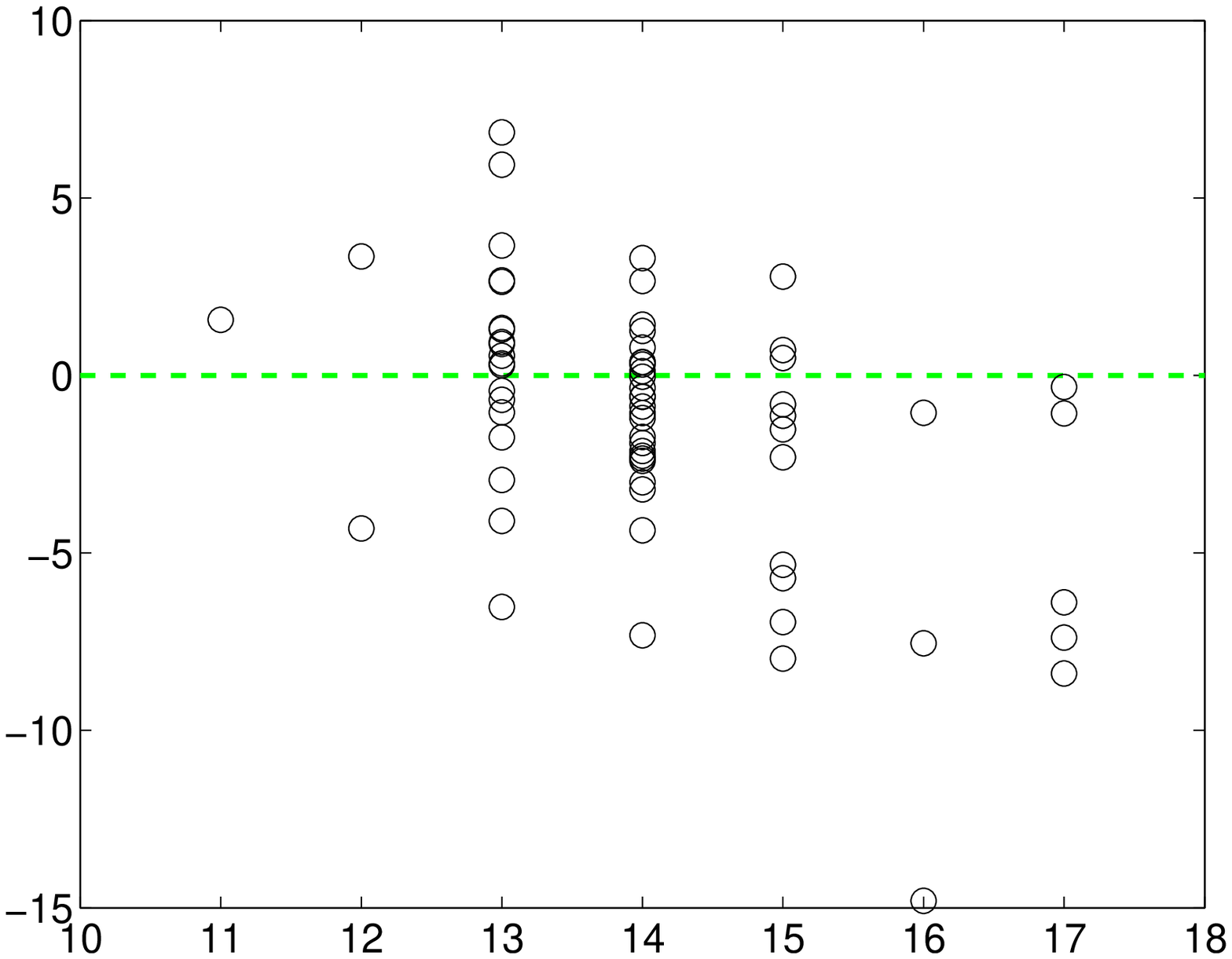}}&\scalebox{0.31}{\includegraphics{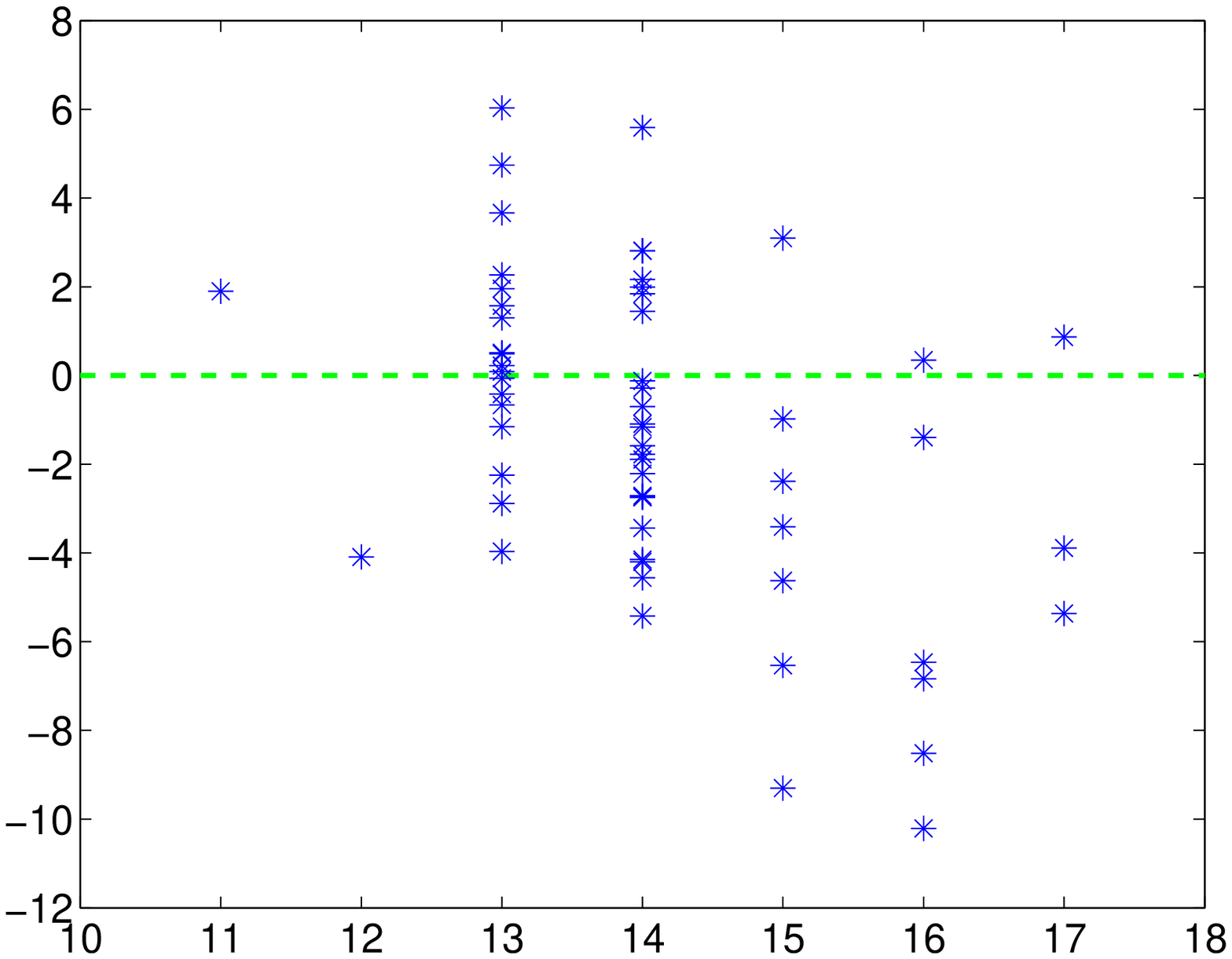}}\\
\scalebox{0.31}{\includegraphics{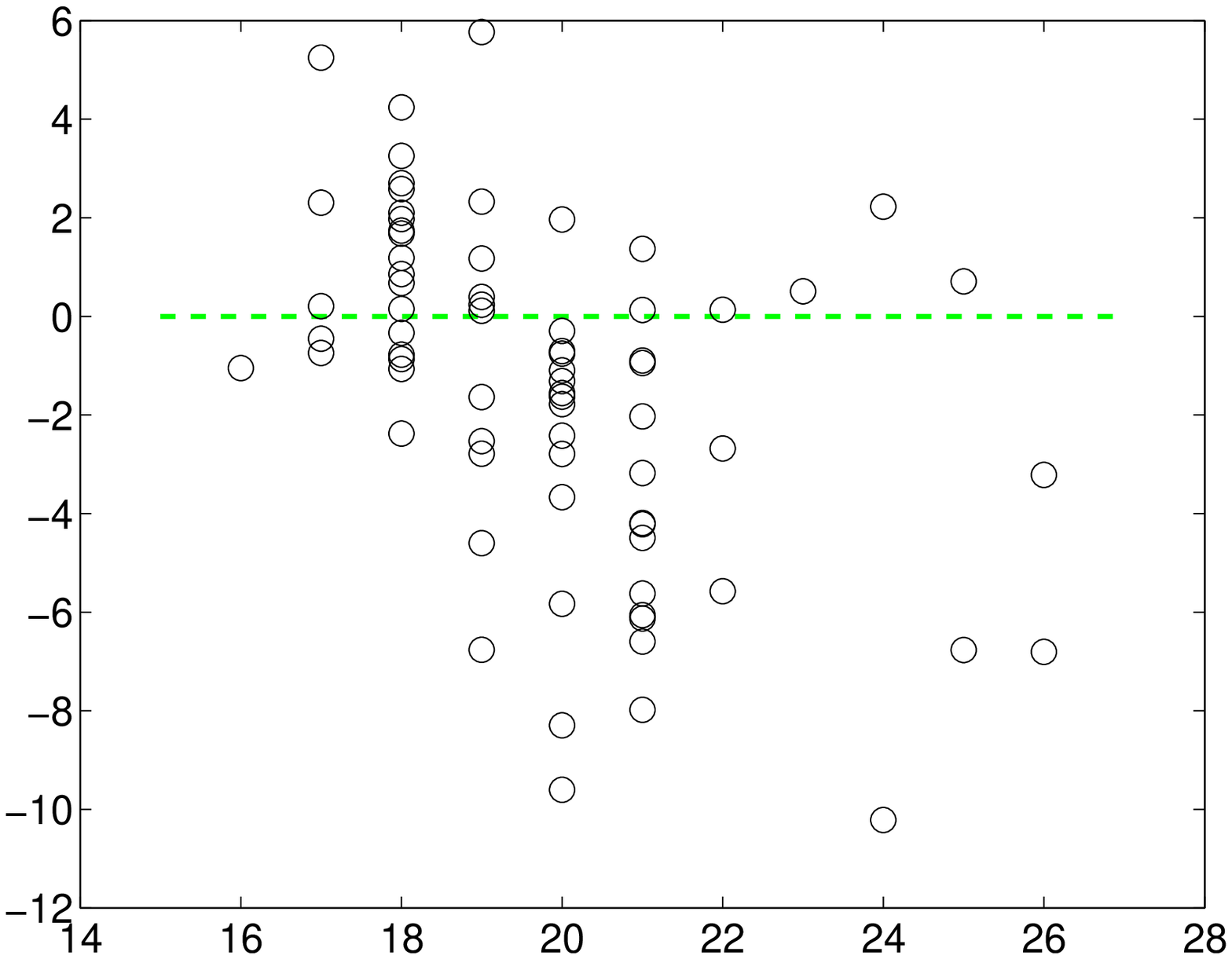}}&\scalebox{0.31}{\includegraphics{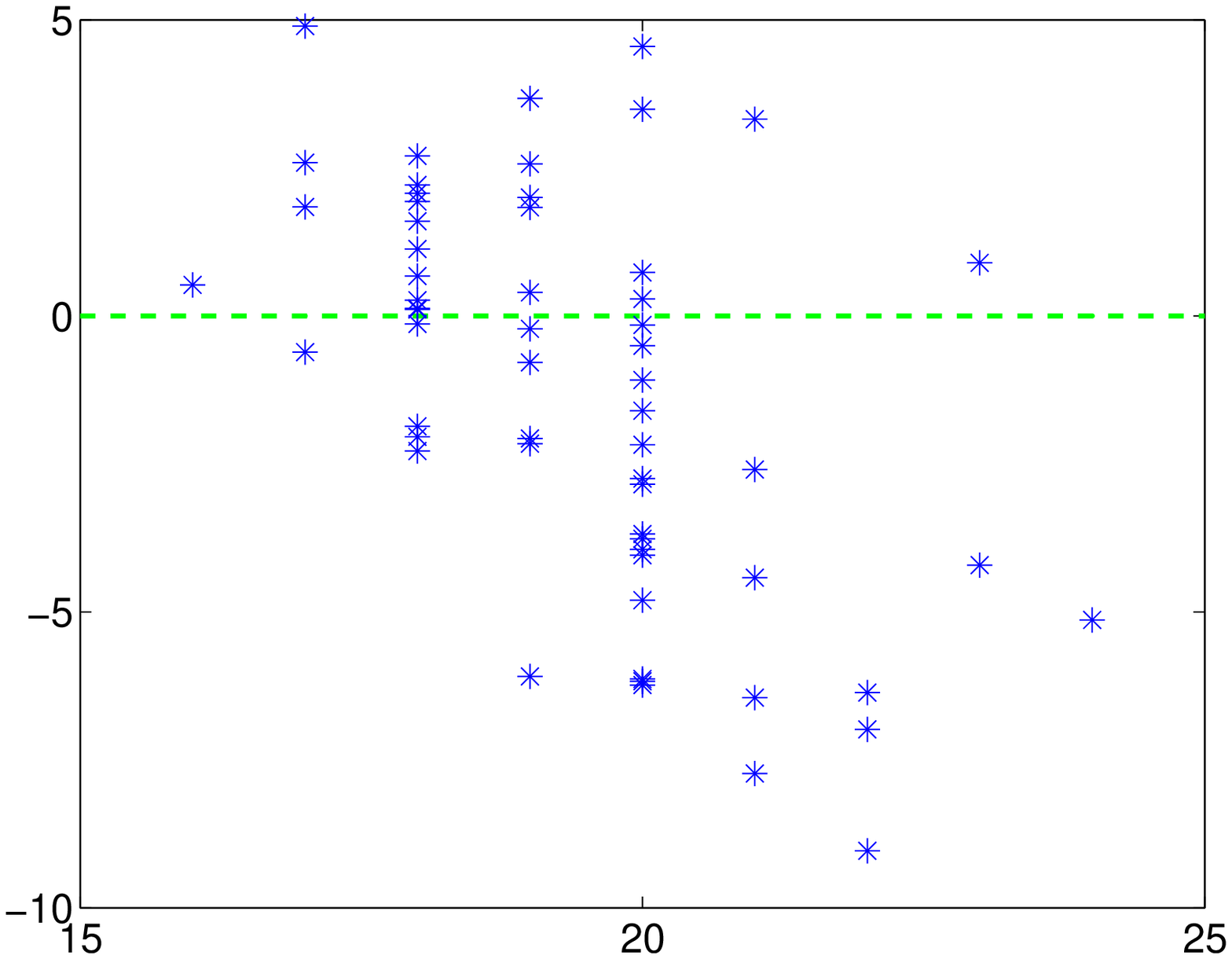}}\\
(a) Mangasarian &(b) Log-Det
\end{tabular}
\caption{The difference of SNR values of estimates by Algorithm~\ref{alg:matrix-final} and the BIHT vs. sparsity of reconstructed estimates by Algorithm~\ref{alg:matrix-final}. We fix $n=1000$ and $m=1000$.  Row 1 to Row 4 are corresponding to $s$ being 2, 8, 14, and 20, respectively.}
\label{fig:VaryingK-Alg-Cases}
\end{figure}


For the third configuration, the SNR values in decibels of the average reconstruction errors by both the BIHT and Algorithm~\ref{alg:matrix-final} are compared in Figure~\ref{fig:VaryingN-Alg} for fixed $m=1000$ and $s=10$ and varying dimensions of original signals. The plots in Figure~\ref{fig:VaryingN-Alg} show that the average SNR values for reconstructions by Algorithm~\ref{alg:matrix-final} are lower than that by the BIHT in most cases. This is due to the fact that the BIHT explores an unattainable additional information on the sparsity of the original signal. Another reason which we can see from Figure~\ref{fig:VaryingN-Alg-Cases} is that reconstructions with their sparsity larger than $10$ by Algorithm~\ref{alg:matrix-final} usually have lower SNR values than by the BIHT. The marks in each plot of Figure~\ref{fig:VaryingN-Alg-Cases} have the same meaning as that in Figures~\ref{fig:VaryingM-Alg-Cases} and \ref{fig:VaryingK-Alg-Cases}. For fixed $m=1000$ and $s=10$ we can draw conclusions from Figure~\ref{fig:VaryingN-Alg-Cases} that (i) Algorithm~\ref{alg:matrix-final} can give an estimate whose sparsity is smaller than that of the ideal sparse signal and (ii) Algorithm~\ref{alg:matrix-final}  with the Log-Det function works better than that with the Mangasarian function.

\begin{figure}[ht]
\centering
\begin{tabular}{cc}
\scalebox{0.32}{\includegraphics{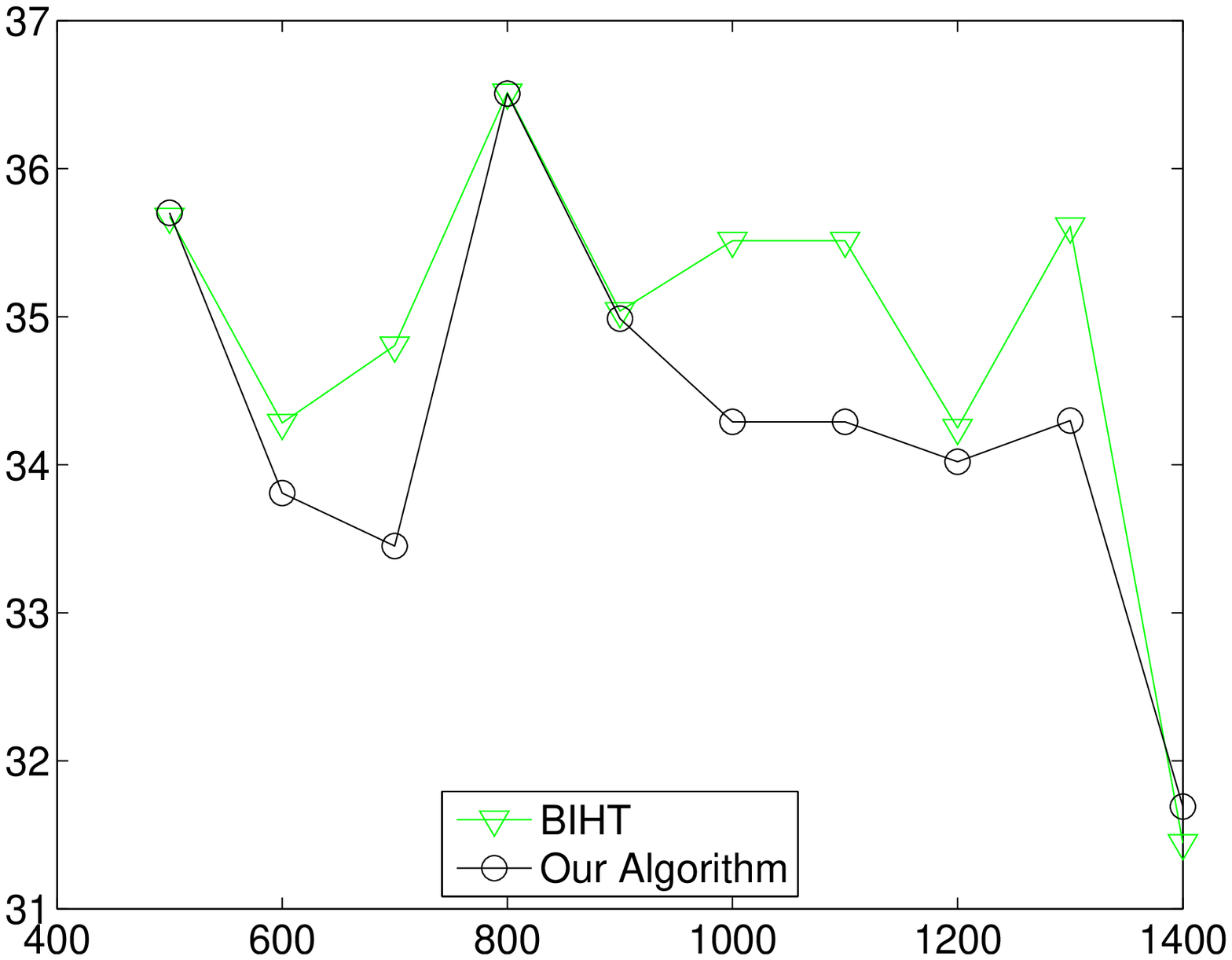}}&
\scalebox{0.32}{\includegraphics{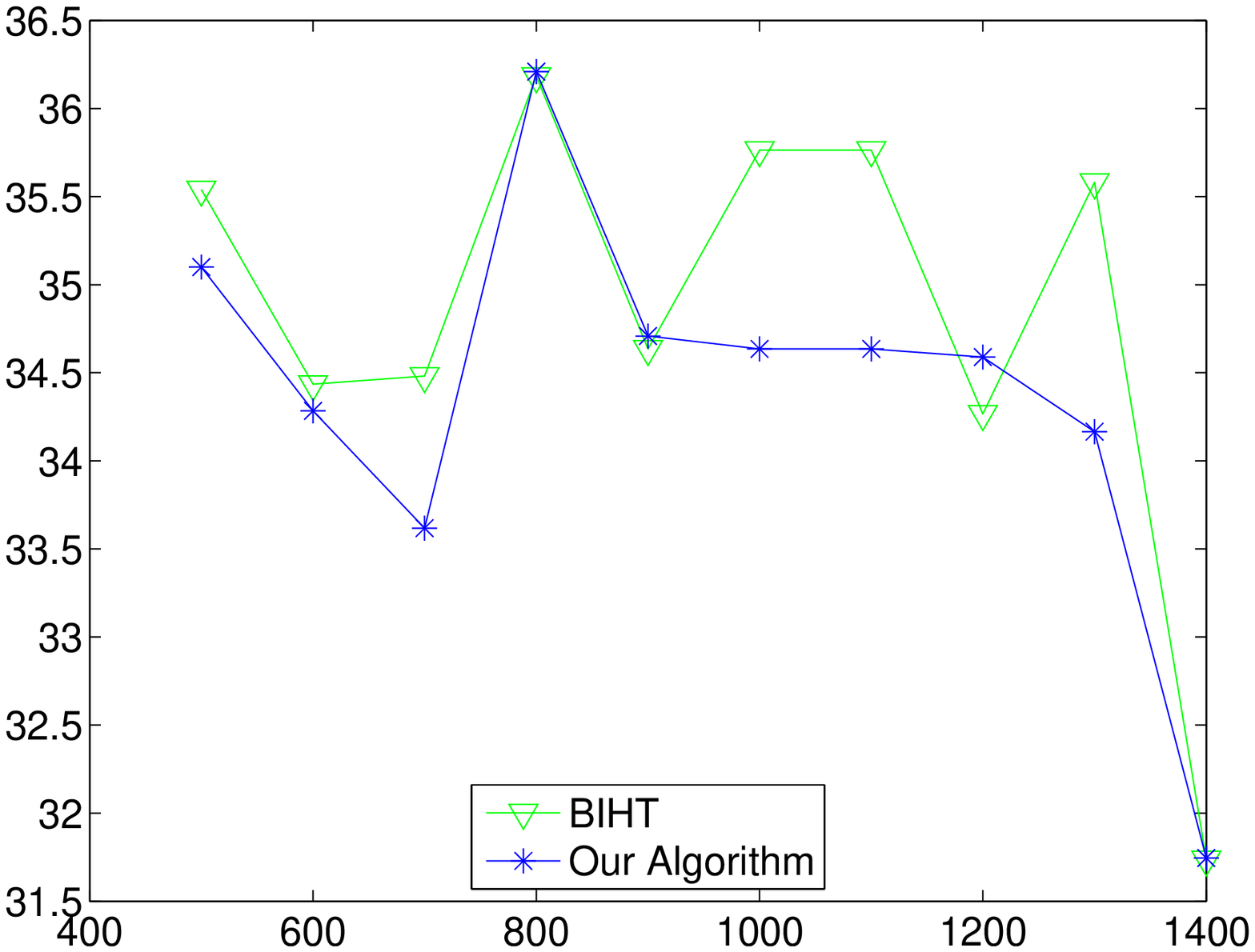}}\\
(a) Mangasarian &(b) Log-Det
\end{tabular}
\caption{Average SNR values of estimates vs. the signal size $n$ for fixed $m=1000$ and $s=10$. }
\label{fig:VaryingN-Alg}
\end{figure}


\begin{figure}[htb]
\centering

\begin{tabular}{cc}
\scalebox{0.31}{\includegraphics{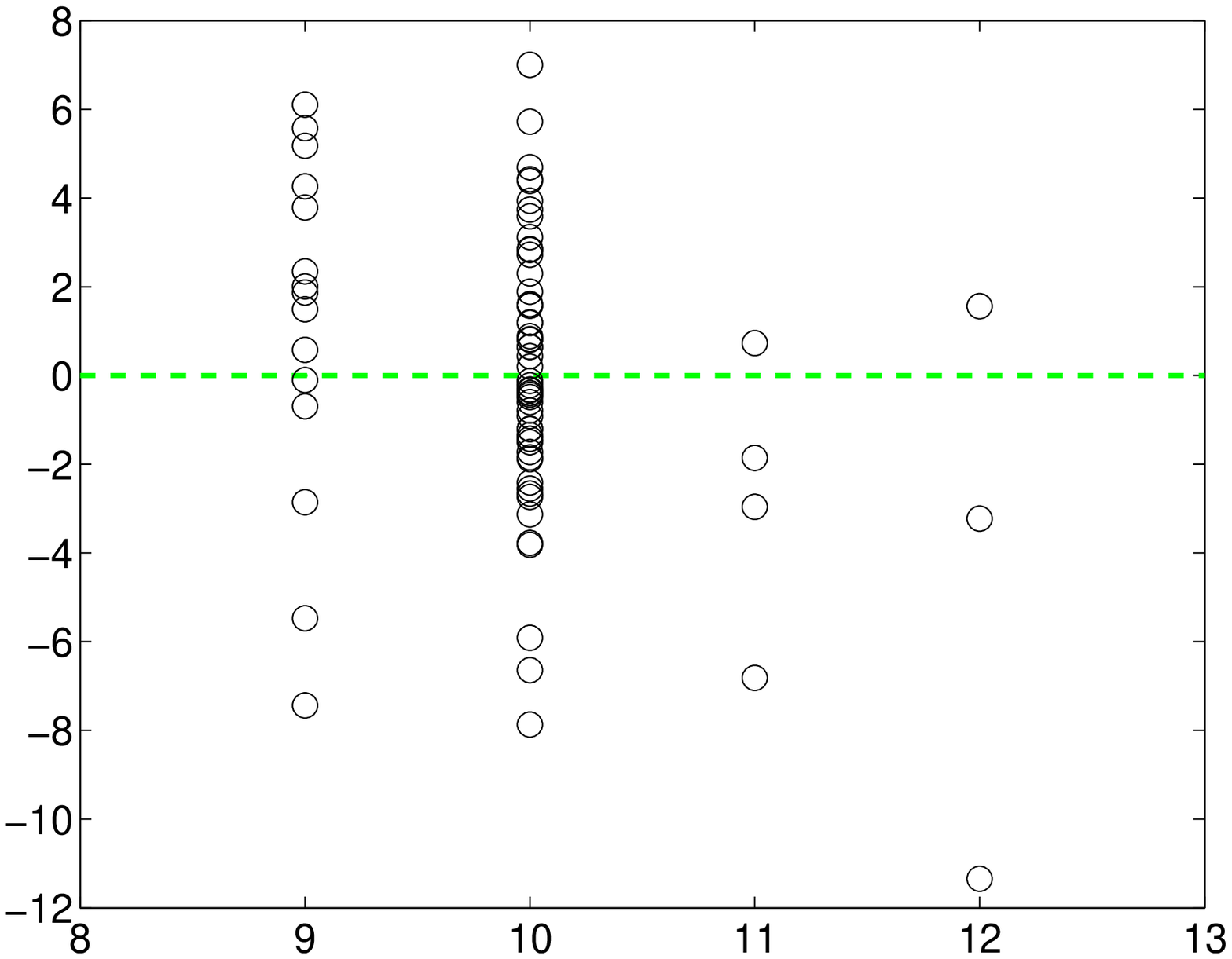}}&\scalebox{0.31}{\includegraphics{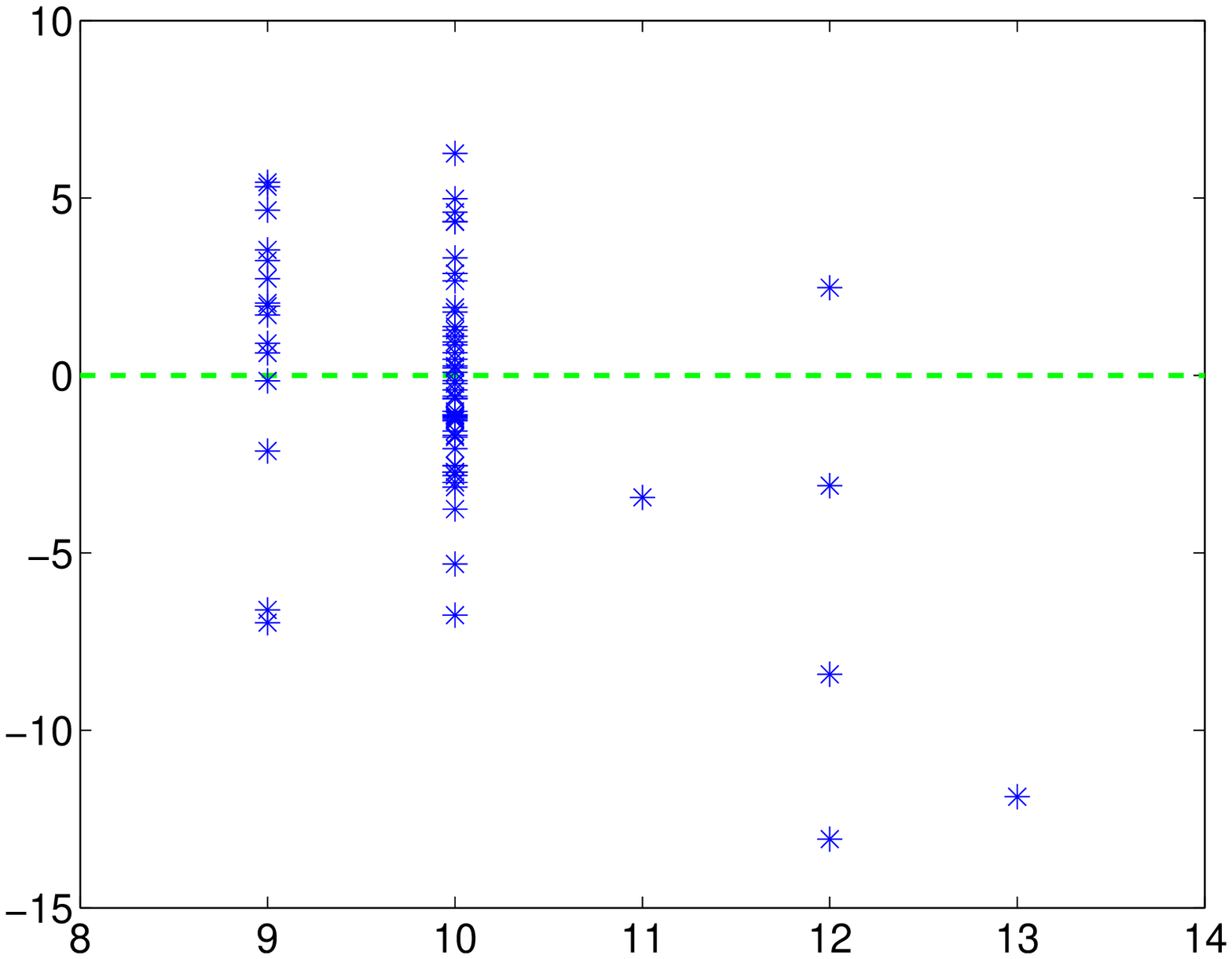}}\\
\scalebox{0.31}{\includegraphics{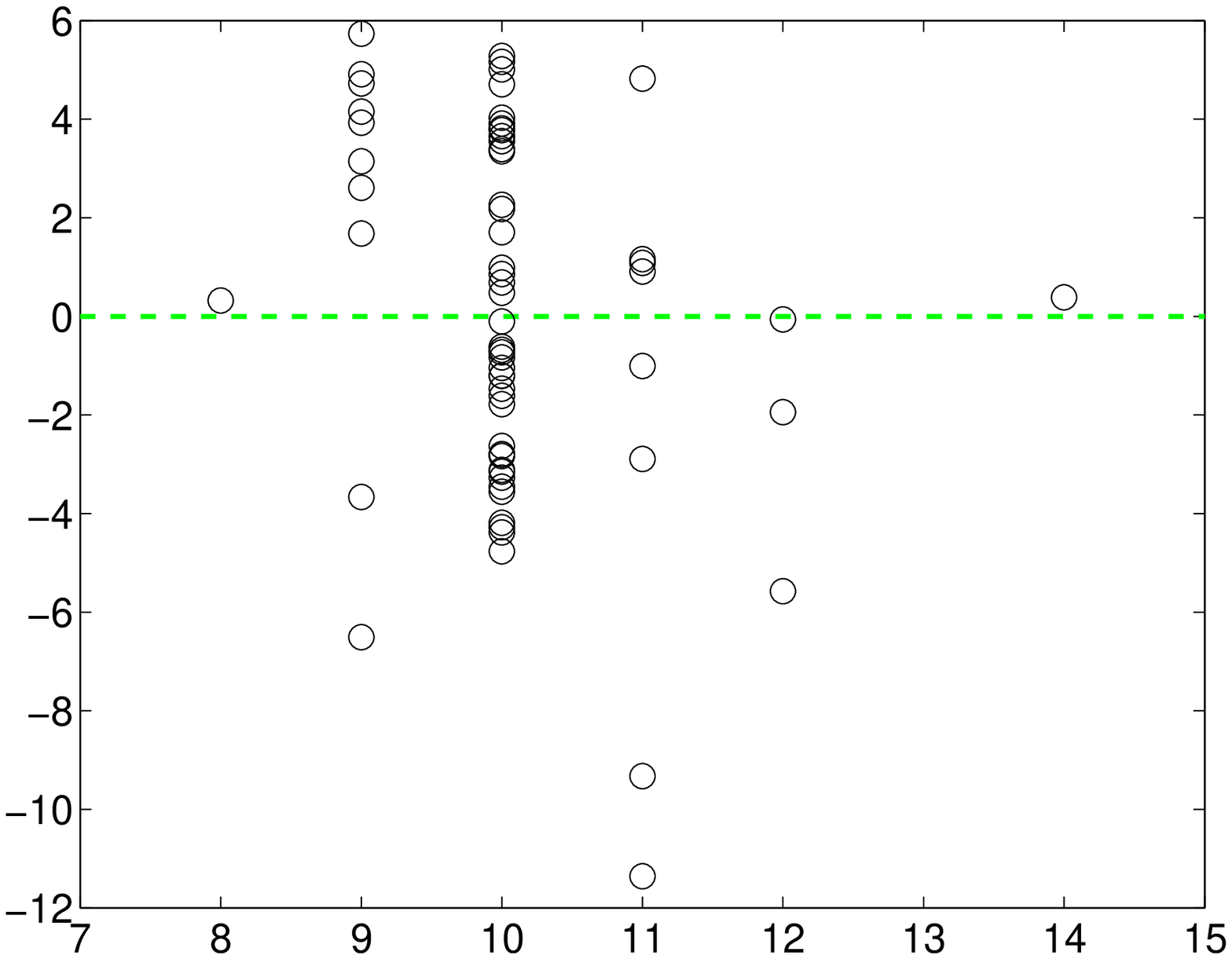}}&\scalebox{0.31}{\includegraphics{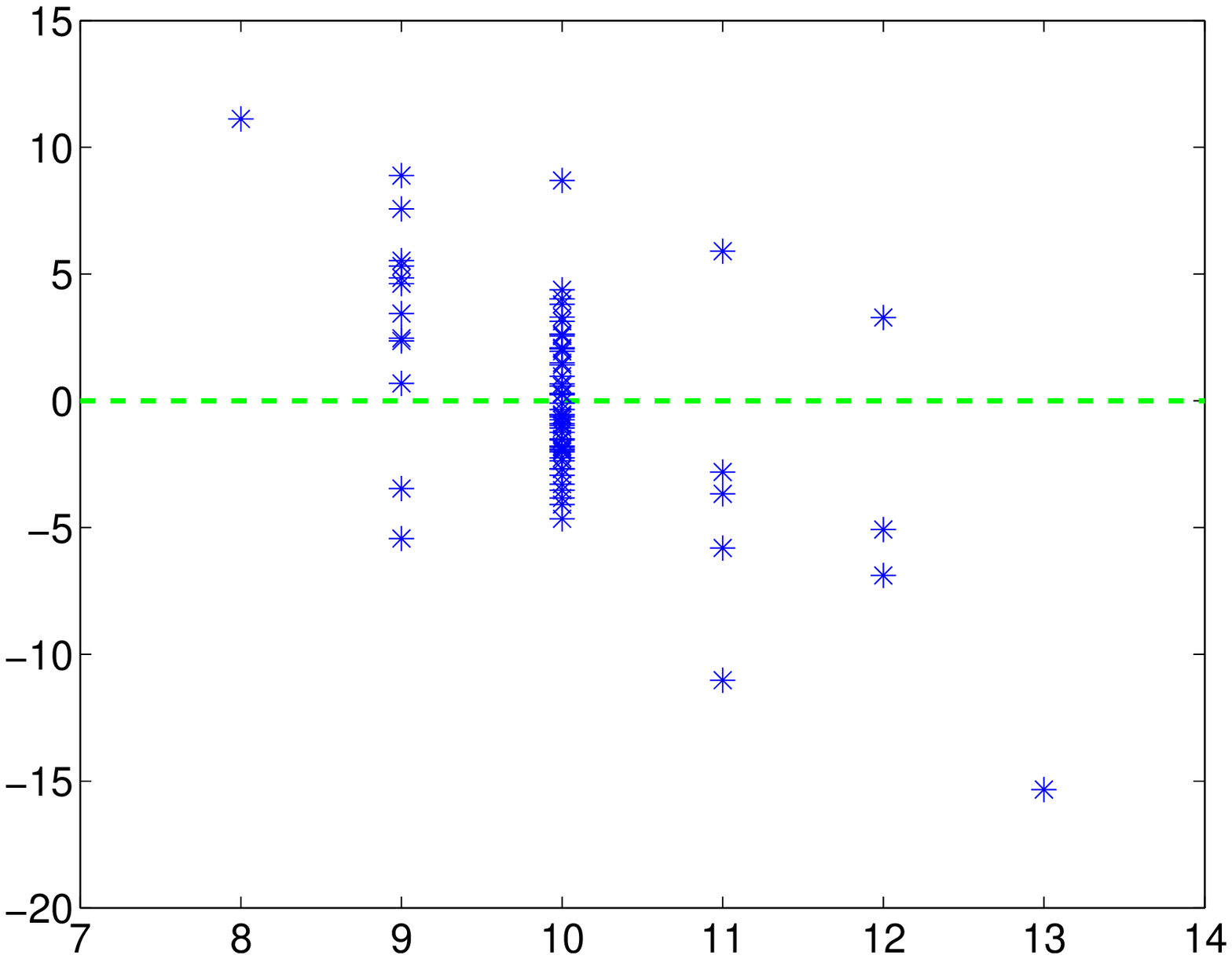}}\\
\scalebox{0.31}{\includegraphics{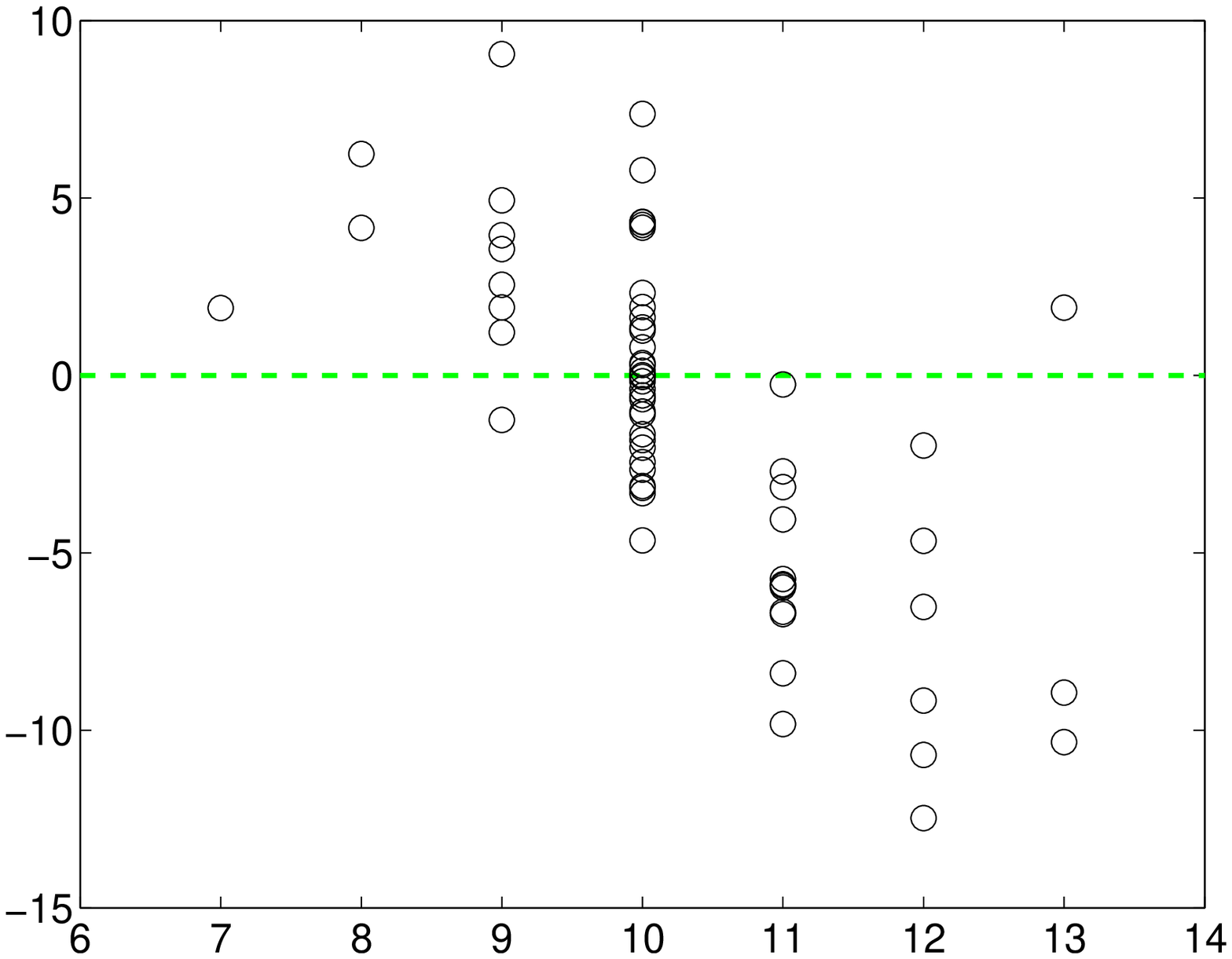}}&\scalebox{0.31}{\includegraphics{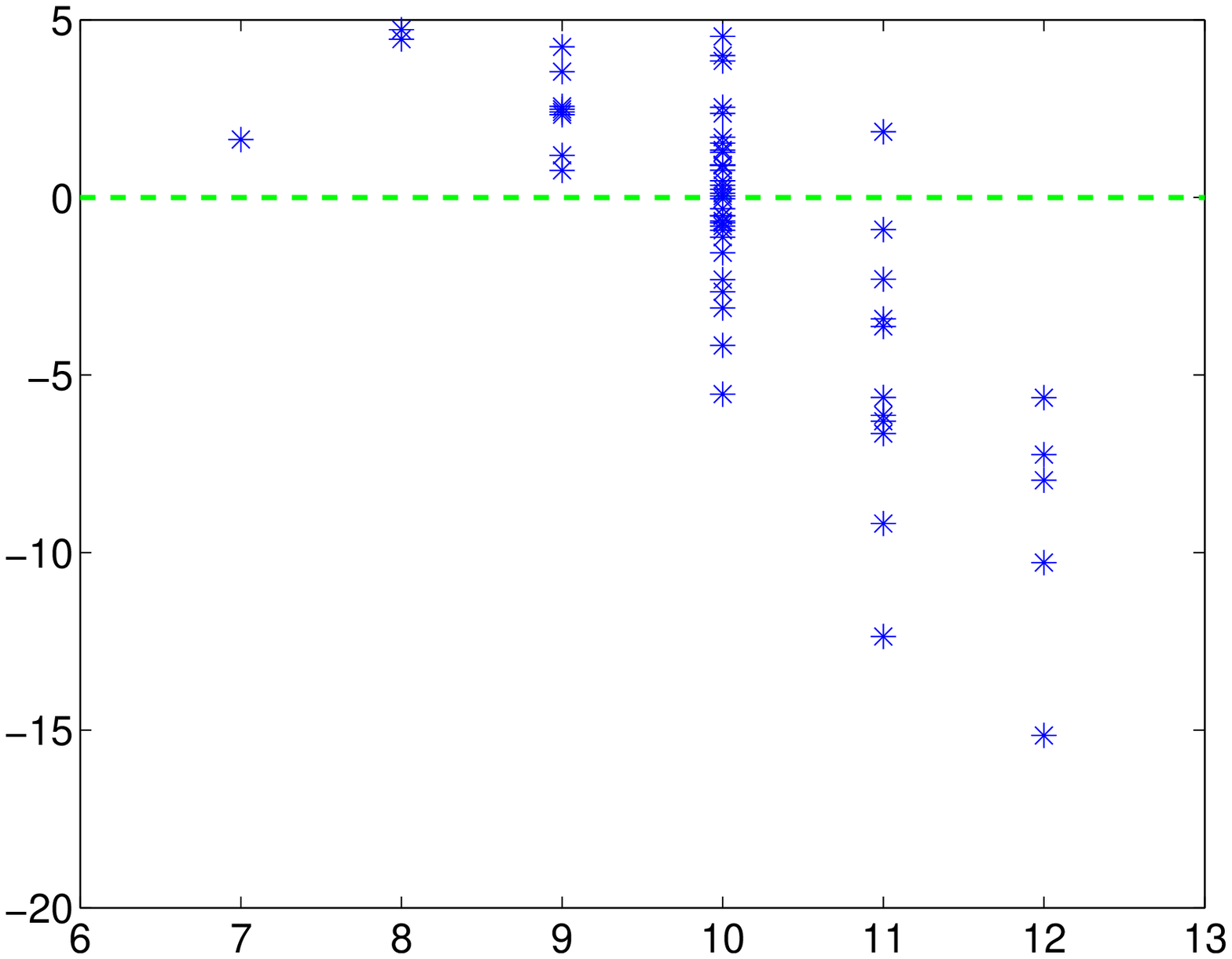}}\\
\scalebox{0.31}{\includegraphics{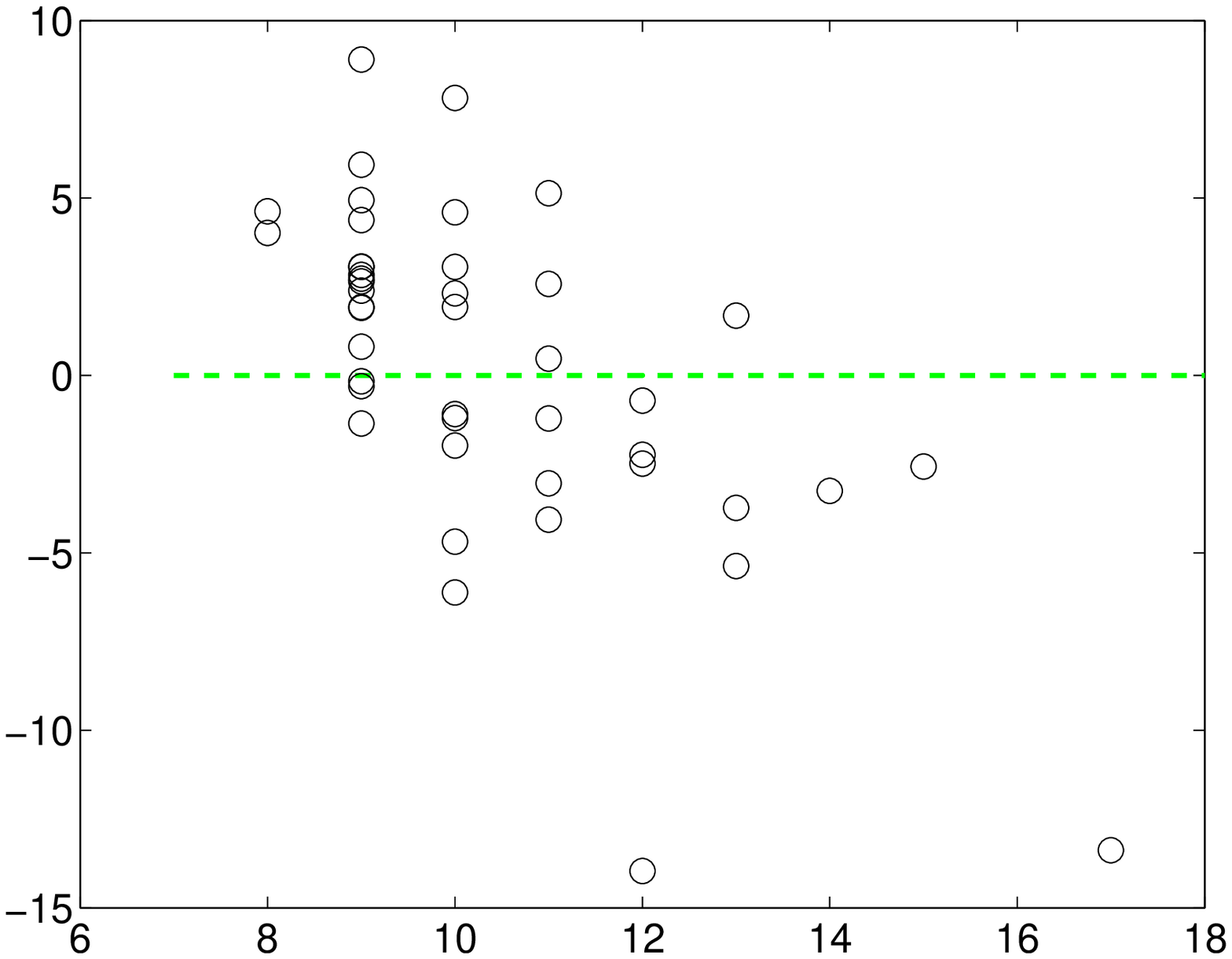}}&\scalebox{0.31}{\includegraphics{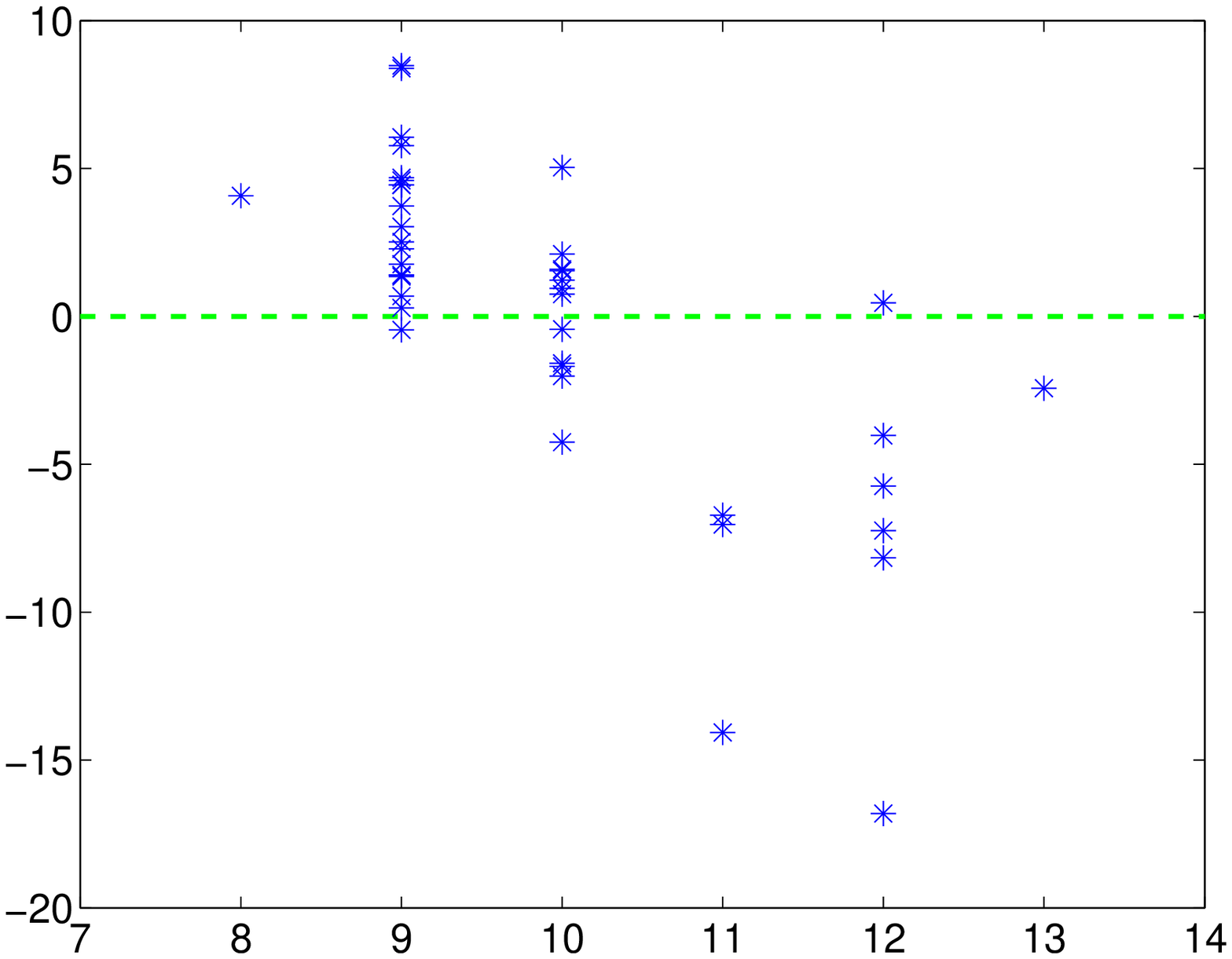}}\\
(a) Mangasarian &(b) Log-Det
\end{tabular}
\caption{The difference of SNR values of estimates by Algorithm~\ref{alg:matrix-final} and the BIHT vs. sparsity of reconstructed estimates by Algorithm~\ref{alg:matrix-final}. We fix $m=1000$ and $s=10$.  Row 1 to Row 4 are corresponding to $n$ being 500, 800, 1100, and 1400, respectively.}
\label{fig:VaryingN-Alg-Cases}
\end{figure}


\section{Summary and Conclusion}\label{sec:conclusion}
In this paper we proposed a new model and algorithm for 1-bit compressive sensing. Unlike the state-of-the-art BIHT method, our model does not need to know the sparsity of the signal of interest. We demonstrated the performance of our proposed algorithm for reconstruction from 1-bit measurements.

It would be of interest to study the convergence of Algorithm~\ref{alg:matrix-final} with the Mangasarian function in the future. It would be highly needed to adaptively update all parameters in Algorithm~\ref{alg:matrix-final} so that consistent reconstruction can be always achieved with improved accuracy of the solution.


\section*{Acknowledgements}
The authors would like to thank Ms. Na Zhang for her valuable comments and insightful suggestions which have
brought improvements to several aspects of this manuscript.

The views and conclusions contained herein are those of
the authors and should not be interpreted as necessarily representing
the official policies or endorsement, either expressed
or implied, of the Air Force Research Laboratory or the U.S.
Government.
\bibliographystyle{siam}


\begin{thebibliography}{10}

\bibitem{Bauschke-Combettes:11}
{\sc H.~L. Bauschke and P.~L. Combettes}, {\em Convex Analysis and Monotone
  Operator Theory in Hilbert Spaces}, AMS Books in Mathematics, Springer, New
  York, 2011.

\bibitem{Blumensath-Davies:ACHA:09}
{\sc T.~Blumensath and M.~E. Davies}, {\em Iterative hard thresholding for
  compressed sensing}, Applied and Computational Harmonic Analysis, 27 (2009),
  pp.~265 -- 274.

\bibitem{Boufounos-Baraniuk:08}
{\sc P.~T. Boufounos and R.~G. Baraniuk}, {\em 1-bit compressive sensing},
  Proceedings of Conference on Information Science and Systems (CISS),
  Princeton, NJ, 2008.

\bibitem{Candes-Romberg-Tao:CPAM:06}
{\sc E.~Candes, J.~Romberg, and T.~Tao}, {\em Stable signal recovery from
  incomplete and inaccurate measurements}, Communications on Pure and Applied
  Mathematics, 59 (2006), pp.~1207--1223.

\bibitem{Candes-Tao:IEEE-TIT:06}
{\sc E.~Candes and T.~Tao}, {\em Near optimal signal recovery from random
  projections: Universal encoding strategies?}, IEEE Transactions on
  Information Theory, 52 (2006), pp.~5406--5425.

\bibitem{Chambolle-Pock:JMIV11}
{\sc A.~Chambolle and T.~Pock}, {\em A first-order primal-dual algorithm for
  convex problems with applications to imaging}, Journal of Mathematical
  Imaging and Vision, 40 (2011), pp.~120--145.

\bibitem{Davidson-Szarek:01}
{\sc K.~Davidson and S.~Szarek}, {\em Local operator theory, random matrices
  and {Banach} spaces}, in Handbook of the Geometry of Babach Spaces, vol.~I,
  Amsterdam: North-Holland, 2001, pp.~317--366.

\bibitem{Fazel-Hindi-Boyd:03}
{\sc M.~Fazel, H.~Hindi, and S.~Boyd}, {\em Log-det heuristic for matrix rank
  minimization with applications to hankel and euclidean distance matrices},
  vol.~3 of Proceedings of American Control Conference, 2003, pp.~2156 -- 2162.

\bibitem{Hale-Yin-Zhang:SAIMOPT-08}
{\sc E.~Hale, W.~Yin, and Y.~Zhang}, {\em Fixed-point continuation for $\ell_1$
  minimization: Methodology and convergence}, SIAM Journal on Optimization, 19
  (2008), pp.~1107--1130.

\bibitem{Jacques-Laska-Boufounos-Baraniuk:11}
{\sc L.~Jacques, J.~N. Laska, P.~T. Boufounos, and R.~G. Baraniuk}, {\em Robust
  1-bit compressive sensing via binary stable embeddings of sparse vectors},
  IEEE Transactions on Information Theory, accepted,  (2012).

\bibitem{Jokar-Pfetsch:SISC:08}
{\sc S.~Jokar and M.~E. Pfetsch}, {\em Exact and approximate sparse solutions
  of underdetermined linear equations}, SIAM Journal on Scientific Computing,
  31 (2008), pp.~23--44.

\bibitem{Laska-Wen-Yin-Baraniuk:IEEESP:11}
{\sc J.~Laska, Z.~Wen, W.~Yin, and R.~Baraniuk}, {\em Trust, but verify: Fast
  and accurate signal recovery from 1-bitcompressive measurements}, IEEE
  Transactions on Signal Processing, 59 (2011), pp.~5289 -- 5301.

\bibitem{Li-Micchelli-Shen-Xu:IP-12}
{\sc Q.~Li, C.~A. Micchelli, L.~Shen, and Y.~Xu}, {\em A proximity algorithm
  accelerated by {Gauss-Seidel} iterations for $\mathrm{L1/TV}$ denoising
  models}, Inverse Problems, 28 (2012), p.~095003.

\bibitem{Mangasarain:Optimization:99}
{\sc O.~L. Mangasarian}, {\em Minimum-support solutions of polyhedral concave
  programs}, Optimization, 45 (1999), pp.~149 -- 162.

\bibitem{moreau:RASPS:62}
{\sc J.-J. Moreau}, {\em Fonctions convexes duales et points proximaux dans un
  espace hilbertien}, C.R. Acad. Sci. Paris S\'{e}r. A Math., 255 (1962),
  pp.~1897--2899.

\bibitem{moreau:BSMF:65}
\leavevmode\vrule height 2pt depth -1.6pt width 23pt, {\em Proximit\'{e} et
  dualit\'{e} dans un espace hilbertien}, Bull. Soc. Math. France, 93 (1965),
  pp.~273--299.

\bibitem{Plan-Vershynin:11}
{\sc Y.~Plan and R.~Vershynin}, {\em One-bit compressed sensing by linear
  programming}, Communications on Pure and Applied Mathematics, accepted,
  (2012).

\bibitem{Shen-Suter:Eurasip:09}
{\sc L.~Shen and B.~W. Suter}, {\em Bounds for eigenvalues of arrowhead
  matrices and their applications to hub matrices and wireless communications},
  EURASIP Journal on Advances in Signal Processing, Article ID 379402, 12
  pages,  (2009).

\bibitem{Yan-Yang-Osher:IEEESP:12}
{\sc M.~Yan, Y.~Yang, and S.~Osher}, {\em Robust 1-bit compressive sensing
  using adaptive outlier pursuit}, IEEE Transactions on Signal Processing, 60
  (2012), pp.~3868 -- 3875.

\bibitem{Zhang-Burger-Osher:JSC:2011}
{\sc X.~Zhang, M.~Burger, and S.~Osher}, {\em A unified primal-dual algorithm
  framework based on {Bregman} iteration}, Journal of Scientific Computing, 46
  (2011), pp.~20--46.

\bibitem{Zhao-Li:SIOP:12}
{\sc Y.-B. Zhao and D.~Li}, {\em Reweighted $\ell_1$-minimization for sparse
  solutions to underdetermined linear system}, SIAM Journal on Optimization, 22
  (2012), pp.~1065--1088.

\end{thebibliography}

\end{document}